\documentclass[11pt]{article}
\usepackage[margin=1in]{geometry}
\usepackage[ruled]{algorithm}
\usepackage{svg}
\usepackage{algpseudocode, xcolor, amsmath, mathtools, amssymb, amsthm}
\usepackage{thm-restate}
\usepackage{soul}
\usepackage{pgfplots}
\usepackage{hyperref}
\usepackage[capitalize]{cleveref}
\newcommand{\mbf}{\mathrm{MBF}}
\newcommand{\Bf}{\mathrm{BF}}
\newcommand{\Nf}{\mathrm{NF}}
\newcommand{\opttt}{\mathtt{Opt}}
\newcommand{\algtt}{\mathtt{Alg}}
\newcommand{\opt}{\mathrm{Opt}}
\newcommand{\Opt}{\mathrm{Opt}}
\newcommand{\optv}{\mathrm{Opt}_v}
\newcommand{\Th}{^{\mathrm{th}}}
\newcommand{\floor}[1]{\left\lfloor#1\right\rfloor}
\newcommand{\ceil}[1]{\left\lceil#1\right\rceil}
\newcommand{\abs}[1]{\left|#1\right|}
\newcommand{\prob}[2][]{\mathbb{P}_{#1}\left[#2\right]}
\newcommand{\expec}[2][]{\mathbb{E}_{#1}\left[#2\right]}
\newcommand{\var}[2][]{\mathrm{Var}_{#1}\left[#2\right]}
\newcommand{\cov}[3][]{\mathrm{Cov}_{#1}\left[#2,#3\right]}

\newcommand{\Isamp}{I_{\mathrm{samp}}}

\newcommand{\sample}{\delta^2n}

\newcommand{\param}{\delta}
\newcommand{\weight}[1]{\mathcal{W}(#1)}
\newcommand{\vol}[1]{\mathrm{vol}(#1)}
\newcommand{\stage}[1]{T_{#1}}
\newcommand{\stagelarge}[1]{L_{#1}}
\newcommand{\stagesmall}[1]{S_{#1}}

\newcommand{\auxalgo}{\mathcal A_\alpha}
\newcommand{\largeitems}{I_{\ell}}
\newcommand{\smallitems}{I_{s}}
\newcommand{\guess}[1]{n_{#1}}
\newcommand{\superstage}[1]{\Gamma_{#1}}

\newcommand{\unmatch}{\mathrm{unmatch}}

\newcommand{\mini}{\mathrm{small}}
\newcommand{\bluep}{\mathtt{BlueP}}
\newcommand{\rank}{\mathrm{rank}}
\newcommand{\main}{\mathrm{main}}
\newcommand{\of}{\mathrm{of}}

\newcommand{\bestfit}{Best-Fit}
\newcommand{\nextfit}{Next-Fit}
\newcommand{\iid}{i.i.d.}
\newcommand{\whp}{w.h.p.}

\renewcommand{\epsilon}{\varepsilon}

\newcommand{\ALG}{\mathtt{Alg}}
\newcommand{\IMPALG}{\mathtt{ImpAlg}}
\newcommand{\ALGV}{\mathtt{Alg}_v}
\newcommand{\OPT}{\mathrm{Opt}}
\newcommand{\eps}{\varepsilon}

\numberwithin{equation}{section}

\makeatletter
\def\blfootnote{\gdef\@thefnmark{}\@footnotetext}
\makeatother

\newtheorem{theorem}{Theorem}[section]
\newtheorem{lemma}{Lemma}[section]
\newtheorem{claim}{Claim}[section]
\newtheorem{remark}{Remark}[section]
\newtheorem{corollary}{Corollary}[section]
\newtheorem{proposition}{Proposition}[section]

\DeclareMathOperator*{\argmax}{\arg\max}
\DeclareMathOperator*{\argmin}{\arg\min}

\hypersetup{
    colorlinks=true,
    urlcolor=magenta,
    citecolor=magenta
}

\bibliographystyle{alpha}

\title{Near-optimal Algorithms for Stochastic Online Bin Packing\footnote{A preliminary version of this work appeared in the 49th EATCS International Colloquium on Automata, Languages and Programming (ICALP), 2022. }} 

\author{Nikhil Ayyadevara\footnote{University of Michigan, Ann Arbor, USA, \texttt{vsnikhil@umich.edu}}
\and Rajni Dabas\footnote{Northwestern University, Evanston, USA, \texttt{rajni.dabas1@northwestern.edu}\\ \null$\quad\:\:\;$University of Delhi, New Delhi, India, \texttt{rajni@cs.du.ac.in}}
\and Arindam Khan\footnote{Indian Institute of Science, Bengaluru, India, \texttt{arindamkhan@iisc.ac.in}}
\and K. V. N. Sreenivas\footnote{Indian Institute of Science, Bengaluru, India, \texttt{venkatanaga@iisc.ac.in}}
}

\date{\empty}

\newcommand{\kvnr}[1]{}
\newcommand{\nar}[1]{}
\newcommand{\arir}[1]{}
\newcommand{\rdr}[1]{}

\newcommand{\kvn}[1]{{#1}}

\setlength{\parindent}{0pt}
\setlength{\parskip}{0.5em}

\begin{document}
\maketitle
\begin{abstract}
We study the online bin packing problem under two stochastic settings. In the bin packing problem, we are given $n$ items with sizes in $(0,1]$ and the goal is to pack them into the minimum number of unit-sized bins.

First, we study bin packing under the  i.i.d.~model, where item sizes are sampled independently and identically from a distribution in $(0,1]$.
Both the distribution and the total number of items are unknown. 
The items arrive one by one and their sizes are revealed upon their arrival and they must be packed immediately and irrevocably
in bins of size $1$. 
We provide a simple meta-algorithm that takes an offline $\alpha$-asymptotic approximation algorithm and
provides a polynomial-time $(\alpha + \eps)$-competitive algorithm for online bin packing under the  i.i.d.~model, where $\eps>0$ is a small constant.
Using the AFPTAS for offline bin packing, we thus provide a linear time $(1+\eps)$-competitive algorithm for online bin packing under i.i.d.~model, thus settling the problem.

We then study the random-order model, where an adversary chooses the instance, 
but the order of arrival of items in the instance is drawn uniformly
at random from the set of all permutations of the items. Kenyon's seminal result [SODA '96] showed that the Best-Fit algorithm has a competitive ratio of at most $3/2$ in the random-order model, and conjectured the ratio to be $\approx 1.15$. 
However, it has been a long-standing open problem to break the barrier of
$3/2$ even for special cases. Recently, Albers et al.~[Algorithmica '21] showed an improvement 
by proving that in the special case when all the item sizes are greater than $1/3$,
\bestfit{} has a competitive ratio of at most $5/4$ in the random-order model.
In this work, we settle this special case by
showing that Best-Fit has a competitive ratio of exactly $1$, i.e., \bestfit{} performs almost optimally in this special case in the random-order model.
We also make further progress by breaking the barrier of $3/2$ for
the {\em 3-Partition} problem, a notoriously hard special case of bin packing, where all item sizes lie in $(1/4,1/2]$.
\end{abstract}
\maketitle
\setcounter{page}{1}
\section{Introduction}
Bin Packing (BP) is a fundamental NP-hard combinatorial optimization problem. In BP, we are given a set 
$I$ of $n$ items where the $i\Th$ item has weight (also called size) $x_i\in(0,1]$
and the goal is to partition $I$ into the minimum number of sets (bins) such that the total weight of each set is at most 1. 
The problem has numerous applications in logistics, scheduling, cutting stock, etc. \cite{coffman2013bin}. Theoretically, bin packing has been the cornerstone for approximation and online algorithms and the study of the problem has led to the development of several interesting techniques \cite{KarmarkarK82,VegaL81,lee-lee}.

Generally, the performance guarantee of an offline (resp. online) bin packing algorithm $\mathcal{A}$ is measured by asymptotic approximation ratio (AAR) (resp. competitive ratio (CR)).
Let $\OPT(I)$ and $\mathcal{A}(I)$ be the objective values returned by the optimal (offline) algorithm and algorithm $\mathcal{A}$, respectively, on an input $I$.
Then AAR (resp. CR) is defined as
\[
    R_\mathcal{A}^{\infty}\coloneqq \limsup\limits_{m \rightarrow \infty}\left(\sup_{I:\OPT(I)=m} \frac{\mathcal{A}(I)}{\OPT(I)}\right).
\]
Note that $R_\mathcal{A}^{\infty}$ focuses on instances where $\OPT(I)$ is large and avoids pathological instances with large approximation ratios where $\OPT(I)$ is small.

Best-Fit (BF), First-Fit (FF), and Next-Fit (NF) are the three most commonly used algorithms for BP. 
Given $x_i$ as the current item to be packed, they work as follows:
\begin{itemize}
\item BF: Pack $x_i$ into the {\em fullest possible} bin; open a new bin if necessary.
\item FF: Pack $x_i$ into the {\em first possible} bin; open a new bin if necessary.
\item NF: Pack $x_i$ into the {\em most recently opened} bin; open a new bin if necessary.
\end{itemize}

Johnson et al.~\cite{johnson1974worst} studied several heuristics for bin packing such as 
Best-Fit (BF), First-Fit (FF), Best-Fit-Decreasing (BFD), First-Fit-Decreasing (FFD) and showed their (asymptotic) approximation guarantees to be $17/10, 17/10, 11/9, 11/9$, respectively.
Bekesi et al.~\cite{bekesi20005} gave an $O(n)$ time 5/4-asymptotic approximation algorithm. Another $O(n \log n)$ time algorithm is Modified-First-Fit-Decreasing (MFFD) \cite{mffd} which attains an AAR of $71/60\approx 1.1834$.
Vega and Lueker \cite{VegaL81} gave an asymptotic fully  polynomial-time approximation scheme (AFPTAS) for BP: For any $\eps\in(0,1/2)$, it returns a solution with at most $(1+\eps)\OPT(I)+O(1)$
\footnote{In bin packing and related problems, the accuracy parameter $\eps$ is assumed to be a constant. Here, the term $O(1)$ hides some constants depending on $\eps$.}
bins in time $C_\eps+C n \log{1/\eps}$, where $C$ is an absolute constant and $C_\eps$ depends only on  $\eps$.
Shortly after that, Karmarkar and Karp \cite{KarmarkarK82} gave an algorithm that returns a solution using $\OPT(I)+O(\log^2 \OPT(I))$ bins.
The present best approximation is due to Hoberg and Rothvoss \cite{DBLP:conf/soda/HobergR17} which returns a solution using $\OPT(I)+O(\log \OPT(I))$ bins.

The 3-Partition problem
is a notoriously hard special case of bin packing where all item sizes are larger than  $1/4$.
Eisenbrand et al.  \cite{EisenbrandPR11}  mentioned that ``much of the hardness of bin packing seems to appear already
in the special case of 3-Partition when all item sizes are in $(1/4,1/2]$''. 
This problem has deep connections with
Beck's conjecture in discrepancy theory  \cite{spencer1994ten, NewmanNN12}. 
In fact, Rothvoss \cite{DBLP:conf/soda/HobergR17}
conjectured that these  3-Partition instances are indeed the hardest
instances for bin packing and the additive integrality gap of the bin packing configuration LP for these 3-Partition instances is already  $\Theta(\log n)$.

In online BP, items appear one by one and are required to be packed immediately and irrevocably.
Lee and Lee \cite{lee-lee} presented the Harmonic algorithm with competitive ratio $T_{\infty} \approx 1.691$, which is optimal for $O(1)$ space algorithms.
For general online BP, the present best upper and lower bounds for the CR are 1.57829 \cite{BaloghBDEL18} and 1.54278 \cite{BaloghBDEL19}, respectively.

In this paper, we focus on online BP under a stochastic setting called the {\em i.i.d.~model} \cite{coffman1993probabilistic} where the input items are sampled from a sequence of independent and identically distributed (i.i.d.) random variables.
Here, the performance of an algorithm is measured by the 
expected competitive ratio (ECR) 
\[
ER_\mathcal{A} := \lim_{n\to\infty}\frac{\mathbb{E}[\mathcal{A}(I_n(F))]}{\mathbb{E}[\OPT(I_n(F))]},
\] 
where $I_n(F):=(X_1, X_2, \dots, X_n)$ is a list of $n$ random variables drawn i.i.d.~according to some unknown distribution $F$ with support in $(0,1]$.
Mostly, bin packing has been studied under continuous uniform (denoted by $U[a,b], 0\le a<b \le 1$, where item sizes are chosen uniformly from $[a,b]$) or discrete uniform distributions (denoted by $U\{j,k\}, 1\le j \le k$, where item sizes are chosen uniformly from $\{1/k, 2/k, \dots, j/k\}$).
For $U[0,1]$, Coffman et al.~\cite{coffman1980stochastic} showed that NF has an ECR of 4/3 and
Lee and Lee \cite{LL87} showed that the Harmonic algorithm has an ECR of  $\pi^2/3-2\approx 1.2899$.
Interestingly, Bentley et al.~\cite{bentley1984some} showed that the ECR of FF as well as BF converges to $1$ for $U[0,1]$.
It was later shown that the expected wasted space (i.e., the number of needed bins minus  the total size of items) is $\Theta(n^{2/3})$  for First-Fit \cite{shor1986average, coffman1997bin}  and $\Theta(\sqrt{n}\log^{3/4}n)$ for Best-Fit \cite{shor1986average, leighton1989tight}.
Rhee and Talagrand \cite{rhee1993lineB} exhibited an algorithm that, w.h.p., achieves a packing in  $\OPT+O(\sqrt{n}\log^{3/4}n)$ bins for any  distribution $F$ on $(0,1]$.
However, note that their competitive ratio can be quite bad when $\OPT\ll n$.
A distribution $F$ is said to be {\em perfectly packable} if the expected wasted space in the optimal solution
is $o(n)$ (i.e., nearly all bins in an optimal packing are almost fully packed).
Csirik et al.~\cite{csirik2006sum} studied the Sum-of-Squares (SS) algorithm and showed that for any perfectly packable distribution, the expected wasted space is $O(\sqrt{n})$.
However, for distributions that are not perfectly packable, the SS algorithm has an ECR of at most $3$ and can have an ECR of $3/2$ in the worst-case \cite{csirik2006sum}.
For any discrete distribution, they gave an algorithm with an ECR of $1$ that runs in pseudo-polynomial time in expectation.
Gupta et al.~\cite{Gupta020} also obtained similar $o(n)$ expected wasted space guarantee by using an algorithm inspired by the interior-point (primal-dual) solution of the bin packing LP.
However, it remains an open problem to obtain a polynomial-time $(1+\eps)$-competitive algorithm for online bin packing under the i.i.d.~model for arbitrary general distributions.
In fact, the present best polynomial-time algorithm for bin packing under the i.i.d.~model is BF which has an ECR of at most 3/2.
However, Albers et al.~\cite{albers_et_al_MFCS} showed that BF has an ECR $\ge 1.1$ even for a simple distribution: when each item has size $1/4$ with probability $3/5$ and size $1/3$ with probability $2/5$.

We also study the {\em random-order model}, where the adversary specifies the items, but the arrival order is permuted uniformly at random.
The performance measure in this model is called asymptotic random order ratio (ARR): 
\[
RR_{\mathcal{A}}^{\infty} := \limsup\limits_{m \rightarrow \infty}\left(\sup_{I:\OPT(I)=m} \frac{\mathbb{E}[\mathcal{A}(I_{\sigma})]}{\OPT(I)}\right).
\]
Here, $\sigma$ is drawn uniformly at random from $\mathcal{S}_n$, the set of permutations of $n$ elements, and $I_{\sigma}:=(x_{\sigma(1)}, \dots, x_{\sigma(n)})$ is the permuted list.
The random-order model generalizes  the i.i.d.~model \cite{albers_et_al_MFCS}, thus the lower bounds in the random-order model can be obtained from the i.i.d.~model.
Kenyon in her seminal paper \cite{DBLP:conf/soda/Kenyon96}
studied Best-Fit under random-order 
and showed that $1.08 \le RR_{BF}^{\infty} \le 3/2$. Kenyon
conjectured ``$RR_{BF}^{\infty}$ lies somewhere around $1.15$''. The conjecture, if true, raises the
possibility of a better alternative practical offline algorithm: first shuffle the items randomly, then apply Best-Fit. This then beats the AAR of $71/60\approx 1.18$ of the present best practical algorithm MFFD.
The conjecture has received a lot of attention in the past two decades and,
only recently, the analysis on the upper bound has been improved by \cite{soda-gadgets}, who showed that $RR_{BF}^{\infty} < 3/2-10^{-10}$.
They also showed an improved lower bound of $1.144$ using a computer program to solve a large markov chain.
Coffman et al.~\cite{DBLP:journals/dam/CoffmanCRZ08} showed that $RR_{NF}^{\infty} =2$.
Fischer and R{\"{o}}glin \cite{DBLP:conf/latin/FischerR18} achieved analogous results for Worst-Fit \cite{DBLP:journals/jcss/Johnson74} and Smart-Next-Fit \cite{DBLP:journals/ipl/Ramanan89}. Recently, Fischer \cite{fischer_thesis} presented  an exponential-time algorithm, claiming an ARR of $(1+\eps)$.

Several other problems have been studied under the i.i.d.~model and the random-order model
\cite{dean2008approximating,gupta2021stochastic,feldman2009online,gupta2012online,ferguson1989solved,AlbersKL21,DBLP:conf/latin/FischerR16,MahdianY11,Gupta020}.

Monotonicity is a natural property of BP algorithms, which holds if the algorithm never uses fewer
bins to pack $\hat{I}$ when compared $I$, where $\hat{I}$ is obtained from $I$ by increasing the item sizes.
Murgolo \cite{DBLP:journals/dam/Murgolo88} showed that while NF is monotone, BF and FF are not.
\subsection{Our Contributions}
\noindent {\bf Bin packing under the i.i.d.~model:} We achieve a near-optimal performance guarantee for  the bin packing problem under the i.i.d.~model, thus settling the problem. For any arbitrary unknown distribution $F$ on $(0,1]$,  we give a  meta-algorithm (see \cref{sec:iid-model}) that takes an $\alpha$-{asymptotic} approximation algorithm as input and provides a polynomial-time $(\alpha+\eps)$-competitive algorithm. Note that both the distribution $F$ as well as the number of items $n$ are unknown in this case.
We also remark that the distribution $F$ can depend on $n$.

\begin{theorem}
\label{thm:bpiid}
Let $\eps\in(0,1)$ be a constant parameter.
For online bin packing under the i.i.d.~model, where $n$ items are sampled from an unknown distribution $F$,
given an offline algorithm $\mathcal{A}_\alpha$ with an AAR of $\alpha$ and runtime $\beta(n)$,
there exists a meta-algorithm which
returns a solution with an ECR of $(\alpha+\eps)$ and runtime
$O(\beta(n))$.
\footnote{As mentioned in an earlier footnote, here the $O(\cdot)$ notation hides some constants depending on $\eps$.}
\end{theorem}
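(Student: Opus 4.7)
Since both $F$ and $n$ are unknown, I would combine a doubling-based stage structure with a learn-then-pack template. Partition the online sequence into super-stages $\superstage{1}, \superstage{2}, \ldots$ of geometrically growing guessed lengths $\guess{i}$. Inside super-stage $\superstage{i}$, the first $\Theta(\delta^2 \guess{i})$ items form a sampling prefix (packed naively, say by \nextfit{}); after this prefix, freeze the empirical distribution, construct a \emph{proxy instance} of size $\Theta(\guess{i})$ by scaling up the empirical sample by $1/\delta^2$, and invoke $\auxalgo$ on the proxy to obtain a packing template using at most $\alpha\cdot\OPT(\model)+O(1)$ bins. The template specifies, for each of finitely many rounded large-item types, a collection of slots. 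Remaining items of $\superstage{i}$ are routed online to matching slots; items with no slot left are declared \emph{unmatched} and packed by a secondary $O(1)$-competitive rule such as \nextfit{}.

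The items are split into large ($> \delta$) and small ($\le \delta$). Large items are rounded to $O(1/\delta^2)$ size classes via linear grouping, so the template depends only on the frequency vector over these classes. Small items can be packed almost losslessly into the slack of template bins and, if needed, into dedicated \nextfit{} bins; since each has size at most $\delta$, a standard fill-up argument charges their overhead to at most an $O(\delta)$ multiplicative term in bin usage.

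The heart of the argument is a concentration analysis. With high probability in $\guess{i}$, the empirical frequency of each large type in the sampling prefix is within $\delta$ of its true probability; a type-by-type coupling between the proxy and the true arrivals of $\superstage{i}$ yields both $\expec{\OPT(\model)} \le (1+O(\delta))\,\expec{\OPT(\superstage{i})}+O(1)$ and an expected count of $O(\delta \cdot |\superstage{i}|)$ unmatched large items, which costs only $O(\delta)\,\OPT(\superstage{i})+O(1)$ additional bins. Combining the sampling overhead ($O(\delta^2 \guess{i})$ bins), the template cost, the unmatched cost, and the small-item overhead gives a super-stage cost of $(\alpha+O(\delta))\,\OPT(\superstage{i})+O(1)$ bins.

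Because the super-stages grow geometrically, the last one dominates, and in the i.i.d.\ model $\expec{\OPT(\superstage{i})}$ scales roughly linearly with $|\superstage{i}|$, so summing over super-stages and using the (near) additivity of $\expec{\OPT}$ in the i.i.d.\ setting gives total expected cost $(\alpha+O(\delta))\,\expec{\OPT(I_n(F))}+O(\log n)$. Taking $\delta=\Theta(\epsilon)$ then yields ECR $\alpha+\epsilon$, and the runtime is dominated by calls to $\auxalgo$ on instances of total size $O(n)$, hence $O(\beta(n))$. The most delicate step, I expect, is transferring concentration of empirical type-frequencies into concentration of $\OPT(\model)$ around $\OPT(\superstage{i})$ while simultaneously bounding the unmatched count across all $O(1/\delta^2)$ large types by a single union bound; a related subtlety is the last partial super-stage when the doubling guess overshoots $n$, where I would rely on the AAR of $\auxalgo$ to absorb the resulting additive loss.
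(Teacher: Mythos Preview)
Your high-level architecture---doubling super-stages, a sampling prefix packed by $\Nf$, a proxy template from $\auxalgo$, and a large/small split---matches the paper's. Two technical choices differ. First, the paper does not round large items to types; its proxy for a stage $\stage j$ is the actual prefix $\prefixstage j$, and an arriving large item simply replaces the smallest proxy item that is at least as large. Because $\abs{\stage j}=\abs{\prefixstage j}$ and both are genuine i.i.d.\ samples from $F$, the Rhee--Talagrand stochastic upright-matching bound (\cref{iid-matching}) applies directly and yields only $O\big(\sqrt{\abs{\prefixlarge j}}\,(\log\abs{\prefixlarge j})^{3/4}\big)$ unmatched large items. Second, within each super-stage the paper again uses doubling stages $\stage 0,\dots,\stage{m-1}$ so that the proxy always has the same size as the part being packed; your single scaled-up sample per super-stage is not an i.i.d.\ draw of that size, which is why you are forced into the type-frequency route.

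There is a genuine gap in your concentration step. From ``each large type's empirical frequency is within $\delta$ of truth'' you infer $O(\delta\,\abs{\superstage i})$ unmatched large items and then $O(\delta)\,\opt(\superstage i)$ extra bins. Neither inference holds as stated: with $\Theta(1/\delta^{2})$ types, a per-type deviation of $\delta\,\abs{\superstage i}$ sums to $\Theta(\abs{\superstage i}/\delta)$ unmatched, not $O(\delta\,\abs{\superstage i})$; and even granting the latter, one gets $O(\delta)\,\opt(\superstage i)$ only when $\opt(\superstage i)=\Theta(\abs{\superstage i})$, which fails precisely in the regime the paper isolates with its branch $\abs{\stagelarge 0}\le\param^{3}\weight{\stage 0}$ (handled by running $\Nf$ throughout). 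The paper is careful to express every additive error as $o(\expec{\opt(I)})$ rather than $o(n)$ (\cref{weight-conc,large-item-conc,decomposition-lemma}), and the upright-matching bound together with this case split is exactly what makes that possible. A related omission: one cannot union-bound over all $\Theta(\log n)$ super-stages (the early ones are too short for concentration), so the paper treats the first $\ceil{\log(\delta^{7} n)}$ super-stages in bulk via a crude bound and applies the per-stage analysis only to the last $O(1)$ of them; your ``last super-stage dominates, plus $O(\log n)$ additive'' glosses over this.
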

Using an AFPTAS for bin packing (e.g. \cite{VegaL81}) as $\mathcal A_\alpha$, we obtain the following corollary.
\begin{corollary}
Using an AFPTAS for bin packing as $\mathcal A_\alpha$ in \cref{thm:bpiid}, we obtain an algorithm for online bin packing under the i.i.d.~model
with an ECR of $(1+\eps)$ for any $\eps\in(0,1/2)$.
\end{corollary}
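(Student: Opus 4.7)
The plan is a direct instantiation of \cref{thm:bpiid}. I would take $\mathcal{A}_\alpha$ to be the AFPTAS of Vega and Lueker \cite{VegaL81}: for any accuracy parameter $\eps' > 0$, this offline algorithm produces a packing using at most $(1+\eps')\OPT(I) + O(1)$ bins and therefore has AAR $\alpha = 1+\eps'$, with runtime $\beta(n) = C_{\eps'} + C n \log(1/\eps')$, which is linear in $n$ for fixed $\eps'$.

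To reach an ECR of $1+\eps$, I would choose the AFPTAS accuracy parameter to be $\eps' = \eps/2$, so that $\alpha = 1 + \eps/2$. Then I apply \cref{thm:bpiid} using this $\mathcal{A}_\alpha$ and with the meta-algorithm's own accuracy parameter also set to $\eps/2$. The theorem's guarantee delivers an algorithm with ECR at most $\alpha + \eps/2 = (1+\eps/2) + \eps/2 = 1+\eps$, matching the claim.

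It remains only to verify the hypotheses. \cref{thm:bpiid} demands its accuracy parameter lie in $(0,1)$; since $\eps \in (0,1/2)$ we have $\eps/2 \in (0,1/4) \subset (0,1)$, so the hypothesis is satisfied. The running time is $O(\beta(n)) = O(n)$ (with constants depending on $\eps$), so the resulting online algorithm runs in polynomial (in fact linear) time in $n$.

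There is no genuine obstacle; all the work is hidden inside \cref{thm:bpiid}, and the corollary is a one-line substitution. The only thing to be careful about is bookkeeping of the two independent $\eps$-parameters (the one fed to the offline AFPTAS and the one fed to the meta-algorithm) so that their sum equals the advertised $\eps$.
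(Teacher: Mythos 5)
Your proof is correct and is exactly the intended instantiation; the paper states the corollary without an explicit proof, treating it as an immediate consequence of \cref{thm:bpiid}, and your bookkeeping (splitting the target $\eps$ into $\eps/2$ for the AFPTAS accuracy and $\eps/2$ for the meta-algorithm's parameter) is the standard and correct way to make that substitution precise.
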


Most algorithms for bin packing under the i.i.d. model are based on the following idea.
Consider a sequence of $2k$ items where each item is independently drawn from an unknown distribution $F$, and let $\mathcal{A}$ be a packing algorithm. Pack the first $k$ items using $\mathcal{A}$; denote the packing by $\mathcal{P}'$. Similarly, let $\mathcal{P}''$ be the packing of the next $k$ items using $\mathcal{A}$. Since each item is drawn independently from $F$, both $\mathcal{P}'$ and $\mathcal{P}''$ have the same properties in expectation;
in particular,
the expected number of bins used in $\mathcal{P}'$ and $\mathcal{P}''$ are the same. 
Thus, intuitively, we want to use the packing   $\mathcal{P}'$ as a proxy for the packing  $\mathcal{P}''$.
However, there are two problems. First, we do not know $n$,
which means that there is no way to know what a good sample size is.
Second, we need to show the stronger statement that w.h.p.~$\mathcal{P}'\approx \mathcal{P}''$.
Note that the items in $\mathcal{P}'$ and $\mathcal{P}''$ are expected to be similar, but they may not be the same.
So, it is not clear which item in $\mathcal{P}'$ is to be used as a proxy for a newly arrived item in the second half.
Due to the online nature, erroneous choice of  proxy items can be quite costly.
Different algorithms handle this problem in different ways.
Some algorithms exploit the properties of particular distributions, some use exponential or pseudo-polynomial time, etc.

Rhee and Talagrand \cite{Rhee_Talagrand_Matching, rhee1993lineB} used {\em upright matching} to decide which item can be considered as a proxy for a newly arrived item.

They consider the model packing $\mathcal P_k$ of the first $k$ items (let's call these the proxy items) using an offline algorithm. With the arrival of each of the next $k$ items, they take a proxy item at random and pack it according to the model packing. Then, they try to fit in the real item using upright matching.
They repeat this process until the last item is packed.
However, they could only show a guarantee of $\OPT+O(\sqrt{n}\log^{3/4}n)$. 
The main drawback of \cite{rhee1993lineB} is that their ECR can be quite bad
if $\OPT\ll n$ (say, $\OPT= n^{2/3}$).
One of the reasons for this drawback is that they don't distinguish between small and large items;
when there are too many small items, the ECR blows up.

Using a similar approach, Fischer \cite{fischer_thesis} obtained a $(1+\eps)$-competitive randomized algorithm for the random-order model,
but it takes exponential time, and the analysis is quite complicated.
The exponential time was crucial in finding the optimal packing which was then used as a good proxy packing.
However, prior to our work, no polynomial-time algorithm existed which achieves a $(1+\eps)$ competitive ratio.

To circumvent these issues, we treat large and small items separately.
However, a straightforward adaptation faces several technical obstacles. 
Thus our analysis required intricate applications of concentration inequalities and sophisticated use of upright matching. 
First, we consider the semi-random case when we know $n$.
Our algorithm works in stages.
For a small constant $\delta \in (0,1]$, the first stage contains only
$\delta^2 n$ items. These items give us an estimate of the distribution.
But since the first stage contains a very small fraction $(\delta^2)$ of the entire input, 
we use a simple algorithm by \nextfit{} to pack it.
If the packing (of the first stage) does not contain too many large items, 
we show that the simple \nextfit{} algorithm suffices for the entire input.
Otherwise, we use a proxy packing of the set of first $\delta^2n$ items to pack the next $\delta^2n$ items.
In the process, the small and large items are packed in a different manner.
{
	The third set of $\delta^2n$ number of items are packed using the proxy packing of the second set of $\delta^2n$
	number of items. This process continues until all the items arrive.
}

{
	Finally, we get rid of the assumption that we know $n$ by first guessing the value of $n$ and then refining our
	guess if it is incorrect. First, we guess the value of $n$ to be a constant $n_0$. If it is incorrect,
	we increase our guess by multiplying $n_0$ with a small factor greater than 1. We continue this process of
	improving our guess until all the items arrive.
}

Our algorithm is simple, polynomial-time (in fact, $O(n)$ time),
and achieves essentially the best possible competitive ratio.
It is relatively simpler to analyze when compared to Fischer's algorithm \cite{fischer_thesis}. 
Also, unlike
the algorithms of Rhee and Talagrand \cite{rhee1993lineB} as well as Fischer \cite{fischer_thesis}, our algorithm is deterministic.
This is because, unlike their algorithms, instead of taking proxy items at random,
we pack all the proxy items before the start of a stage and try to fit in the real items as they come.
This makes our algorithm deterministic.
Our algorithm is explained in detail in \cref{algorithm}.
The nature of the meta-algorithm provides flexibility and ease of application.
See Table \ref{tab:my_label} for the performance guarantees obtained using different offline algorithms. 
\begin{table}[H]
    \centering
    \begin{tabular}{|l|c|c|}
    \hline
          \textbf{$\mathcal{A}_{\alpha}$} & \textbf{Time Complexity} & \textbf{Expected Competitive Ratio} \\ \hline\hline
            AFPTAS \cite{VegaL81} & 
            $O(C_\eps+C n \log{1/\eps})$
            & $(1+\epsilon)$ \\ \hline
            Modified-First-Fit-Decreasing \cite{mffd} & $O(n \log n)$ & $(71/60+\epsilon)$ \\
          \hline
            Best-Fit-Decreasing \cite{johnson-thesis} & $O(n \log n)$ & $({11}/{9}+\epsilon)$ \\ \hline
          First-Fit-Decreasing \cite{johnson-thesis} & $O(n \log n)$ & $({11}/{9}+\epsilon)$ \\ \hline
          Next-Fit-Decreasing \cite{nfd} & $O(n \log n)$ & $(T_\infty+\epsilon)$ \\ \hline
          Harmonic \cite{lee-lee} & $O(n)$ & $(T_\infty+\epsilon)$ \\ \hline
            Next-Fit & $O(n)$ & $(2+\epsilon)$ \\ \hline
    \end{tabular}
    \caption{Analysis of our meta-algorithm depending on $\mathcal{A}_{\alpha}$.
    	In the first row, $C$ is an absolute constant and $C_\eps$ is a constant that depends on $\eps$.
    }
    \label{tab:my_label}
\end{table}
See \cref{sec:iid-model} for the details of the proof and the description of our algorithm.
In fact, our algorithm can easily be generalized to $d$-dimensional online vector packing \cite{BansalE016}, a multidimensional generalization of bin packing. See Section \ref{sec:ovp} for a $d(\alpha+\eps)$ competitive algorithm for $d$-dimensional online vector packing
where the $i\Th$ item $X_i$ can be seen as a tuple $\left(X_i^{(1)},X_i^{(2)},\dots,X_i^{(d)}\right)$ where each $X_i^{(j)}$ is independently sampled from an unknown
distribution $\mathcal D^{(j)}$.

\noindent {\bf Bin packing under the random-order model:}
Next, we study BP under the random-order model.
Recently, Albers et al.~\cite{albers_et_al_MFCS} showed that BF is monotone if all the item sizes are greater than 1/3. Using this result,
they showed that in this special case, BF has an ARR of at most $5/4$. We show that, somewhat surprisingly, in this case, BF actually has an ARR of 1 (see \cref{bfgt13} for the detailed proof).
\begin{theorem}
\label{thm:gt1by3}
For online bin packing under the random-order model, Best-Fit achieves an asymptotic random-order ratio of $1$ when all the item sizes are in $(1/3,1]$.
\end{theorem}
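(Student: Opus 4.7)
The plan is to exploit the fact that each item has size strictly greater than $1/3$, so every packing uses at most two items per bin; consequently, the problem reduces to an online matching problem on the \emph{compatibility graph} $G(I)$ with vertex set $I$ and an edge $\{x_i, x_j\}$ whenever $x_i + x_j \le 1$. If $M^{*}$ denotes a maximum matching in $G(I)$, then $\OPT(I) = n - |M^{*}| \ge n/2$; and if $M_{BF}(\sigma)$ denotes the matching produced by Best-Fit under arrival order $\sigma$ (each arriving item pairs with the largest unmatched compatible singleton if any exists, otherwise it starts a new singleton bin), then $BF(I_\sigma) = n - |M_{BF}(\sigma)|$. It therefore suffices to establish $\mathbb{E}_\sigma\left[|M_{BF}(\sigma)|\right] \ge |M^{*}| - o(n)$, since together with $\OPT(I) \ge n/2$ this yields $\mathbb{E}[BF(I_\sigma)]/\OPT(I) \to 1$ along any sequence of instances with $\OPT(I) \to \infty$.

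My next step would be to peel off the trivial portion of the graph. Items in $(2/3, 1]$ are isolated in $G(I)$ (any partner would have to be of size at most $1/3$, but all items exceed $1/3$), so they are forced singletons in both BF and OPT and contribute identically. I then split the remaining items into \emph{bigs} in $(1/2, 2/3]$ and \emph{mediums} in $(1/3, 1/2]$: mediums form a clique in the induced subgraph, while each big $b$ is adjacent exactly to the mediums of size at most $1-b$, giving a threshold-type bipartite structure between bigs and mediums. On this subgraph, $M^{*}$ admits a clean greedy description: sort bigs in decreasing order, greedily match each with the smallest medium that still fits, and then pair up the remaining mediums among themselves. To work with a finite state space, I would further discretize item sizes to a constant number of types using the monotonicity of BF on items in $(1/3, 1]$ proved by Albers et al.~\cite{albers_et_al_MFCS}.

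The remaining task is to show that Best-Fit under a uniformly random arrival order loses only $o(n)$ matches with respect to $M^{*}$. For a discrete instance, the state of BF at any time is the profile of singleton item-types, which evolves as a Markov chain whose transitions are determined by the exchangeable residual multiset. A Doob-martingale / Azuma--Hoeffding argument applied to the singleton count keeps its realization within $O(\sqrt n)$ of its conditional mean throughout, so the heart of the proof reduces to a deterministic claim about the mean trajectory: in a uniformly random permutation, every prefix contains bigs and compatible mediums in essentially the correct $M^{*}$-proportion up to $O(\sqrt n)$ fluctuation, so whenever a big arrives a compatible medium singleton is likely to still be present. The main obstacle is turning this intuition into a quantitative $o(n)$ bound, since Best-Fit genuinely loses matches on adversarial orderings (e.g., a long initial run of mediums pairs them among themselves, orphaning later bigs that would have fit). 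I expect overcoming it to require a multi-class random-walk argument that tracks the singleton counts of each type simultaneously, charging each lost match either to a rare permutation-deviation event or to slack already present in $M^{*}$ itself.
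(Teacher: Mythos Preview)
Your reduction to a matching problem is correct, but the proof has a genuine gap at the point you yourself flag as ``the main obstacle'': you never actually establish that $\mathbb{E}_\sigma[|M_{BF}(\sigma)|]\ge |M^*|-o(n)$. The sketch you give --- a Doob martingale for concentration plus an unspecified ``multi-class random-walk argument'' for the mean --- is not a proof. Azuma--Hoeffding gives you concentration of $|M_{BF}(\sigma)|$ around its own mean, but says nothing about where that mean sits relative to $|M^*|$, and the random-walk argument needed to control medium--medium pairings is precisely the hard part you have not done. Your own counterexample (a long initial run of mediums pairing with each other and orphaning later bigs) shows this is not a technicality to be patched; it is the crux.

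The paper bypasses this difficulty entirely with three off-the-shelf ingredients. First, a reduction lemma of Albers et al.\ (using the monotonicity of BF for items larger than $1/3$) says it suffices to handle instances whose optimum consists solely of $LM$-bins: exactly $k$ large items in $(1/2,1]$ and $k$ medium items in $(1/3,1/2]$, perfectly paired. Second, one passes to \emph{Modified Best-Fit} (MBF), which is identical to BF except that any bin receiving a medium item is immediately closed; Shor showed $\mathrm{BF}(I)\le\mathrm{MBF}(I)$ always, and by construction MBF never pairs two mediums --- the very phenomenon you identify as the obstacle simply does not occur for MBF. Third, the number of bins MBF opens beyond $k$ is exactly the number of unmatched minus points in a random-order upright matching instance (plus points $(\pi(i),1-\ell_i)$, minus points $(\pi(k+i),m_i)$, with $\pi$ uniform), and a known stochastic upright matching bound gives $O(\sqrt{k}(\log k)^{3/4})$ unmatched points w.h.p. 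Combining these three pieces yields $\mathrm{BF}(I_\sigma)\le k+O(\sqrt{k}(\log k)^{3/4})$ w.h.p., hence ARR $1$. Your discretization-plus-Markov-chain route might be salvageable, but it is doing by hand what the $LM$-reduction, MBF domination, and the upright-matching black box accomplish in a few lines.
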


Next, we study
the $3$-partition problem, a special case of bin packing
when all the item sizes are in $(1/4,1/2]$.
This is known to be an extremely hard case \cite{DBLP:conf/soda/HobergR17}.
Albers et al. \cite{albers_et_al_MFCS} mentioned that ``it is sufficient to have one item in $(1/4,1/3]$ to force Best-Fit into anomalous behavior.''
 E.g., BF is non-monotone in the presence of items of size less than $1/3$.
 Thus the techniques of \cite{albers_et_al_MFCS} do not extend to the 3-Partition problem. 
 We break the barrier of $3/2$ in this special case, by showing that BF attains an ARR of at most $1.49107$.

\begin{theorem}
\label{thm:bfroa}
For online bin packing under the random-order model, Best-Fit has an asymptotic random-order ratio of at most $1.49107$ when all the item sizes are in $(1/4,1/2]$.
\end{theorem}
We prove \cref{thm:bfroa} in \cref{sec:roa}.
As 3-partition instances are believed to be the hardest instances for bin packing, our result gives a strong indication that the ARR of BF might be strictly less than $3/2$.
In affirmation, recently Hebbar et al. \cite{soda-gadgets} have managed to show that the ARR of BF is at most $3/2-\eps$ for some $\eps>10^{-10}$.


\section{Notations \& Preliminaries}
Let $I$ denote a list of $n$ items where each item $x$ has an associated weight given by $\weight x$.
We overload this notation and define $\weight I$ as the sum of weights of all the items in $I$.
For any $n\in\mathbb N_+$ we denote the set $\{1,2,\dots,n\}$ by $[n]$. Let $\sigma:[n]\mapsto[n]$ denote a
permutation of $[n]$. Then $I_\sigma$ denotes the list $I$ permuted according
to $\sigma$, i.e., if $x_i$ denotes the $i\Th$ item in $I$ ($i\in[n]$), then
the $i\Th$ item in $I_{\sigma}$ is given by $x_{\sigma(i)}$.



\subsection{The Upright Matching Problem}
The upright matching problem was first introduced in \cite{spaccamela} in the context of designing bin packing algorithms when the items
are sampled from the uniform distribution. In upright matching, a set of $m$ plus ($+$) points and a set of $m$ minus ($-$) points
are located in $\mathbb{R}^2$. Based on the configuration of the plus and minus points, we can create a graph $\mathcal G$
as follows. The vertices of $\mathcal G$ are the plus and minus points. There exists an edge between two vertices $(p^+,p^-)$ iff
\begin{itemize}
    \item The point $p^+$ is a plus point and the point $p^-$ is a minus point.
    \item If $p^+\eqqcolon (x^+,y^+)$ and $p^-\eqqcolon (x^-,y^-)$, then $x^+\ge x^-$ and $y^+\ge y^-$, i.e., the plus point
    $p^+$ lies to the \emph{upright} of the minus point $p^-$.
\end{itemize}
The objective of the upright matching problem is to find the maximum matching in the defined graph $\mathcal G$.
In other words, the objective is to minimize the number of unmatched points. 
Suppose $P^+$ denotes the set of plus points and $P^-$ denotes the set of minus points.
We denote the minimum possible number of unmatched points by $U(P^+,P^-)$.

It can be easily shown that for a given set of plus points $P^+$ and a set of minus points $P^-$, the minimum possible
number of unmatched points can be found in the following way: We process the points in the increasing order of their 
$x$-coordinates. If a plus point $P^+$ arrives, we check if there exists an unmatched minus point that has already arrived and whose
$y$-coordinate is less than that of $P^+$. If no such minus point exists, then $P^+$ is left unmatched for the rest of the process.
Otherwise, we match it with an already arrived, yet, unmatched minus point $P^-$ with the maximum $y$-coordinate among those
whose $y$-coordinates are at most that of $P^+$. A more formal pseudo-code of this procedure is given in \cref{alg:uprightmatching}.

\begin{algorithm}
\caption{Maximum Upright Matching}
\begin{algorithmic}[1]
\Require A set $P^+$ of $m$ plus points and a set $P^-$ of $m$ minus points.
\State Arrange $P^+\cup P^-$ in increasing order of their $x$-coordinates to obtain a sequence $P$.
\State Initialize the set of unmatched minus points $U^-=\emptyset$.
\For{a point $p\eqqcolon (x,y)$ in $P$}
    \If{$p\in P^-$, i.e., $p$ is a minus point}
        \State $U^-\leftarrow U^-\cup\{p\}$
    \ElsIf{$p\in P^+$, i.e., $p$ is a plus point}
        \State $V^-\leftarrow \{(x^-,y^-)\in U^-:y^-\le y)\}$
        \If{$V^-=\emptyset$}
            \State Leave $p$ to be unmatched
        \Else
            \State Define $p^-=\argmax_{(x^-,y^-)\in V^-}y^-$
            \State Match $p^-$ and $p$
            \State $U^-\leftarrow U^-\setminus\{p^-\}$
        \EndIf
    \EndIf
\EndFor
\end{algorithmic}
\label{alg:uprightmatching}
\end{algorithm}
It can be checked that \cref{alg:uprightmatching} correctly computes a maximum upright matching. 
\kvn{A high-level proof can be found in \cite{spaccamela}. However, for the sake of completeness, we provide a detailed proof in \cref{app:upright-matching}.}
\subsubsection{Stochastic Variants of Upright Matching}
We now discuss two stochastic variants of the upright matching problem, which, in one form or the another, have been used in
various works on bin packing; e.g., \cite{spaccamela,shor1986average,rhee1993lineB,fischer_thesis}.
\begin{lemma}
[Upright matching with \iid{} coordinates \cite{fischer_thesis}]
\label{lem:upright-matching-iid}
Consider a set of real numbers $r_1,r_2,\dots,r_m,s_1,s_2,\dots,s_m$ which are 
independently and identically sampled from a distribution. (We remark that the distribution can
be parameterized by $m$.) Define the set of minus points $P^-=\{(-1,r_i)\}_{i\in[m]}$,
and the set of plus points $P^+=\{(+1,s_i)\}_{i\in[m]}$. Then, there exist universal constants
$a,C,K$ such that with probability at least $1-C\exp\left(-a(\log m)^{3/2}\right)$, we have that
\begin{align*}
U(P^+,P^-)\le K\sqrt m(\log m)^{3/4}.
\end{align*}
\end{lemma}
\begin{restatable}[Upright matching with randomly permuted coordinates \cite{fischer_thesis}]{lemma}{uprightmatchingroa}
\label{lem:upright-matching-roa}
Consider a set of reals $r_1,r_2,\dots,r_m$ and $s_1,s_2,\dots,s_m$ 
such that $r_i\le s_i$ for all $i\in[m]$. Consider a uniform random permutation $\pi$ of $[2m]$.
Define the set of minus points $P^-=\{(\pi(i),r_i)\}_{i\in[m]}$,
and the set of plus points $P^+=\{(\pi(m+i),s_i)\}_{i\in[m]}$.
Then, there exist universal constants
$a,C,K$ such that with probability at least $1-C\exp\left(-a(\log m)^{3/2}\right)$, we have that
\begin{align*}
U(P^+,P^-)\le K\sqrt m(\log m)^{3/4}.
\end{align*}
\end{restatable}
\section{Online Bin Packing Problem under the \iid{} Model}
\label{sec:iid-model}
In this section, we provide the meta algorithm as described in \cref{thm:bpiid}.
For the ease of presentation, we split the section into three subsections.
In \cref{sec:blueprint}, we describe `Blueprint Packing', which is one of the central ideas of the paper.
In \cref{algorithm}, we assume a semi-online model, i.e., we assume that the number of items $n$ is known beforehand, and design an algorithm.
The idea of blueprint packing is used extensively as a subroutine in designing this algorithm.
Later, in \cref{sec:imp-algorithm}, we get rid of the assumption on the knowledge of the number of items using a \emph{doubling trick}.

Let the underlying distribution be $F$.
Without loss of generality, we assume that the support set of $F$
is a subset of $(0,1]$.
For any set of items $J$, we define $\weight J$ as the sum of weights of all the items in $J$.
For any $k\in\mathbb N_+$, we denote the set $\{1,2,\dots,k\}$ by $[k]$.
Let $\auxalgo$ be an offline algorithm for bin packing with
an AAR of $\alpha>1$ and let $\OPT$ denote the
optimal algorithm.
Let $\epsilon\in(0,1)$ be a constant parameter and let $0<\delta<\epsilon/8$
be a constant
{such that $1/\delta$ is an integer}.
For any item $x$, we call $x$ to be a \emph{large} item if $x\ge \delta$
and a \emph{small} item otherwise.

\subsection{Blueprint Packing}
\label{sec:blueprint}
One of the central ideas of our algorithm for online bin packing under the \iid{} model is
\emph{blueprint packing}. Informally, it shows us how to pack a set of $k$ \iid{} items arriving online using the 
knowledge of a set of $k$ items that are already present and are sampled from the same distribution independently.

Consider a set $J_1$ of $k$ items ($k\in\mathbb N$) sampled independently from an arbitrary distribution.
Consider an offline bin packing algorithm $\mathcal A_\alpha$ that has an AAR of $\alpha$, and suppose we have
the packing $\mathcal A_\alpha(J_1)$ at our disposal. Now, suppose that another set $J_2$ of $k$ items sampled
independently from the same distribution arrive online. We would like to pack the set $J_2$ (online) using
the packing $\mathcal A_\alpha(J_1)$ as a blueprint.
\begin{lemma}[Blueprint Packing Lemma]
\label{lem:blueprint}
Consider any offline bin packing algorithm $\mathcal A_\alpha$ with an AAR of $\alpha$.
Let $\mathcal F$ be any arbitrary distribution with its support in $(0,1]$ and let $J_1$
be a set of $k$ items sampled from $\mathcal F$ independently.
Consider another set $J_2$ of $k$ items sampled independently from $\mathcal F$ and let $\delta>0$ be a small constant.
Then, there exists an online algorithm that packs $J_2$ in at most
\[
    \alpha(1+4\delta)\expec{\opt(J_2)}+o(\expec{\opt(J_2)})
\]
number of bins, with high probability.
\end{lemma}
We next describe the blueprint packing procedure and prove \cref{lem:blueprint}.
To pack the set $J_2$, we first compute the packing $\mathcal A_\alpha(J_1)$
and refer to it as the \emph{blueprint packing}.
We refer to the large items in the blueprint packing $\mathcal A_\alpha(J_1)$ as \emph{proxy} items.
First, we remove all the small items in the blueprint packing $\mathcal A_\alpha(J_1)$. Then, in each bin of the
modified packing $\mathcal A_\alpha(J_1)$, the empty space is designated to be an \emph{$S$-slot}. The $S$-slots will be used to pack the
small items of $J_2$. (These $S$-slots can be thought of as empty bins of varying sizes to pack the small items of $J_2$.) When a small item $s$ of $J_2$
arrives, we pack it in these $S$-slots using \nextfit{}. If $s$ does not fit according to the \nextfit{} rule,
then we open a new bin and
designate the entire bin as an $S$-slot, and pack $s$ in there. On the other hand, if a large item $\ell$ 
in $J_2$ arrives, we remove the smallest proxy item in the blueprint packing $\mathcal A_\alpha(J_1)$ that is bigger than $\ell$,
if one exists. If no such proxy item exists, we open a new bin for $\ell$, pack it there, and close the bin, i.e., this bin
will not be used for any future items. (As a side note, we can possibly use this bin more efficiently, but we will see that it will not
affect the performance ratio.) A pseudo-code for the blueprint packing procedure can be found in \cref{alg:blueprint}.

\begin{algorithm}
\caption{Blueprint Packing Procedure $\bluep(J_2,\mathcal A_\alpha, J_1)$: Pack the set $J_2$ online using the packing $\mathcal A_\alpha(J_1)$
as a blueprint.}
\begin{algorithmic}[1]
\Require Two disjoint sets of items $J_1$ and $J_2$ where each item in $J_1\cup J_2$ is sampled in an \iid{} manner.
\Ensure Pack the set $J_2$ online using the packing of the set $J_1$ by an offline algorithm $\mathcal A_\alpha$.
\State Construct the blueprint packing $\mathcal A_\alpha(J_1)$.
\State Initialize the set of $S$-slots $\mathcal S= \emptyset$.
\For{each bin $B$ in $\mathcal A_\alpha(J_1)$}
    \State Remove the small items in $B$.
    \State Create an $S$-slot $H$ of size equal to ($1-$weight of all the large items in $B$).
    \State $\mathcal S\leftarrow\mathcal S\cup \{H\}$.
\EndFor
\State Initialize the set of proxy items $D$ to the set of large items in the set $J_1$. \Comment{The set of proxy items.}
\For{an item $x \in J_2$}
    \If{$x$ is large}
        \If{there exists a proxy item $d\in D$ such that $d\ge x$} \label{line:check-proxy-item}
            \State Find smallest such $d$.
            \State $D\leftarrow D\setminus \{d\}$.
            \State Pack $x$ in place of $d$ in the packing $\mathcal A_\alpha(J_1)$.
        \Else
            \State Open a new bin and pack $x$ and close the bin. \label{line:pack-unmatched}
        \EndIf
    \Else{ (i.e., $x$ is small)}
        \State Try packing $x$ in the set of $S$-slots $\mathcal S$ using \nextfit{}.
        \If{$x$ cannot be packed}
            \State Open a new bin $B$ with a single $S$-slot of unit capacity.\label{line:overflowed-small}
            \State $\mathcal S\leftarrow\mathcal S\cup \{B\}$.
            \State Pack $x$ in $B$.
        \EndIf
    \EndIf
\EndFor
\State\Return the packing after removing the proxy items that have not been replaced.
\end{algorithmic}
\label{alg:blueprint}
\end{algorithm}
\subsubsection{Concentration Bounds to Analyze Blueprint Packing}
We will need some concentration inequalities and tail bounds to analyze the blueprint packing procedure.
First, we state the Bernstein's inequality, which will be used heavily throughout. The details and proof can be found in \cite{bernstein-ref}.
\begin{lemma}[Bernstein's Inequality]
\label{bernstein}
Let $X_1,X_2,\dots,X_n$ be independent random variables such that each $X_i\in[0,1]$.
Then, for any $\lambda>0$, the following inequality holds.
\begin{align*}
    \prob{\abs{\sum_{i=1}^nX_i-\sum_{i=1}^n\expec{X_i}}\ge \lambda}\le 2\exp\left(-\frac{\lambda^2}{2\left(\sum_{i=1}^n\expec{X_i}+\lambda/3\right)}\right).
\end{align*}
\end{lemma}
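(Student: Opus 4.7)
The plan is to prove Bernstein's inequality via the standard Chernoff (exponential Markov) method, with the boundedness $X_i\in[0,1]$ used twice: once to control the moment generating function (MGF) of the centred variables, and once to replace variance by expectation in the final bound.

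First I would centre and denote $S=\sum_{i=1}^n X_i$, $\mu=\sum_{i=1}^n \expec{X_i}$, and $Y_i = X_i - \expec{X_i}$. For any $t>0$, Markov's inequality gives
\[
\prob{S-\mu\ge\lambda}\le e^{-t\lambda}\,\expec{e^{t\sum_i Y_i}}=e^{-t\lambda}\prod_{i=1}^n\expec{e^{tY_i}},
\]
by independence of the $Y_i$'s. The next step is to bound each factor. Since $X_i\in[0,1]$ and $\expec{X_i}\in[0,1]$, we have $|Y_i|\le 1$, and in particular $|Y_i|^k\le Y_i^2$ for every integer $k\ge 2$. Taylor-expanding the exponential and using $\expec{Y_i}=0$ yields
\[
\expec{e^{tY_i}}=1+\sum_{k\ge 2}\frac{t^k\expec{Y_i^k}}{k!}\le 1+\var{Y_i}\sum_{k\ge 2}\frac{t^k}{k!}.
\]
Using the crude but sharp-enough bound $k!\ge 2\cdot 3^{k-2}$ for $k\ge 2$, the tail of the series is dominated by a geometric one, giving $\sum_{k\ge 2} t^k/k!\le (t^2/2)/(1-t/3)$ for $0<t<3$. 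Combined with $1+x\le e^x$, this yields
\[
\expec{e^{tY_i}}\le\exp\!\left(\frac{\var{X_i}\,t^2/2}{1-t/3}\right).
\]

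Next I would multiply these bounds and apply the boundedness a second time: since $X_i\in[0,1]$, we have $\var{X_i}\le \expec{X_i^2}\le\expec{X_i}$, hence $\sum_i\var{X_i}\le\mu$. Writing $V=\mu$, this gives
\[
\prob{S-\mu\ge\lambda}\le\exp\!\left(\frac{Vt^2/2}{1-t/3}-t\lambda\right),
\]
for every $0<t<3$. The final routine step is to optimise the exponent: choosing $t=\lambda/(V+\lambda/3)\in(0,3)$ makes $1-t/3 = V/(V+\lambda/3)$, which cleanly reduces the exponent to $-\lambda^2/\bigl(2(V+\lambda/3)\bigr)$. This yields the one-sided bound on $\prob{S-\mu\ge\lambda}$.

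For the lower tail I would apply exactly the same argument to $Z_i:=\expec{X_i}-X_i$, which again satisfies $|Z_i|\le 1$ and $\var{Z_i}=\var{X_i}\le\expec{X_i}$, so the identical bound holds for $\prob{\mu-S\ge\lambda}$. A union bound over the two tails contributes the factor of $2$ in the statement. The main technical obstacle is obtaining precisely the $\lambda/3$ term in the denominator; this hinges on the inequality $k!\ge 2\cdot 3^{k-2}$ (or an equivalent estimate) when summing the Taylor series, and on correctly choosing the optimiser $t$ so that the two terms in the exponent combine into the clean $-\lambda^2/(2(V+\lambda/3))$ form.
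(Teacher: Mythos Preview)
The paper does not prove this lemma; it is simply stated as a classical concentration inequality and then used as a black box throughout \cref{sec:iid-model}. Your proposal is a correct, self-contained proof via the standard Chernoff (exponential Markov) method, with the two key uses of $X_i\in[0,1]$---bounding $|Y_i|^k$ by $Y_i^2$ in the MGF expansion, and bounding $\var{X_i}$ by $\expec{X_i}$ to get $\mu$ rather than the variance in the denominator---exactly where they are needed. The estimate $k!\ge 2\cdot 3^{k-2}$ and the choice $t=\lambda/(V+\lambda/3)$ are the standard moves that yield the precise constant $1/3$ in the statement, and your arithmetic checks out. So there is nothing to compare: you have supplied a proof where the paper intentionally omits one.
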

The following lemma is a direct implication of the results of~\cite{rhee1993lineB,rhee-ineq}.
\begin{lemma}\label{rhee-talagrand}
For any $t\in [n]$, let $I(1,t)$ denote the first $t$ items of a \kvn{set $I$ of $n$ items sampled independently from a distribution}. Then there exist constants $K,a>0$ such that,
\begin{align*}
\prob{\opt (I(1,t)) \geq \frac{t}{n}\expec{\opt(I)} + K\sqrt{n}(\log n)^{3/4}} \leq \exp\left(-a(\log n)^{3/2}\right).
\end{align*}
\end{lemma}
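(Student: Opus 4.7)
The plan is to combine two classical ingredients from the Rhee--Talagrand machinery: a concentration inequality for the offline optimum around its mean, and a rate-of-convergence estimate for the expected optimum. Let $c := \lim_{s\to\infty}\expec{\opt(I_s)}/s$ denote the asymptotic packing density of $F$. Existence of $c$ and the one-sided bound $\expec{\opt(I_s)} \ge c\,s$ follow from the subadditivity of $s \mapsto \expec{\opt(I_s)}$ (two disjoint prefixes can be packed independently) together with Fekete's lemma. The matching upper bound $\expec{\opt(I_s)} \le c\,s + K_2\sqrt{s}(\log s)^{3/4}$ is the content of \cite{rhee1993lineB}. The concentration bound, analogous to \cref{iid-matching} but for the offline bin-packing value and proved in \cite{rhee-ineq}, asserts that for every $s$,
\[
\prob{\abs{\opt(I_s)-\expec{\opt(I_s)}} \ge K_1\sqrt{s}(\log s)^{3/4}} \le C_1 \exp\!\left(-a_1(\log s)^{3/2}\right).
\]

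The first step is to bound $\expec{\opt(I(1,t))}$ in terms of $(t/n)\expec{\opt(I)}$. Since the first $t$ items form an i.i.d.\ sample of size $t$ from $F$, the rate estimate gives $\expec{\opt(I(1,t))} = \expec{\opt(I_t)} \le c\,t + K_2\sqrt{t}(\log t)^{3/4}$. Combining $c\,t \le (t/n)\expec{\opt(I)}$ (from the Fekete lower bound $\expec{\opt(I)} \ge c\,n$) with the monotonicity $\sqrt{t}(\log t)^{3/4} \le \sqrt{n}(\log n)^{3/4}$ yields
\[
\expec{\opt(I(1,t))} \le \tfrac{t}{n}\expec{\opt(I)} + K_2\sqrt{n}(\log n)^{3/4}.
\]

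The second step applies the concentration bound to the i.i.d.\ sample $I(1,t)$: with probability at least $1 - C_1\exp(-a_1(\log t)^{3/2})$,
\[
\opt(I(1,t)) \le \expec{\opt(I(1,t))} + K_1\sqrt{t}(\log t)^{3/4} \le \tfrac{t}{n}\expec{\opt(I)} + (K_1+K_2)\sqrt{n}(\log n)^{3/4}.
\]
Setting $K := K_1 + K_2$ recovers the inequality in the lemma.

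The main technicality is aligning the $(\log t)^{3/2}$ appearing in the failure probability with the $(\log n)^{3/2}$ required by the statement. I would dispose of this with a case split: if $t \le K\sqrt{n}(\log n)^{3/4}$, the desired bound already holds deterministically because $\opt(I(1,t)) \le t$; otherwise $\log t \ge \tfrac{1}{2}\log n$ for $n$ large enough, hence $(\log t)^{3/2} \ge (\log n)^{3/2}/(2\sqrt{2})$, and absorbing the $2\sqrt{2}$ factor into a smaller constant $a>0$ gives the required $\exp(-a(\log n)^{3/2})$. The only non-routine ingredient in this plan is the rate bound $\expec{\opt(I_s)} \le c\,s + O(\sqrt{s}(\log s)^{3/4})$ for an arbitrary distribution on $(0,1]$; everything else is a quick assembly of subadditivity, Fekete's lemma, and the standard concentration inequality.
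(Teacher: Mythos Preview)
Your argument is correct, but it follows a different route from the paper's. The paper never introduces the asymptotic constant $c=\lim_s \expec{\opt(I_s)}/s$. Instead, it invokes the prefix-to-whole comparison from \cite{rhee1993lineB} directly: with probability at least $1-\exp(-a_1(\log n)^{3/2})$ one has $\opt(I(1,t))\le (t/n)\,\opt(I)+K_1\sqrt{n}(\log n)^{3/4}$ (note the random $\opt(I)$, not its expectation). Then the concentration inequality from \cite{rhee-ineq} is applied at scale $n$ to replace $\opt(I)$ by $\expec{\opt(I)}$, losing another $K_2\sqrt{n}(\log n)^{3/4}$. Because both error terms and both failure probabilities already live at scale $n$, no case split on $t$ is needed. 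Your approach, by contrast, uses only ``single-scale'' facts---concentration at scale $t$ and the rate-of-convergence bound $\expec{\opt(I_s)}\le cs+O(\sqrt{s}(\log s)^{3/4})$---and reaches $\expec{\opt(I)}$ via Fekete's lemma rather than via concentration on $I$. This is arguably more modular, but it buys you two extra chores: you must import the rate estimate (which is not what the paper cites as Theorem~2.1 of \cite{rhee1993lineB}, though it is derivable from it), and you need the case split at the end to upgrade $(\log t)^{3/2}$ to $(\log n)^{3/2}$ in the failure probability. The paper's route avoids both.
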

\begin{proof}[Proof Sketch]
The following claim is similar to Theorem 2.1 in \cite{rhee1993lineB}.
\begin{claim}
\label{claim:rhee-talagrand}
Let $I$ be a list of $n$ items sampled independently from a distribution and let $I(1,t)$
denote the first $t$ items. Then there exist constants $K_1,a_1>0$ such that
\begin{align*}
\prob{\opt(I(1,t))\ge\frac{t}{n}\opt(I)+K_1\sqrt n(\log n)^{3/4}}\le \exp\left(-a_1(\log n)^{3/2}\right)
\end{align*}
\end{claim}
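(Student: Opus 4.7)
The plan is to extract the claim from Theorem~2.1 of~\cite{rhee1993lineB} together with a short chain of comparisons that connects a prefix optimum to the scaled full optimum. Theorem~2.1 supplies sharp two-sided concentration of $\opt$ on i.i.d.\ inputs at the scale $\sqrt m(\log m)^{3/4}$: for any i.i.d.\ sequence $J$ of length $m$ and universal constants $K_0, a_0 > 0$,
\[
\prob{\abs{\opt(J)-\expec{\opt(J)}} \geq K_0 \sqrt m (\log m)^{3/4}} \leq \exp\left(-a_0 (\log m)^{3/2}\right).
\]

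First, I would apply this inequality once with $J = I(1,t)$ to obtain $\opt(I(1,t)) \le \expec{\opt(I(1,t))} + K_0 \sqrt n (\log n)^{3/4}$ and once with $J = I$ to obtain $\opt(I) \ge \expec{\opt(I)} - K_0 \sqrt n (\log n)^{3/4}$, each failing with probability at most $\exp(-a_0(\log n)^{3/2})$. Multiplying the second inequality through by $t/n \leq 1$ gives $(t/n)\opt(I) \ge (t/n)\expec{\opt(I)} - K_0 \sqrt n (\log n)^{3/4}$, so that the random $\opt(I)$ in the target statement may be replaced by its expectation at the cost of an additive $K_0 \sqrt n(\log n)^{3/4}$.

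Next, I would establish the purely deterministic bound $\expec{\opt(I(1,t))} \le (t/n)\expec{\opt(I)} + C \sqrt n (\log n)^{3/4}$. Subadditivity of $\opt$ combined with i.i.d.\ symmetry only delivers the converse direction $\expec{\opt(I)} \le \lceil n/t \rceil\,\expec{\opt(I(1,t))}$, so one must invoke the limit $\expec{\opt(J_m)}/m \to \gamma(F)$ together with its convergence rate $O((\log m)^{3/4}/\sqrt m)$, which itself drops out of the above concentration inequality applied along a doubling sequence of sample sizes. This yields $\expec{\opt(I(1,t))} = \gamma(F)\, t + O(\sqrt t(\log t)^{3/4})$ and $(t/n)\expec{\opt(I)} = \gamma(F)\, t + O(t(\log n)^{3/4}/\sqrt n)$; for every $t \le n$ both error terms are $O(\sqrt n(\log n)^{3/4})$, and subtracting the two asymptotic expansions gives the desired inequality.

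Chaining the three displays and union-bounding over the two high-probability events yields
\[
\opt(I(1,t)) \le \frac{t}{n}\opt(I) + K_1 \sqrt n (\log n)^{3/4}
\]
with $K_1 := 2K_0 + C$ and failure probability at most $2\exp(-a_0(\log n)^{3/2}) \le \exp(-a_1(\log n)^{3/2})$ for some $a_1 > 0$, which is exactly the claim. The main obstacle is the deterministic comparison of the two expectations: subadditivity gives the wrong direction for free, so the proof really does have to squeeze the rate of convergence of $\expec{\opt(J_m)}/m$ to $\gamma(F)$ out of Theorem~2.1. Once that rate is in hand, the rest is routine tail bounds and a union bound.
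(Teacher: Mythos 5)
Your reading of what Theorem~2.1 of \cite{rhee1993lineB} says does not match the paper's use of it. The paper's proof is a one-line citation: Theorem~2.1 there is already a prefix-optimum comparison of the form $\opt(I(1,t)) \le \frac{t}{n}\opt(I) + O(\sqrt n(\log n)^{3/4})$ (simultaneously in $t$, with high probability), so the claim literally \emph{is} that theorem up to cosmetics. The two-sided concentration of $\opt(J)$ around $\expec{\opt(J)}$ that you quote is a different result; in fact it is exactly what the paper invokes separately, attributing it to \cite{rhee-ineq}, to prove the \emph{second} claim of this subsection. So you have swapped the roles of the two cited theorems and are trying to rebuild the first one from the second.

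Setting that aside, the route you sketch has a real gap at the step you yourself flag as the main obstacle. After applying concentration to $I(1,t)$ and to $I$, you need the deterministic expectation bound $\expec{\opt(I(1,t))} \le \frac{t}{n}\expec{\opt(I)} + O(\sqrt n(\log n)^{3/4})$, equivalently $\varepsilon_t := \expec{\opt(I(1,t))} - \gamma(F)\,t = O(\sqrt t(\log t)^{3/4})$. You claim this rate ``drops out of the above concentration inequality applied along a doubling sequence,'' but it does not. Subadditivity of $m \mapsto \expec{\opt(J_m)}$ gives $\varepsilon_{2m} \le 2\varepsilon_m$, and iterating this along the doubling sequence only yields the trivial $\varepsilon_m = O(m)$. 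Feeding the deviation bound back in produces nothing sharper, since $\opt$ of a union has no useful lower bound in terms of the parts. The genuine content of the $O(\sqrt m(\log m)^{3/4})$ convergence rate is a near-\emph{super}additivity statement for $\expec{\opt(J_m)}$, which Rhee obtains from a combinatorial (upright-matching) argument showing that optimal packings of two i.i.d.\ samples can be merged with small overhead. That is a separate structural theorem, not a corollary of concentration. As written, your proof leaves this essential piece unsupported, and without it the expectation comparison, and hence the whole chain, does not close.
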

The proof is almost the same as the proof of Theorem 2.1 in \cite{rhee1993lineB} \kvn{with some small changes.
However, since these changes are not very trivial, we give the full proof in \cref{app:proportionality}.}

The next claim is a direct implication of the main result of \cite{rhee-ineq}.
\begin{claim}
Let $I$ be a list of $n$ items sampled independently from a distribution.
Then there exist constants $K_2,a_2>0$ such that
\begin{align*}
\prob{\expec{\opt(I)}\ge\opt(I)+K_2\sqrt n(\log n)^{3/4}}\le \exp\left(-a_2(\log n)^{3/2}\right)
\end{align*}
\end{claim}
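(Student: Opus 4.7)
The plan is to invoke the deviation inequality for stochastic bin packing due to Rhee (\cite{rhee-ineq}), which concentrates $\opt(I)$ around its expectation on the $\sqrt n$ scale. The conceptual basis of any such inequality is that $\opt$, viewed as a function of the $n$ i.i.d.\ inputs, has bounded differences: replacing one coordinate $x_i$ by some other value $x_i'\in(0,1]$ alters $\opt$ by at most one, because one can always remove from the old packing the bin containing $x_i$ and give $x_i'$ a dedicated new bin (and symmetrically in the other direction).

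Given this bounded-differences property, either Rhee's inequality from \cite{rhee-ineq} or McDiarmid's inequality applied directly yields a bound of the form
\begin{align*}
\prob{\expec{\opt(I)} - \opt(I) \ge \lambda} \le C\exp\left(-c\lambda^2/n\right)
\end{align*}
for universal positive constants $c,C$. Substituting $\lambda = K_2\sqrt n(\log n)^{3/4}$ gives $\lambda^2/n = K_2^2(\log n)^{3/2}$, so the right-hand side is $C\exp(-cK_2^2(\log n)^{3/2})$. For sufficiently large $n$, the multiplicative constant $C$ is absorbed into the exponent at the cost of slightly shrinking the rate, yielding the claimed bound $\exp(-a_2(\log n)^{3/2})$ with $a_2 = cK_2^2/2$ after an appropriate choice of $K_2$.

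The main (and essentially only) obstacle is locating the precise form of Rhee's inequality in \cite{rhee-ineq} and checking that its one-sided version (for the lower tail of $\opt(I)$) is what we need here; once that is in hand, matching the exponents is a routine calculation. Note the contrast with the preceding claim in the excerpt: that claim bounded the \emph{upper} tail of $\opt(I(1,t))$ relative to its scaled expectation via Theorem~2.1 of \cite{rhee1993lineB}, whereas here we need the \emph{lower} tail of $\opt(I)$ around $\expec{\opt(I)}$, which follows directly from a two-sided concentration result. Combining the two yields the full strength of \cref{rhee-talagrand} via a triangle inequality in probability (union bound over the two bad events).
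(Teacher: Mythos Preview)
Your proposal is correct and follows essentially the same approach as the paper: both invoke the concentration result of Rhee~\cite{rhee-ineq}, and the paper's own proof consists of nothing more than the sentence ``direct implication of the main result of~\cite{rhee-ineq}.'' You additionally spell out the bounded-differences/McDiarmid justification and the arithmetic for matching the exponent, which is fine but more than the paper provides.
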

Combining both claims, we obtain that there exist constants $K,a>0$ such that
\begin{align*}
\prob{\opt(I(1,t))\ge\frac{t}{n}\expec{\opt(I)}+K\sqrt n(\log n)^{3/4}}\le \exp\left(-a(\log n)^{3/2}\right)
\end{align*}
\end{proof}
The following lemmas are about how a property of a part of the input (say, the total size) compares to that of the entire input.
\begin{lemma}\label{weight-conc}
For an input set $I$ of $n$ items drawn independently from a distribution, for any arbitrary set $J\subseteq I$ we have,
\begin{align*}
\prob{\abs{\weight{J} - \frac{\abs J}{n} \expec{\weight{I}}} \geq \expec{\weight{I}}^{2/3}} \leq 2\exp\left(-\frac{1}{3}\expec{\weight{I}}^{1/3} \right).
\end{align*}
\end{lemma}
\begin{proof}
    We can assume that $\weight{I}$ goes to infinity
since we know that $\opt(I)\le 2\weight{I}+1$.
Let $K:=\expec{\weight{I}}$ and $\weight{x}$ be the weight of item $x$.
Using Bernstein's inequality (\cref{bernstein}),
\begin{align*}
&\prob{\abs{\sum_{x\in J}\weight{x} - \sum_{x\in J}\expec{\weight{x}}} \geq K^{2/3}}\\
            \leq\:&2\exp\left(-\frac{K^{4/3}}{2\left(\sum_{x\in J}\expec{\weight{x}}+K^{2/3}/3\right)}\right)\\
            \le\:&2\exp\left(-\frac{K^{4/3}}{2\left(K+K^{2/3}/3\right)}\right)\\
            \le\:&2\exp\left(-\frac{K^{4/3}}{2\left(K+K/3\right)}\right)\tag{since $K$ goes to infinity, $K^{2/3}\le K$}\\
            \le\:&2\exp\left(-\frac13K^{1/3}\right).
\end{align*}
Since $\sum_{x\in J}\weight{x}=\weight{J}$ and since $\expec{\weight{J}}=\frac{\abs J}{n}\expec{\weight{I}}$,
the lemma follows.
\end{proof}
\begin{lemma}\label{large-item-conc}
Let $I$ be an input set of $n$ items drawn independently from a distribution and let $J$ be any  subset of $I$.
Suppose $J_{\ell}$ (resp. $I_\ell$) denote the set of large items in $J$ (resp. $I$). Then we have,
\begin{align*}
\prob{\abs{\abs{J_{\ell}} - \frac{\abs J}{n}\expec{\abs\largeitems}} \geq \expec{\weight{I}}^{2/3}} \leq 2\exp\left(-\frac{\param}{3}\expec{\weight{I}}^{1/3} \right).
\end{align*}
\end{lemma}
\begin{proof}
    We can assume that $\weight{I}$ goes to infinity.
Let $K:=\expec{\weight{I}}$.
For any item $x$, let $L_x$ be the indicator random variable which denotes if the item $x$
is a large item or not.
Using Bernstein's inequality (\cref{bernstein}),
\begin{align*}
&\prob{\abs{\sum_{x\in J}L_x - \sum_{x\in J}\expec{L_x}} \geq K^{2/3}}\\
            \leq\:&2\exp\left(-\frac{K^{4/3}}{2\left(\sum_{x\in J}\expec{L_x}+K^{2/3}/3\right)}\right)\\
            =\:&2\exp\left(-\frac{K^{4/3}}{2\left(\expec{\abs{J_\ell}}+K^{2/3}/3\right)}\right)
                    \tag{since $\sum_{x\in J}L_x=\abs{J_\ell}$}\\
            \le\:&2\exp\left(-\frac{K^{4/3}}{2\left(K/\delta+K^{2/3}/3\right)}\right)
                    \tag{since $K=\expec{\weight{I}}\ge\delta\expec{\abs{J_\ell}}$}\\
            \le\:&2\exp\left(-\frac\delta3K^{1/3}\right).
\end{align*}
Since $\sum_{x\in J}L_x=\abs{J_\ell}$ and since $\expec{\abs{J_\ell}}=\frac{\abs J}{n}\expec{\abs{I_\ell}}$,
the lemma follows.
\end{proof}
\kvn{The next lemma shows that the optimal number of bins required to pack a set of items sampled independently from a distribution
is approximately proportional to the number of items.}
\begin{lemma}\label{lem:decomposition}
For any $t\in \{1,2,\ldots, n\}$, let $I(1,t)$ denote the first $t$ items of a set $I$ of items sampled independently from a distribution. Then there exist constants $C,a>0$ such that with probability at least  $1-\exp\left(-a\left(\log\expec{\opt(I)}\right)^{1/3}\right)$, we have
\begin{align*}
\opt (I(1,t)) \leq (1+2\param)\frac{t}{n}\expec{\opt(I)} + C\expec{\opt(I)}^{2/3}.
\end{align*}
\end{lemma}
\begin{proof}
    Let us denote the set of large items in $I(1,t)$ by $I_\ell(1,t)$
and denote the set of small items by $I_s(1,t)$. Consider any optimal
packing of $I_\ell(1,t)$. Start packing $I_s(1,t)$ in the spaces left in the
bins greedily using \nextfit{} while opening new bins whenever necessary.
This gives us a valid packing of $I(1,t)$.

    We distinguish between the cases when we open new bins to pack $I_s(1,t)$ and when we do not.
    We will also use \cref{rhee-talagrand,weight-conc,large-item-conc} to bound the number of opened bins.
    Using a union bound, we can assume that there exists a constant $a'$ such that the guarantees given by \cref{rhee-talagrand,large-item-conc,weight-conc}
    hold with probability at least $1-\exp\left(-a'\expec{\opt(I)}^{2/3}\right)$.
    Define the event
    \begin{align*}
        E\coloneqq \text{``We do not open any new bins for $I_s(1,t)$''.}
    \end{align*}
    Suppose the event $E$ occurs with probability at least $\exp(-\frac12 a'(\log \expec{\opt(I)})^{3/2})$.
    Note that since \cref{rhee-talagrand} (unconditionally) holds with probability at least $1-\exp(-\frac12 a'(\log \expec{\opt(I)})^{3/2})$,
    when we condition on the event $E$, it holds with probability at least $1-\frac{\exp(-a'(\log \expec{\opt(I)})^{3/2})}{\exp(-\frac12 a'(\log \expec{\opt(I)})^{3/2})}$
    which is equal to $1-\exp(-\frac12 a'(\log \expec{\opt(I)})^{3/2})$.

Now, we condition on event $E$. In this case, we have
\begin{align}
    \opt(I(1,t))&= \opt(I_\ell(1,t)).\label{eq:I1t-Il1t}
\end{align}
We further consider two cases. 

\textbf{Case 1:} $\expec{\abs{I_\ell}}\le2\expec{\weight{I}}^{2/3}$.\\
Then we have
\begin{align*}
    \opt(I(1,t))&= \opt(I_\ell(1,t))\\
            &\le \abs{I_\ell}\\
            &\le \expec{\abs{I_\ell}}+\expec{\weight{I}}^{2/3}\tag{using \cref{large-item-conc}}\\
            &\le 3\expec{\weight{I}}^{2/3}\\
            &\le 3\expec{\opt(I)}^{2/3}.
\end{align*}    
Now, we consider the other case.

\textbf{Case 2:} $\expec{\abs{I_\ell}}>2\expec{\weight{I}}^{2/3}$.\\
Now, to bound $\opt(I_\ell(1,t))$, we can apply \cref{rhee-talagrand} confined to only large items, by treating other items as having zero weight.
More formally, consider $I$ and remove all the small items in $I$ to obtain $I_\ell$. Then, an item of $I_\ell$ can be thought of as
sampling from a different distribution $F_\ell$ as follows. We sample an item from $F$; if it is a small item, we discard it and sample until we get a large item.

Define
\begin{align*}
    t' = \frac tn\expec{\abs{I_\ell}}+\expec{\weight{I}}^{2/3}\quad\text{and}\quad n'=\expec{\abs{I_\ell}}-\expec{\weight{I}}^{2/3}.
\end{align*}
Owing to \cref{large-item-conc}, we obtain that, with high probability,
\begin{align}
    \abs{I_\ell(1,t)}&\le t',\label{eq:t'-bound}\\
    \abs{I_\ell}&\ge n'\label{eq:n'-bound}.
\end{align}
Define $I'_\ell$ as the prefix of $I_\ell$ with $n'$ items.
Note that $t'$ can be larger than $n'$. (This happens when, e.g., $t=n$.)
Therefore, we consider two further subcases.

\textit{Case 2.1.} $t'\le n'$.\\
In this case, define
\begin{align*}
    {t''\coloneqq \frac tn n'=\frac{t}{n}\expec{\abs{I_{\ell}}}-\frac tn\expec{\weight{I}}^{2/3}},
\end{align*}
and note that
\begin{align}
    t'\le t''+2\expec{\weight{I}}^{2/3}.\label{eq:t''-bound}
\end{align}
Define $I'_\ell(t')$ as the prefix of $t'$ items of $I'_\ell$
and $I'_\ell(t'')$ as the prefix of $t''$ items of $I'_\ell$.
We have that 
\begin{align*}
    \opt(I_\ell(1,t))&\le \opt(I'_\ell(t'))\tag{from \cref{eq:t'-bound}}\\
                    &\le \opt(I'_\ell(t''))+(t'-t'')\\
    &\le \opt(I'_\ell(t''))+2\expec{\weight{I}}^{2/3}.\tag{from \cref{eq:t''-bound}}
\end{align*}
Further, from \cref{eq:n'-bound} we also have
\begin{align*}
    \quad\opt(I'_\ell)\le \opt(I_\ell).
\end{align*}
Therefore,
\begin{align}
\opt(I_\ell(1,t))&\le \opt(I'_\ell(t''))+2\expec{\weight{I}}^{2/3}\nonumber\\
        &\le\frac{t''}{n'}\expec{\opt(I'_\ell)}+C_1\sqrt{n'}(\log n')^{3/4}+2\expec{\weight{I}}^{2/3}
                                                                                            \tag{using \cref{rhee-talagrand}}\nonumber\\
        &=\frac tn\expec{\opt(I'_\ell)}+C_1\sqrt{n'}(\log n')^{3/4}+2\expec{\weight{I}}^{2/3}\nonumber\\
        &\le\frac tn\expec{\opt(I_\ell)}+C_1(n')^{2/3}+2\expec{\weight{I}}^{2/3}\nonumber\\
        &\le\frac tn\expec{\opt(I_\ell)}+C_1\expec{\abs{I_\ell}}^{2/3}+2\expec{\weight{I}}^{2/3}\nonumber.
\end{align}
We now use the facts that $\opt(I_\ell)\le \opt(I)$ and $\delta \abs{I_\ell}\le \weight{I}$ to obtain
\begin{align}
\opt(I(1,t))=\opt(I_{\ell}(1,t))&\le \frac{t}{n}\expec{\opt(I)}+C_1'\expec{\opt(I)}^{2/3}
\end{align}
with probability at least $1-\exp\left(-a''(\log\expec{\opt(I)})^{1/3}\right)$ for some constants $a'',C_1'>0$.
This ends Case 2.1.

\textit{Case 2.2.} $t'> n'$.\\
This is the easier case. Since we have $t'>n'$, by the definitions of $t',n'$, we obtain that
\begin{align*}
    \frac tn\expec{\abs{I_\ell}}+\expec{\weight{I}}^{2/3}>\expec{\abs{I_\ell}}-\expec{\weight{I}}^{2/3}.
\end{align*}
This gives us the inequality
\begin{align}
    \left(1-\frac tn\right)\expec{\abs{I_{\ell}}} < 2\expec{\weight{I}}^{2/3}.\label{eq:case22-main-bound}
\end{align}
Further observe that
\begin{align*}
    \abs{I_{\ell}(1,t)}&\le t'\tag{by \cref{eq:t'-bound}}\\
                    &\le n'+ 2\expec{\weight{I}}^{2/3}\tag{by definition of $n'$}\\
                    &= \abs{I_{\ell}'}+2\expec{\weight{I}}^{2/3}\tag{from \cref{eq:n'-bound}}.
\end{align*}
This implies that
\kvn{
\begin{align*}
    \opt(I_{\ell}(1,t))&\le \opt(I_{\ell}')+2\expec{\weight{I}}^{2/3}\\
                    &\le \expec{\opt(I_{\ell}')}+C_1\sqrt{n'}(\log n')^{3/4}+2\expec{\weight{I}}^{2/3}\tag{using \cref{rhee-talagrand}}\\
                    &\le \expec{\opt(I_{\ell}')}+C_1{(n')}^{2/3}+2\expec{\weight{I}}^{2/3}\\
                    &\le \expec{\opt(I_{\ell}')}+C_1(\expec{\abs{I_{\ell}}})^{2/3}+2\expec{\weight{I}}^{2/3}\tag{since $n'\le \expec{\abs{I_{\ell}}}$}\\
                    &\le \expec{\opt(I_{\ell})}+C_1(\expec{\abs{I_{\ell}}})^{2/3}+2\expec{\weight{I}}^{2/3}\\
                    &= \frac tn \expec{\opt(I_{\ell})}+\left(1-\frac tn\right)\expec{\opt(I_{\ell})}+C_1(\expec{\abs{I_{\ell}}})^{2/3}+2\expec{\weight{I}}^{2/3}\\
                    &\le \frac tn \expec{\opt(I_{\ell})}+\left(1-\frac tn\right)\expec{\abs{I_{\ell}}}+C_1(\expec{\abs{I_{\ell}}})^{2/3}+2\expec{\weight{I}}^{2/3}\\
                    &\le\frac tn \expec{\opt(I_{\ell})}+2\expec{\weight{I}}^{2/3}+C_1\expec{\abs{I_\ell}}^{2/3}+2\expec{\weight{I}}^{2/3}\tag{using \cref{eq:case22-main-bound}}.
\end{align*}
We again use the facts that $\opt(I_\ell)\le \opt(I)$ and $\delta \abs{I_\ell}\le \weight{I}$ to obtain
\begin{align}
\opt(I(1,t))=\opt(I_{\ell}(1,t))&\le \frac{t}{n}\expec{\opt(I)}+C_1'\expec{\opt(I)}^{2/3}\label{zeta-eq-1}
\end{align}
with probability at least $1-\exp\left(-a''(\log\expec{\opt(I)})^{1/3}\right)$ for some constants $a'',C_1'>0$.
This ends Case 2.2, and hence the
analysis in the scenario when the event $E$ happens with probability at least $\exp(-\frac12 a'(\log \expec{\opt(I)})^{3/2})$.
}

On the other hand, suppose the event $E$ happens with probability at most $\exp(-\frac12 a'(\log \expec{\opt(I)})^{3/2})$.
In other words, with probability at least $1-\exp(-\frac12 a'(\log \expec{\opt(I)})^{3/2})$, we open new bins while packing $I_s(1,t)$. Then after the final packing, every bin (except possibly one)
is filled to a level of at least $1-\delta$. Hence, in this case,
\begin{align}
\opt(I(1,t))&\le\frac{1}{1-\param}\weight{I(1,t)}+1\nonumber\\
            &\le(1+2\delta)\frac{t}{n}\expec{\weight{I}}+(1+2\delta)\expec{\weight{I}}^{2/3}+1\tag{using \cref{weight-conc}}\nonumber\\
            &\le(1+2\delta)\frac{t}{n}\expec{\opt(I)}+C_2\expec{\opt(I)}^{2/3}\label{zeta-eq-2}
\end{align}
with probability at least $1-\exp\left(-a'''\expec{\opt(I)}^{1/3}\right)$ for some constants $a''',C_2>0$.

Therefore, if the event $E$ occurs with probability at least $\exp\left(-\frac12 a'\expec{\opt(I)}^{1/3}\right)$, then \cref{zeta-eq-1} applies.
Otherwise, \cref{zeta-eq-2} applies. Hence, we obtain that there exist some constants $a,C$ such that
\begin{align*}
    \opt(I(1,t))\le(1+2\delta)\frac{t}{n}\expec{\opt(I)}+C\expec{\opt(I)}^{2/3}
\end{align*}
holds with probability at least $1-\exp\left(-a\expec{\opt(I)}^{1/3}\right)$.

This completes the proof.
\end{proof}
\subsubsection{Analysis of Blueprint Packing}
Using the above lemmas, we will now analyze \cref{alg:blueprint} to bound the number of bins used to pack $J_2$.
Let us denote the number of bins used by \cref{alg:blueprint} with 
$\bluep(J_2,\mathcal A_\alpha, J_1)$. We interpret this notation as the number of bins used to pack $J_2$ using
the blueprint packing $\mathcal A_\alpha(J_1)$.
Let $\bluep_{\unmatch}$ denote the number of bins opened for the large items in $J_2$ for which
we could not find a proxy item to be replaced. (See \cref{line:pack-unmatched} in \cref{alg:blueprint}.)
Let $\bluep_{\mini}$ denote the number of bins opened because a small item could not be packed in the
$S$-slots using \nextfit{}. (See \cref{line:overflowed-small} in \cref{alg:blueprint}.)
Even before the arrival of the set $J_2$, we have used $\mathcal A_\alpha(J_1)$ number of
bins to construct the blueprint packing. Hence,
\begin{align}
\bluep(J_2,\mathcal A_\alpha, J_1)=\mathcal A_\alpha(J_1)+\bluep_{\unmatch}+\bluep_{\mini}.\label{eq:to-be-minimized}
\end{align}
We will bound each of the summands in the right hand side of the above equation.
First, since $\mathcal A_\alpha$ is a bin packing algorithm with AAR $\alpha$, we have,
\begin{align}
\mathcal A_\alpha(J_1)&\le \alpha \Opt(J_1)+o(\Opt(J_1)).
\label{eq:bluep-existing}
\end{align} 
To bound $\bluep_{\unmatch}$, we use the upright matching result of \cref{lem:upright-matching-iid}.
Let $L(J_1)$ and $L(J_2)$ denote the set of large items in the sequences $J_1$ and $J_2$, respectively,
and let $\kappa_1\coloneqq \abs{L(J_1)}$ and $\kappa_2\coloneqq \abs{L(J_2)}$.
Using \cref{large-item-conc}, we get that, w.h.p., $\kappa_1\geq \expec{\kappa_1} - \expec{\weight{J_1}}^{2/3}$, and
$\kappa_2\leq \expec{\kappa_2} + \expec{\weight{J_2}}^{2/3}$. Since $\abs{J_1}=\abs{J_2}$ and since each item in
$J_1\cup J_2$ is sampled independently and identically, we have that $\expec{\kappa_1}=\expec{\kappa_2}$
and $\expec{\weight{J_1}}=\expec{\weight{J_2}}$. Hence, with high probability,
\begin{align}
\kappa_2 \leq  \kappa_1 + 2\expec{\weight{J_1}}^{2/3}.
\label{eq:largeitems-j}
\end{align}
Let $L(J_1)\eqqcolon \{r_1,r_2,\dots,r_{\kappa_1}\}$ and $L(J_2)\eqqcolon \{s_1,s_2,\dots,s_{\kappa_2}\}$.
According to lines 10--17 of \cref{alg:blueprint}, each $s_i$ ($i\in[\kappa_1]$) is packed in the place of an
$r_j$ $(j\in[\kappa_1])$ satisfying $r_j>s_i$ such that $r_j$ is minimum. Now, if we consider the sets of points
$P^-=\{(-1,r_j)\}_{j\in[\kappa_1]}$, and $P^+=\{(+1,s_i)\}_{i\in[\kappa_1]}$,
then one can see that this procedure of packing $s_i$-s in place of $r_j$-s is exactly the same as the maximum upright matching procedure
detailed in \cref{alg:uprightmatching}. Hence, by \cref{lem:upright-matching-iid} and also
accounting for the items in $\{s_{\kappa_1+1},s_{\kappa_1+2},\dots,s_{\kappa_2}\}$, we obtain,
w.h.p., an upper bound on $\bluep_{\unmatch}$, the number of new bins opened while packing $L(J_2)$.
\begin{align*}
\bluep_{\unmatch}\le \kappa_2-\kappa_1+K\sqrt{\kappa_1}(\log \kappa_1)^{3/4}.
\end{align*}
Using \cref{eq:largeitems-j}, we finally obtain the following bound on $\bluep_{\unmatch}$ that holds with
high probability.
\begin{align}
\bluep_{\unmatch}&\le 2\expec{\weight{J_1}}^{2/3}+K\sqrt{\kappa_1}(\log \kappa_1)^{3/4}.\nonumber
\end{align}
We will simplify the above inequality further.
Each large item has size at least $\delta$. Hence, at most $1/\delta$ number of large items can be fit in a bin.
Therefore, $\Opt(J_1)\ge \delta\kappa_1$. Also, $\Opt(J_1)\ge\weight{J_1}$. Since $\delta$ is a constant,
and since $\log a\le a^{2/9}$ for any positive integer $a$, we have that for some constant $K_1$, \whp{}, 
\begin{align}
\bluep_{\unmatch}\le 2\expec{\opt{J_1}}^{2/3}+K_1\opt(J_1)^{2/3}\nonumber
\end{align}
Using \cref{lem:decomposition}, we see that, \whp{}, $\Opt(J_1)\le (1+2\delta)\expec{\Opt(I)}+o(\expec{\opt(I)})$.
Hence, the above inequality can be transformed as
\begin{align}
\bluep_{\unmatch}=o\left(\expec{\opt(J_1)}\right).
\label{eq:bluep-unmatch}
\end{align}
The only part that is left to be bounded is $\bluep_\mini$, the number of bins opened because the small items in $J_2$
could not be packed in the $S$-slots (lines 20--24 of \cref{alg:blueprint}).
We will bound this by using the concentration of weights of small items in $J_1$ and $J_2$.
For this purpose, let us write $S(J_1), S(J_2)$ to denote the sets of small items in the sets $J_1,J_2$,
respectively.
Also, let $\mathcal S_e$ denote the set of existing $S$-slots before we start packing the items in $J_2$,
i.e., the set of $S$-slots created just after the for loop in lines 3--7 of \cref{alg:blueprint} ends.
Similarly, let $\mathcal S_n$ denote the set of $S$-slots created during the process of packing $J_2$,
i.e., when line 21 of \cref{alg:blueprint} is executed. We would like to bound $\abs{\mathcal S_n}$.
\kvn{To this end, let us define the volume of a given set $\mathcal S$ of $S$-slots
as the sum of the size\footnote{The size of an $S$-slot is the maximum size of an item that can fit in that slot.} of each $S$-slot in $\mathcal S$.
Let $\vol{\mathcal S}$ denote the volume of a given set $\mathcal S$ of $S$-slots.}
Since we packed the small items in $J_2$ in the set of $S$-slots given by $\mathcal S_e\cup \mathcal S_n$ using \nextfit{}, each of these
$S$-slots, except possibly one, will have an unused volume of at most $\delta$. Hence,
\begin{align}
\weight{S(J_2)}&\ge \vol{\mathcal S_n\cup\mathcal S_e}-\delta\abs{\mathcal S_n\cup\mathcal S_e}-1\nonumber\\
            &=   \vol{\mathcal S_n}+\vol{\mathcal S_e}-\delta\abs{\mathcal S_n}-\delta\abs{\mathcal S_e}-1\nonumber\\
            &=   \abs{\mathcal S_n}+\vol{\mathcal S_e}-\delta\abs{\mathcal S_n}-\delta\abs{\mathcal S_e}-1
                            \tag{since each $S$-slot in $\mathcal S_n$ has unit volume}\nonumber\\
            &=   \vol{\mathcal S_e}+(1-\delta)\abs{\mathcal S_n}-\delta\abs{\mathcal S_e}-1.
            \label{dkgjkljgklgklg}
\end{align}
Let us look at the right hand side of the last inequality closely.
The number of $S$-slots in $\mathcal S_e$ is upper bounded by the number of bins used by the algorithm $\mathcal A_\alpha$
to pack $J_1$. At the same time, $\vol{\mathcal S_e}$ is lower bounded by the volume of the small items in $J_1$.
Hence, \cref{dkgjkljgklgklg} can be rewritten as
\begin{align*}
\weight{S(J_2)}&\ge\vol{\mathcal S_e}+(1-\delta)\abs{\mathcal S_n}-\delta\abs{\mathcal S_e}-1\\
               &\ge \weight{S(J_1)}+(1-\delta)\abs{\mathcal S_n}-\delta\mathcal A_\alpha(J_1)-1\\
               &\ge \weight{S(J_1)}+(1-\delta)\abs{\mathcal S_n}-\delta\alpha\Opt(J_1)-o(\Opt(J_1)).
\end{align*}
Rearranging terms on both the sides, we obtain an upper bound on $\abs{\mathcal S_n}$.
\begin{align}
\bluep_{\mini}=\abs{\mathcal S_n}&\le\frac{1}{1-\delta}\Big(\weight{S(J_2)}-\weight{S(J_1)}\Big)+\delta\alpha\Opt(J_1)+o(\Opt(J_1))
\label{eq:bluep-small}
\end{align}
The only task left is to bound the quantity $\weight{S(J_2)}-\weight{S(J_1)}$. Intuitively, this quantity should not
be too big since the joint distributions of $J_2$ and $J_1$ are exactly the same. This is formalized in the next claim.

\begin{claim}
With probability at least $1-4\exp\left(-\frac38\expec{\opt(J_1)}^{1/3}\right)$, we have
\begin{align*} 
\weight{S(J_2)}-\weight{S(J_1)}\le 2\expec{\opt(J_1)}^{2/3}.
\end{align*}
\end{claim}

\begin{proof}
For any item $x$, define the random variable $S_x$ which takes value $0$ if $x$ is large,
and takes value $x$ if $x$ is small. Now, observe that 
$\sum_{x\in J_1}S_x=\weight{S(J_1)}$ and $\sum_{x\in J_2}S_x=\weight{S(J_2)}$. Moreover, since $J_1$ and $J_2$ each
contain $k$ items and since each item is sampled independently from the same distribution, it must be the case that
$\expec{\weight{S(J_1)}}=\expec{\weight{S(J_2)}}$. The same argument implies that
$\expec{\Opt(J_1)}=\expec{\Opt(J_2)}$.

Applying Bernstein's inequality (\cref{bernstein}) on the set of random variables $\{S_x\}_{x\in J_2}$, we obtain that
\begin{align*}
\prob{\weight{S(J_2)}-\expec{\weight{S(J_2)}}\ge \expec{\Opt(J_2)}^{2/3}}&=\prob{\sum_{x\in J_2}S_x-\sum_{x\in J_2}\expec{S_x}\ge \expec{\Opt(J_2)}^{2/3}}\\
                                        &\le 2\exp\left(-\frac{\expec{\Opt(J_2)}^{4/3}}{2\left(\expec{\weight{S(J_2)}}+\frac13\expec{\Opt(J_2)}^{2/3}\right)}\right).
\end{align*}
Now, observe that $\weight{S(J_2)}\le \weight{J_2}\le\Opt(J_2)$. Also, $\expec{\Opt(J_2)}^{2/3}\le \expec{\Opt(J_2)}$. 
Hence, the above inequality can be simplified as
\begin{align*}
\prob{\weight{S(J_2)}-\expec{\weight{S(J_2)}}\ge \expec{\Opt(J_2)}^{2/3}}&\le 2\exp\left(-\frac38 \expec{\Opt(J_2)}^{1/3}\right).
\end{align*}
In the same manner, we can apply Bernstein's inequality on the set of random variables $\{S_x\}_{x\in J_1}$ to obtain that
\begin{align*}
\prob{\weight{S(J_1)}-\expec{\weight{S(J_1)}}\le -\expec{\Opt(J_1)}^{2/3}}&\le 2\exp\left(-\frac38 \expec{\Opt(J_1)}^{1/3}\right).
\end{align*}
Therefore, with probability at least $1-4\exp\left(-\frac38 \expec{\Opt(J_1)}^{1/3}\right)$, both the following inequalities hold.
\begin{align*}
\weight{S(J_2)}-\expec{\weight{S(J_2)}}&\le \expec{\Opt(J_2)}^{2/3},\\
\weight{S(J_1)}-\expec{\weight{S(J_1)}}&\ge -\expec{\Opt(J_1)}^{2/3}.
\end{align*}
Since $\expec{\weight{S(J_1)}}=\expec{\weight{S(J_2)}}$ and $\expec{\Opt(J_1)}=\expec{\Opt(J_2)}$, we obtain that
with probability at least $1-4\exp\left(-\frac38 \expec{\Opt(J_1)}^{1/3}\right)$,
\begin{align*}
\weight{S(J_2)}-\weight{S(J_1)}\le 2\expec{\opt(J_1)}^{2/3}.
\end{align*}
This ends the proof of the claim.
\end{proof}
Using the above claim with \cref{eq:bluep-small}, we can obtain an upper bound on $\bluep_{\mini}$
that holds with high probability as follows.
\begin{align}
\bluep_{\mini}&\le\frac{1}{1-\delta}\left(2\expec{\opt(J_1)}^{2/3}\right)+\delta\alpha\Opt(J_1)+o(\Opt(J_1))\nonumber\\
        &\le o\left(\expec{\opt(J_1)}\right)+\delta\alpha\Opt(J_1)+o(\Opt(J_1)).\nonumber
\end{align}
\cref{lem:decomposition} with $t=n$ tells us that $\Opt(J_1)\le(1+2\delta)\expec{\Opt(J_1)}+o(\expec{\opt(J_1)})$.
Hence, we transform the above inequality to obtain the following upper bound on $\bluep_{\mini}$.
\begin{align}
\bluep_{\mini}\le \delta\alpha\Opt(J_1)+o\left(\expec{\opt(J_1)}\right).
\label{eq:bluep-small-final}
\end{align}
Summing up \cref{eq:bluep-existing,eq:bluep-unmatch,eq:bluep-small-final},
we obtain an upper bound on $\bluep(J_2,\mathcal A_\alpha, J_1)$ (\cref{eq:to-be-minimized}),
the number of bins used to pack $J_2$ using \cref{alg:blueprint}.
\begin{align*}
\bluep(J_2,\mathcal A_\alpha, J_1)&=\mathcal A_\alpha(J_1)+\bluep_{\unmatch}+\bluep_{\mini}\\
                &\le \alpha\opt(J_1)+\alpha\delta\opt(J_1)+o(\expec{\opt(J_1)})\\
                &= \alpha(1+\delta)\opt(J_1)+o(\expec{\opt(J_1)}).
\end{align*}
Using \cref{lem:decomposition} with $t=n$, we can see that $\opt(J_1)\le (1+2\delta)\expec{\opt(J_1)}+o\left(\expec{\opt(J_1)}\right)$.
Moreover, we know that $\expec{\Opt(J_1)}=\expec{\Opt(J_2)}$.
Therefore, we can transform the above inequality to obtain the following form
for the upper bound on $\bluep(J_2,\mathcal A_\alpha, J_1)$.
\begin{align}
\bluep(J_2,\mathcal A_\alpha, J_1)&\le \alpha(1+\delta)(1+2\delta)\expec{\opt(J_1)}+o(\expec{\opt(J_1)})\nonumber\\
                &\le \alpha(1+4\delta)\expec{\opt(J_1)}+o(\expec{\opt(J_1)})\nonumber\\
                &= \alpha(1+4\delta)\expec{\opt(J_2)}+o(\expec{\opt(J_2)}).
\label{eq:final-blueprint}
\end{align}
In other words, using an $\alpha$-approximate offline algorithm to pack $J_1$,
we can pack $J_2$ in an \emph{online} manner while achieving (essentially)
the same approximation factor.

\subsection{Algorithm Assuming that the Value of \texorpdfstring{$n$}{n} is Known}\label{algorithm}
\label{alg-desc}
We now describe our algorithm which assumes the knowledge of the number of items.
The input $I$ is denoted by the list $x_1,x_2,\dots,x_n$.
We partition the entire input into {$m:=1/\delta^2$} stages
$\stage{0},\stage{1},\dots,\stage{m-1}$.
The zeroth stage $\stage{0}$, called the \emph{sampling stage}, contains the first
$\delta^2n$ items, i.e., $x_1,x_2,\dots,x_{\delta^2n}$. For $j\in[m-1]$,
the stage $\stage j$ contains the items with index starting from $j\delta^2n+1$
till $\min(n,(j+1)\delta^2n)$. In essence, $\stage 0$ contains
the first $\delta^2n$ items, $\stage 1$ contains the next $\delta^2n$ items,
$\stage 2$ contains the next $ \delta^2n$ items, and so on.
Note that the number of stages $m$ is a constant, and the last stage may contain fewer than $\delta^2n$ items.
In any stage $\stage j$, we denote the set of large items and small items
by $\stagelarge j$ and $\stagesmall j$, respectively.
Note that for any $j\in[m-2]$,
$\abs{\stage j}=\abs{\stage{j-1}}$ and since all the items
are sampled independently from the same distribution, we know that
the joint distributions of the sets $T_j$ and $T_{j-1}$ are the same.
Hence, to pack $T_j$, we can first pack $T_{j-1}$ using a good offline approximation
algorithm $\mathcal A_\alpha$ and use that packing as a blueprint.
The last stage $T_{m-1}$ may contain fewer than $\delta^2 n$ items. However, its size is only a small fraction
in comparison to the entire input; hence, we can show that it does not affect the final packing much.

The algorithm is as follows. First we pack $\stage 0$, the sampling stage, using \nextfit{}.
The sampling stage contains only a small but a constant fraction of the entire input set; hence it uses
only a few number of bins when compared to the final packing but at the same time provides a good
estimate of the underlying distribution.
If $\abs{L_0}$, the number of large items in the sampling stage, is at most $\param^3 \weight{\stage 0}$,
then we continue using \nextfit{} for the rest of the entire input too. 
{Intuitively, \nextfit{} performs well in this case as most of the items are small.}
Thus, from now on, let us assume that $\abs{L_0}>\param^3 \weight{T_0}$.
Consider an intermediate point when all the items in the stage $\stage{j-1}$ have arrived and the first element of stage $\stage j$
is about to arrive ($j\ge 1$). At this point, we compute the packing $\mathcal A_\alpha(T_{j-1})$
and use it as a blueprint to pack $T_j$, i.e., we run $\bluep(T_j, \mathcal A_\alpha, T_{j-1})$ (\cref{alg:blueprint})
to pack $T_j$.
In this way, we pack all the stages.
We call this algorithm $\ALG$.
\cref{algo} gives a more formal pseudocode of $\ALG$.
\begin{algorithm}
\caption{$\ALG (x_1,x_2,\dots,x_n)$: An algorithm for online bin packing
under the \iid{} model
assuming that the number of items $n$ is known before-hand}
\begin{algorithmic}[1]
\Require $I=\{x_1,x_2,...,x_n\}$ where each $x_i$ $(i\in[n]$ is sampled independently and identically.
\State Initialize $m:=\frac{1}{\delta^2}$ \Comment{Number of stages}
\For{$j$ in $\left\{0,1,\dots,m-1\right\}$}
    \State Define $\stage j$ to be the sequence $x_{j\delta^2n+1}, x_{j\delta^2n+2}, \dots, x_{\min\{n,(j+1)\delta^2n\}}$\Comment{The $j\Th$ stage}
\EndFor
\State Pack the sampling stage $T_0$ using \nextfit{}.
\If{$\abs{\stagelarge 0} \leq \param^3 \weight{\stage{0}}$}\label{line:less-large} \Comment{Very few large items in the sampling stage}
\State Use \nextfit{} for all the remaining stages $T_1, T_2, \dots, T_{m-1}$.
\Else
    \For{$j$= $1$ to $m-1$}
        \State Pack the stage $T_j$ using the blueprint packing procedure $\bluep(T_j,\mathcal A_\alpha, T_{j-1})$.
    \EndFor
\EndIf
\end{algorithmic}
\label{algo}
\end{algorithm}

We will proceed to analyze the algorithm $\ALG$.
We split the analysis into the following two cases
depending on the truth value of the if condition on \cref{line:less-large} of \cref{algo}: when
$\abs{\stagelarge 0} \leq \param^3\cdot \weight{\stage{0}}$ and when
$\abs{\stagelarge 0} > \param^3\cdot \weight{\stage{0}}$.
\subsubsection{\texorpdfstring{\textbf{Case 1:} $\abs{\stagelarge 0} \leq \param^3\cdot \weight{\stage{0}}$}{Case 1}}
Recall that in this case, we just continue with \nextfit{} for all the remaining items. To bound the \nextfit{} solution, we first consider the number of bins that contain at least one large item.
For this, we bound the value of $\abs{I_\ell}$.
Then we consider the bins that contain only small items
and bound this value in terms of weight of all items $\weight{I}$.
\begin{claim}
With probability at least $1-4\exp\left(-\frac{\delta}{6}\expec{\Opt(I)}^{1/3} \right)$,
for some positive constant $a$, we have that
\begin{align*}
\abs\largeitems \leq \param\cdot\weight{\stage 0} + a\expec{\Opt(I)}^{2/3}.
\end{align*}
\end{claim}
\begin{proof}
As the sampling stage contains $\param^2 n$ items, $\expec{\abs{\stagelarge 0}}=\param^2\expec{\abs\largeitems}$. 
Two applications of Lemma~\ref{large-item-conc} give us the following inequalities, we have 
\begin{align*}
\prob{\abs{\stagelarge 0} \leq {\param^2} \expec{\abs\largeitems} - \expec{\weight{I}}^{2/3}} &\leq 2\exp\left(-\frac{\param}{3}\expec{\weight{I}}^{1/3} \right)\text{ and}\\
\prob{\abs{\largeitems} \geq  \expec{\abs\largeitems} + \expec{\weight{I}}^{2/3}} &\leq 2\exp\left(-\frac{\param}{3}\expec{\weight{I}}^{1/3} \right).
\end{align*}
From the above inequalities we have that,
\begin{align*}
\abs{\largeitems}\leq \frac{1}{\param^2}\abs{\stagelarge 0} + \left(1+\frac{1}{\param^2}\right)\expec{\weight{I}}^{2/3}
\end{align*}
with probability at least $1-4\exp\left(-\frac{\param}{3}\expec{\weight{I}}^{1/3} \right)$.
We can use the inequalities $2\weight{I}\geq {\opt(I)} \geq {\weight{I}}$ and
$\abs{\stagelarge 0} \leq \param^3\cdot \weight{\stage{0}}$ to conclude the proof of this claim.
\end{proof}
The number of bins that contain at least one large item is upper bounded by $\abs{I_\ell}$.
Now we bound the number of bins that contain only small items. Note that \nextfit{} fills each such bin (with at most one possible exception)
up to a capacity at least $(1-\delta)$. So, the number of bins containing only small items is at most
$\frac{1}{1-\param}\weight{I}+1$. Thus,
with high probability,
\begin{align*}
\Nf(I) &\le\abs{\largeitems} + \frac{1}{(1-\param)}\weight{I} + 1\\ 
    &\leq \abs{\largeitems} + (1+2\param)\weight{I} + 1\\ 
    &\leq \param\cdot\weight{\stage 0} + (1+2\param)\weight{I} + a\expec{\opt(I)}^{2/3}+1
\end{align*}
{for some constant $a$.}

Using Lemma~\ref{weight-conc}, we get that, with high probability, $\weight{I}\leq \expec{\weight{I}} + \expec{\weight{I}}^{2/3}$. Using the facts $\weight{\stage 0}\le\weight{I}$ and ${\opt(I)}/2\leq{\weight{I}}\leq {\opt(I)}$, we get,
\begin{align}
\ALG(I)=\Nf(I)\leq (1+3\param)\expec{\opt(I)} + a_1\expec{\opt(I)}^{2/3},\label{continue-bf}
\end{align}
with high probability for some constant $a_1>0$.
\subsubsection{\texorpdfstring{\textbf{Case 2:} $\abs{\stagelarge 0} > \param^3\cdot \weight{\stage{0}}$}{Case 2}}
We split our analysis in this case into two parts. We first analyze the number of bins used in the sampling stage $\stage{0}$ and then analyze the number of bins used in the remaining stages.

Using \cref{large-item-conc}, we obtain w.h.p. that $\abs{\stagelarge 0}\le \delta^2\expec{\abs{\largeitems}}+\expec{\weight{I}}^{2/3}$.
Hence,
\begin{align}
    \expec{\abs{\largeitems}}&\ge\frac{1}{\delta^2}\abs{\stagelarge 0}-\frac{1}{\delta^2}\expec{\weight{I}}^{2/3}\nonumber\\
        &\ge\delta\weight{\stage 0}-\frac{1}{\delta^2}\expec{\weight{I}}^{2/3}.
        \label{some-rand-eq}
\end{align}
From \cref{weight-conc}, since $\abs{T_0}=\delta^2\abs{I}$, we obtain that the inequality $\weight{T_0}\ge \delta^2\expec{\weight{I}}-\expec{\weight{I}}^{2/3}$
holds with high probability. Substituting this inequality in \cref{some-rand-eq}, we obtain that
\begin{align}
\expec{\abs{I_\ell}}&\ge\param^3\expec{\weight{I}} - {\left(\delta+\frac{1}{\delta^2}\right)}\expec{\weight{I}}^{2/3}.
\label{somerandeq-1}
\end{align}
For any $j\ge1$, using the fact that ${\abs{\stage j}}= \param^2\abs{I}$ and using \cref{large-item-conc}, we obtain
the inequality $\abs{\stagelarge j}\ge \delta^2\expec{\abs{\largeitems}}-\expec{\weight{I}}^{2/3}$
that holds with high probability. Substituting \cref{somerandeq-1} in this inequality, we obtain, \whp{},
\begin{align}
    \abs{\stagelarge j} \geq \param^5\expec{\weight{I}} - {({2+\delta^3})}\expec{\weight{I}}^{2/3}.\label{some-rand-eq-2}
\end{align}

Each of the \cref{somerandeq-1,some-rand-eq-2} holds with high probability.

Note that $\weight{I}\ge \opt(I)/2$. So from now on, we assume that there exist constants $C_1, C_2>0$ which depend on $\delta$ such that, w.h.p., both the
following inequalities hold.
\begin{align}\label{expec-large}
    \expec{\abs{\largeitems}} &\geq C_1\cdot\expec{\opt(I)}. \\
\label{stage-large}
    \abs{\stagelarge j} &\geq C_2\cdot\expec{\opt(I)}.
\end{align}

\begin{itemize}
\item \textbf{Analysis of the Sampling Stage:} Recall that the number of items considered in the sampling stage is $ \sample$. We will bound the number of large items and the weight of items in this stage using Bernstein's inequality.
\begin{enumerate}
\item Since sampling stage has $\param^2n$ items, $\expec{\abs{\stagelarge 0}} = \param^2\expec{\abs{\largeitems}}$. By applying Bernstein's inequality
{for $X_1,X_2,\dots,X_{\abs{\stage 0}}$ where $X_i$ takes value $1$ is $x_i$ is large and $0$ otherwise, }we get,
\begin{align*}
\prob{\abs{\stagelarge 0} \geq 2\param^2\expec{\abs\largeitems}} & = \prob{\abs{\stagelarge 0} \geq\expec{\abs{\stagelarge 0}} + \param^2\expec{\abs\largeitems}} \\
& \leq 2\exp\left(-\frac{\param^4\expec{\abs\largeitems}^2}{2\expec{{\abs{\stagelarge 0}}} + \frac{2}{3}\param^2\expec{\abs\largeitems}}\right)\\
&\leq 2\exp\left(-\frac{1}{3}\param^2\expec{\abs\largeitems}\right) \\
& \leq 2\exp\left(-a_2\cdot\expec{\opt(I)}\right) \tag{from \cref{expec-large}}
\end{align*}
for some constant $a_2>0$.
So, with high probability, 
\begin{align}
\abs{\stagelarge 0}&\leq 2\param^2\expec{\abs{\largeitems}}\nonumber\\
    &\leq 2\param\expec{\opt(I)}.
    \label{not-rand-eq-1}
\end{align}
\item Similarly, $\expec{\weight{\stage 0}} = \param^2\expec{\weight{I}}$. By applying Bernstein's inequality
{for $X_1,X_2,\dots,X_{\abs{\stage 0}}$ where $X_i$ takes value $x_i$,} we get,
\begin{align*}
\prob{\weight{\stage 0} \geq 2\param^2\expec{\weight{I}}} & = \prob{\weight{\stage 0} \geq\expec{\weight{\stage 0}} + \param^2\expec{\weight{I}}} \\
& \leq 2\exp\left(\frac{-\param^4\expec{\weight{I}}^2}{2\param^2\expec{{\weight{\stage 0}}} + \frac{2}{3}\param^2\expec{\weight{I}}}\right) \\
& \leq 2\exp\left( \frac{-\param^2}{3}\expec{\weight{I}}\right)
\leq 2\exp\left(\frac{-\delta^2\expec{\opt(I)}}{6}\right).
\end{align*}
So, with high probability we have, 
\begin{align}
\weight{\stage 0} &\leq 2\param^2\expec{\weight{I}}\nonumber\\
    &\leq 2\param^2\expec{\opt(I)}.
    \label{not-rand-eq-2}
\end{align}
\end{enumerate}
Among the $\Nf(T_0)$ number of bins used by \nextfit{} to pack the sampling stage, the number of bins that contain
at least one large item is at most $\abs{L_0}$. On the other hand, each bin (with at most one exception) that contains only small items is filled up to
a volume of at least $(1-\delta)$. Hence,
\begin{align}
    \Nf(\stage 0) &\leq \abs{L_0}+\frac{1}{1-\delta}\weight{S_0}+1\nonumber\\
                &\leq \abs{L_0}+\frac{1}{1-\delta}\weight{T_0}+1.\nonumber
\end{align}
Substituting \cref{not-rand-eq-1,not-rand-eq-2} in the above inequality, we obtain that the following inequality holds with high probability.
\begin{align}
     \Nf(T_0)&\le2\param\expec{\opt(I)} + \frac{2\param^2}{1-\param}\expec{{\opt(I)}}\nonumber\\
                & \leq 4\param\expec{\opt(I)}.
    \label{sampling-stage}
\end{align}

\item \textbf{Analysis of the Remaining Stages:}
Recall that we pack each of the remaining stages using the blueprint packing $\mathcal A_\alpha(T_0)$.
The analysis of the last stage $T_{m-1}$ is slightly different as it is possible that $\abs{T_{m-1}}<\delta^2n=\abs{T_0}$.
Hence we analyze stages $T_1,T_2,\dots,T_{m-2}$ for now. At the end, we analyze the stage $T_{m-1}$.
Consider any $j\in[m-2]$.
\cref{lem:blueprint} tells us that $\bluep(T_j,\mathcal A_\alpha, T_{j-1})$,
the number of bins required to pack $T_j$, is bounded, \whp{}, as follows.
\begin{align*}
\bluep(T_j,\mathcal A_\alpha, T_{j-1})\le \alpha(1+4\delta)\expec{\opt(T_j)}+o\left(\expec{\opt(T_j)}\right).
\end{align*}
Now, we use \cref{lem:decomposition} to obtain the following inequality that holds with high probability.
\begin{align}
\bluep(T_j,\mathcal A_\alpha, T_{j-1})&\le \alpha(1+4\delta)(1+2\delta)\frac{\abs{T_j}}{n}\expec{\opt(I)}+o\left(\expec{\opt(T_j)}\right)\nonumber\\
                        &\le \alpha(1+4\delta)(1+2\delta)\frac{\abs{T_j}}{n}\expec{\opt(I)}+o\left(\expec{\opt(I)}\right)\nonumber\\
                        &\le \alpha(1+15\delta)\frac{\abs{T_j}}{n}\expec{\opt(I)}+o\left(\expec{\opt(I)}\right).\nonumber
\end{align}
Since the above inequality holds with high probability and since there are only constant number of stages $m$, by union bound,
the inequality obtained by summing the above inequality over all $j\in[m-2]$ also holds with high probability. In other words,
with high probability, we obtain that
\begin{align}
\sum_{j\in[m-2]}\bluep(T_j,\mathcal A_\alpha, T_{j-1})&\le\alpha(1+15\delta)\frac{\sum_{j=1}^{m-2}\abs{T_j}}{n}\expec{\opt(I)}+o\left(\expec{\opt(I)}\right)\nonumber\\
                    &\le\alpha(1+15\delta)\frac{\sum_{j=0}^{m-1}\abs{T_j}}{n}\expec{\opt(I)}+o\left(\expec{\opt(I)}\right)\nonumber\\
                    &=\alpha(1+15\delta)\expec{\opt(I)}+o\left(\expec{\opt(I)}\right).
\label{eq:rem-stages-except-last}
\end{align} 
Now, we analyze the last stage $T_{m-1}$. Recall that \cref{lem:blueprint} required the condition that
$\abs{J_1}=\abs{J_2}$. Since $T_{m-1}$ can have fewer items than stage $T_{m-2}$,
we cannot use \cref{lem:blueprint} directly. However, there is a simple workaround. Assume that, after the stage $T_{m-1}$ ends,
we sample $\delta n-\abs{T_{m-1}}$ \kvn{number of extra items.} Let $E$ denote this list of extra items. Let $T'_{m-1}$ be the list of items
$T_{m-1}$ appended with the list $E$. Now, we can use \cref{lem:blueprint} if we pack $T'_{m-1}$ 
using the blueprint packing $\mathcal A_\alpha(T_{m-1})$ since $\abs{T_{m-1}}=\abs{T'_{m-1}}$. Since $T_{m-1}$ is only a 
prefix of $T'_{m-1}$, we have that
\begin{align}
\bluep(T_{m-1},\mathcal A_\alpha,T_{m-2})&\le \bluep(T'_{m-1},\mathcal A_\alpha,T_{m-2})\nonumber\\
                            &\le \alpha(1+4\delta)\expec{\Opt(T'_{m-1})}+o(\expec{\Opt(T'_{m-1})})\nonumber\\
                            &= \alpha(1+4\delta)\expec{\Opt(T_{m-2})}+o(\expec{\Opt(T_{m-2})})\nonumber\\
                            &\le\alpha(1+4\delta)(1+2\delta)\frac{\abs{T_{m-2}}}{n}\expec{\Opt(I)}+o(\expec{\Opt(I)})\nonumber\tag{using \cref{lem:decomposition}}\\
                            &=\alpha(1+4\delta)(1+2\delta)\delta^2\expec{\Opt(I)}+o(\expec{\Opt(I)})\nonumber\\
                            &\le2\alpha\delta^2\expec{\Opt(I)}+o(\expec{\Opt(I)})
    \label{eq:last-stage-analyze}
\end{align}
Adding up \cref{eq:rem-stages-except-last} and \cref{eq:last-stage-analyze} completes the analysis of the remaining stages.
\begin{align}
\sum_{j\in[m-1]}\bluep(T_j,\mathcal A_\alpha, T_{j-1})&\le \alpha(1+15\delta+2\delta^2)\expec{\Opt(I)}+o(\expec{\Opt(I)})\nonumber\\
                                                &\le \alpha(1+16\delta)\expec{\Opt(I)}+o(\expec{\Opt(I)})
\label{eq:remaining-stages}
\end{align}

\end{itemize}
We now combine the analyses of the sampling stage and the remaining stages.
For the sampling stage, from \cref{sampling-stage}, we have $\Nf(\stage 0)\leq 4\param\expec{\opt(I)}$ with high probability. 
For all the remaining stages, \cref{eq:remaining-stages}, gives the upper bound on the number of bins used.
Combining both the results, we get an upper bound on $\ALG(I)$ that holds \whp{}
\begin{align}
\ALG(I)&\leq \alpha(1+20\param)\expec{\opt(I)}+ o(\expec{\opt(I)})\nonumber\\
          &  \leq \alpha(1+\eps)\expec{\opt(I)}+ o(\expec{\opt(I)}).\label{high-prob-bound}
\end{align}

In the low probability event when \cref{high-prob-bound} may not hold,
we can bound $\ALG(I)$ as follows. In the sampling stage,
we have that $\Nf(\stage 0)\le 2\opt(I)-1$.
For the remaining stages, we bound the number of bins containing at least one large item and the number of
bins containing only small items.
{Each large item is considered at most twice in the packing: first--when it arrives in the input list, second--when it is in the sampling stage and thus takes the role of a proxy item.}
So, the number of bins containing at least one large item is at most $2\abs{\largeitems}$.
In each stage, with one possible exception, every bin opened which has only small items has an occupancy of at least
$(1-\delta)$. Combining over all the stages, the number of bins which contain only small items is at most
$\frac{1}{1-\param}\weight{\smallitems}+m$. Thus,
we can bound the total number of bins used by $\ALG$ to be at most
$2\opt(I)+2\abs{\largeitems}+\frac{1}{1-\param}\weight{\smallitems}+m$.
On the other hand, we know that $\opt(I)\ge\weight{I}\ge\delta\abs{\largeitems}+\weight{\smallitems}$.
Hence, we obtain that $2\abs{\largeitems}+\frac{1}{1-\param}\weight{\smallitems}\le\frac{2}{\param(1-\param)}\opt(I)$.
Combining all these, we obtain that
\begin{align}
\ALG(I)\le\left(2+\frac{2}{\delta(1-\delta)}\right)\opt(I)+m\label{low-prob-bound}
\end{align}
Now, to obtain the competitive ratio, suppose \cref{high-prob-bound}
holds with probability $p$ $(=1-o(1))$. We combine \cref{high-prob-bound,low-prob-bound} similar to
the case when $\abs{\stagelarge 0} \le \param^3\cdot \weight{\stage{0}}$.
\begin{align*}
\expec{\ALG(I)}&\leq p\Big(\alpha(1+\eps)\expec{\opt(I)}+o(\expec{\opt(I)})\Big)\\
                &\qquad+(1-p)\left(\left(2+\frac{2}{\delta(1-\delta)}\right)\expec{\opt(I)}+m\right)\\
&\le\alpha(1+\epsilon)\expec{\opt(I)}+o(\expec{\opt(I)}).\tag{since $1-p=o(1)$ and $\delta\le \eps/16$}
\end{align*}
Scaling the initial value $\epsilon$ to $\epsilon/\alpha$ before the start of the algorithm,
we obtain a competitive ratio of $\alpha+\epsilon$.

\subsection{Getting Rid of the Assumption on the Knowledge of the Input Size}
\label{sec:get-rid}
In this subsection, we will extend $\ALG$
to devise an algorithm for online bin packing with \iid{} items
that guarantees essentially the same competitive
ratio as $\ALG$ without knowing the value of $n$. We denote this algorithm by $\IMPALG$.
\label{sec:imp-algorithm}

{Let $\mu:=\delta^2$}.
We first guess the value of $n$ to be a constant $\guess{0}:=1/\delta^3$.
Then, we run $\ALG$ until $\min\{n,\guess{0}\}$ items arrive (here, if $\min\{n,\guess{0}\}=n$,
then it means that the input stream has ended {before $n_0$ items have arrived}).
If $n>n_0$, i.e., if there are more items to arrive, then we revise our estimate of
$n$ by {multiplying it with $(1+\mu)$}, i.e., the new value of $n$ is set as
{$\guess{1}:=(1+\mu)\guess{0}$}. We start $\ALG$ afresh for the items with indices $n_0+1,n_0+2,\dots,\min\{n_1,n\}$.
{If $n>(1+\mu)n_0$, then we set the new guess of $n$ to be $n_2:=(1+\mu)n_1=(1+\mu)^2n_0$} and start $\ALG$
afresh on the items with indices $n_1+1,n_1+2,\dots,\min\{n_2,n\}$. We continue this process of {multiplying our estimate
of $n$ with $(1+\mu)$} until all the items arrive. See \cref{img:doubling} for an illustration.
The pseudocode is provided in \cref{algo-doubling}.
\begin{figure}[H]
\includesvg[width=\linewidth]{img/doubling}
\caption{
    {The division of input into super-stages to get rid of the assumption on the knowledge of $n$.
    The $(j+1)\Th$ super-stage is denoted by $\Gamma_j$.
    Super-stage $\Gamma_0$ contains $n_0=1/\delta^3$ items, $\Gamma_0\cup \Gamma_1$ contains $(1+\mu)n_0$ items,
    $\Gamma_0\cup \Gamma_1\cup \Gamma_2$ contains $(1+\mu)^2n_0$ items and so on.
    The last super-stage may not be full, but since it is very small in size compared
    to the entire input, it doesn't affect the performance of the algorithm.}
}
\label{img:doubling}
\end{figure}

\begin{algorithm}
\caption{$\IMPALG$: Improving $\ALG$ to get rid of the assumption on the knowledge of the number of items}
\begin{algorithmic}[1]
\State \textbf{Input:} $I_n(\mathcal{D})=\{x_1,x_2,...,x_n\}$.
\State $n_{-1}\leftarrow0;n_0\leftarrow\frac{1}{\delta^3}; \mathrm{for\:} j\ge 1, n_j={(1+\mu)}n_{j-1}$
\State $i\leftarrow 0$
\While{true}
    \State Run $\ALG\left(x_{n_{i-1}+1},x_{n_{i-1}+2},\dots,x_{\min\{n_i,n\}}\right)$ with one change that instead of packing the small items in the $S$-slots created for a stage, we maintain a global set of $S$-slots.
    \If{the input stream has ended}
        \State \textbf{return} the packing
    \Else
        \State $i\leftarrow i+1$
    \EndIf
\EndWhile
\end{algorithmic}
\label{algo-doubling}
\end{algorithm}

We consider the following partition of the entire input into super-stages as follows: The first super-stage, $\superstage{0}$,
contains the first $n_0$ items. The second super-stage, $\superstage{1}$, contains the next $n_1-n_0$
items. In general, for $i>0$, the $(i+1)\Th$ super-stage, $\superstage{i}$, contains $\min\{n_i,n\}-n_{i-1}$ items which are
given by $x_{n_{i-1}+1},x_{n_{i-1}+2},\dots,x_{\min\{n,n_i\}}$.
So, essentially, $\IMPALG$ can be thought of running $\ALG$ on each super-stage separately.
The number of super-stages is given by {$\kappa:=\ceil{\log_{(1+\mu)}(n/n_0)}$.}

{
    Note that $\abs{\superstage 0}=n_0$, $\abs{\superstage 1}=\mu n_0$, $\abs{\superstage 2}=\mu(1+\mu)n_0$ and so on.
    In general, for $j\in[\kappa-2]$, $\abs{\superstage j}=\mu(1+\mu)^{j-1}n_0$. Now consider the last super-stage $\superstage{\kappa-1}$
    and note that it may not be full, i.e., it can be the case that $\abs{\superstage{\kappa-1}}<\mu(1+\mu)^{\kappa-2}n_0$. So, $\ALG$
    may pack the last super-stage inefficiently.
    However, note that $\abs{\superstage{\kappa-1}}$ can be at most $\mu(1+\mu)^{\kappa-2}n_0\le \mu n$.
    Hence the last super-stage contains only a tiny fraction of the input and so, this will have very little effect on the
    packing of the entire input.
}

\subsubsection{Analysis}
When $n$ was known we only had $O(1)$ number of stages.
However, now we have $\kappa=\ceil{\log_{(1+\mu)}(n/n_0)}$ number of super-stages.
There can arise two problems:
\begin{itemize}
\item Recall that when $n$ was known, we derived a performance guarantee for each stage individually, that holds \whp{},
and used a union bound to combine all the stages.
However, here, since the number of super-stages is a super-constant,
we cannot hope for high probability guarantees using a union bound.
So, we consider the first $\kappa_1:=\ceil{\log_{(1+\mu)}(\delta^7 n)}$ 
number of the super-stages at a time. We show that these initial super-stages contain only a small fraction
of the entire input. Each of the final $(\kappa-\kappa_1)$ super-stages can be individually analyzed using the analysis of $\ALG$.
\item For each super-stage, we can have a constant number of $S$-bins (bins which contain only small items) with less occupancy. However, since
the number of super-stages itself is a super-constant, this can result in a lot of wasted space.
For this, we exploit the monotonicity of \nextfit{} to ensure that we can pack small items from a super-stage into empty slots for small items from the previous stages.
\end{itemize}
{We will now proceed to analyze $\IMPALG$.}
Recall that the number of super-stages
is given by {$\kappa=\ceil{\log_{(1+\mu)}(n/n_0)}$}
where $n_0$ was defined to be $1/\delta^3$. We will split the analysis into two parts. First,
we will analyze the number of bins used by our algorithm in the first {$\kappa_1=\ceil{\log_{(1+\mu)}(\delta^7 n)}$}
super-stages {as a whole}. Then, we will analyze
the final $\kappa_2:=\kappa-\kappa_1$ super-stages considering {each} one at a time.
We call the first $\kappa_1$ super-stages as \emph{initial super-stages} and the remaining
$\kappa_2$ super-stages as \emph{final super-stages}.

\noindent\textbf{Analysis of the initial super-stages: }The basic intuition of
the analysis of our algorithm in the initial
super-stages is as follows: Since only a small fraction of the entire input is present in these $\kappa_1$
super-stages, our algorithm uses only a small fraction of bins compared to the optimal packing of the
entire input. So, we bound the number of large items and the weight of small items and thus bound the
number of bins used.
\begin{lemma}
\label{small-fraction-input}
Let $\mathcal A$ be an algorithm for online bin packing such that in the packing output by $\mathcal A$,
every bin, except possibly a constant number of bins $\tau$,
which contains only small items has an occupancy of at least $(1-\delta)$.
Let $I$ be a sequence of $n$ \iid{} items and let $J$ be {any contiguous subsequence} of $I$ having size $\beta n$ $(0<\beta\le1)$.
Suppose $\mathcal A$ works in a way such that it packs at most $\nu$ copies of any large item where $\nu$
is a constant.\footnote{For example, $\ALG$ does this. {For each stage, it computes the proxy packing of the previous stage.
Hence, every large item is potentially packed twice.}
In the worst case, these copies may not get replaced,
thus resulting in wasted space.}
Then the following inequality holds with high probability.
\begin{align*}
\mathcal A(J)\le \frac{2\beta\nu}{\delta(1-\delta)}\expec{\opt(I)}+o(\expec{\opt(I)})
\end{align*}
\end{lemma}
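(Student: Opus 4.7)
The plan is to partition the bins output by $\mathcal{A}$ on $J$ into two classes --- those containing at least one large item, and those containing only small items --- and bound each class separately in terms of $\expec{\opt(I)}$ using the concentration tools already developed in this section.

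For bins containing at least one large item: since $\mathcal{A}$ packs at most $\nu$ copies of any large item, and each such bin must contain at least one (possibly duplicate) copy of a large item from $J_\ell$, their count is at most $\nu|J_\ell|$. For bins containing only small items: by hypothesis, all but at most $\mu$ of them are filled to level at least $(1-\delta)$, so their count is at most $\weight{J_s}/(1-\delta) + \mu$. Combining,
\[
\mathcal{A}(J) \;\le\; \nu|J_\ell| + \frac{\weight{J_s}}{1-\delta} + \mu.
\]

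Next I would transport $|J_\ell|$ and $\weight{J_s}$ from the prefix $J$ (of size $\beta n$) to the full input $I$. \cref{large-item-conc} applied to the subset $J \subseteq I$ yields $|J_\ell| \le \beta \expec{|\largeitems|} + \expec{\weight{I}}^{2/3}$ with high probability, and \cref{weight-conc} yields $\weight{J_s} \le \weight{J} \le \beta\expec{\weight{I}} + \expec{\weight{I}}^{2/3}$ with high probability. Using the trivial inequalities $\expec{\weight{I}} \le \expec{\opt(I)}$ (each bin has unit capacity) and $\delta\expec{|\largeitems|} \le \expec{\weight{I}}$ (large items have size at least $\delta$), the dominant terms become $|J_\ell| \le \beta\expec{\opt(I)}/\delta + o(\expec{\opt(I)})$ and $\weight{J_s} \le \beta\expec{\opt(I)} + o(\expec{\opt(I)})$, with the additive error $\expec{\weight{I}}^{2/3}$ and the constant $\mu$ absorbed into the $o(\expec{\opt(I)})$ slack (legitimate whenever $\expec{\opt(I)} \to \infty$, the regime in which the lemma is applied).

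Plugging back into the displayed bound gives $\mathcal{A}(J) \le \bigl(\beta\nu/\delta + \beta/(1-\delta)\bigr)\expec{\opt(I)} + o(\expec{\opt(I)})$, and a routine estimate using $\nu \ge 1$ together with $\delta < 1/2$ yields $\nu/\delta + 1/(1-\delta) \le 2\nu/\bigl(\delta(1-\delta)\bigr)$, matching the claimed bound. The only non-trivial bookkeeping is to take a union bound over the two concentration events from \cref{large-item-conc,weight-conc}; I do not anticipate any substantial obstacle beyond this, since the structural content of the lemma --- decomposing bins by item type and applying independence-based concentration --- is already entirely captured by the helper lemmas proved earlier.
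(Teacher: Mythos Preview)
Your proposal is correct and in fact cleaner than the paper's own argument. You use exactly the same starting decomposition
\[
\mathcal A(J)\le \nu\abs{J_\ell}+\frac{\weight{J_s}}{1-\delta}+\mu,
\]
but from there you proceed in a single stroke, applying the additive concentration lemmas (\cref{large-item-conc,weight-conc}) directly to $\abs{J_\ell}$ and $\weight{J}$ and then invoking $\delta\expec{\abs{\largeitems}}\le\expec{\weight I}\le\expec{\opt(I)}$. The paper instead performs a two-case split according to whether $\abs{J_\ell}\le\delta\,\opt(J)$ or not: in the first case it bounds everything by $\opt(J)$ and invokes \cref{decomposition-lemma}; in the second case it first argues that $\expec{\abs{\largeitems}}\ge a_1\expec{\opt(I)}$ and then applies Bernstein to obtain \emph{multiplicative} concentration $\abs{J_\ell}\le 2\beta\expec{\abs{\largeitems}}$ and $\weight{J_s}\le 2\beta\expec{\weight I}$. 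Both routes yield the same headline bound with the same quality of ``with high probability'' (failure probability $\exp(-\Omega(\expec{\opt(I)}^{1/3}))$), so the case analysis buys nothing that your direct additive approach does not already deliver; your argument is simply more economical. The only point worth making explicit in your write-up is that the error term $\expec{\weight I}^{2/3}$ gets multiplied by $\nu$ (a constant) when it passes through the $\nu\abs{J_\ell}$ term, but this is still $o(\expec{\opt(I)})$, and your final numerical check $\nu/\delta+1/(1-\delta)\le 2\nu/(\delta(1-\delta))$ is correct whenever $\nu\ge 1$.
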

\begin{proof}
Let $J_\ell,I_\ell$ respectively denote the set of large items in $J,I$ and let $J_s,I_s$ respectively denote the set of
small items in $J,I$. We prove the lemma considering two cases.

First, we consider the case when $\abs{J_\ell}\le\delta \opt(J)$. The number of bins in the packing of $J$ by $\mathcal A$
which contain at least one large item is upper bounded by $\nu\abs{J_\ell}$. The number of bins which contain only small items
is upper bounded by $\weight{J_s}/(1-\delta)+\tau$. Hence,
\begin{align*}
\mathcal A(J)&\le\nu\abs{J_\ell}+\frac{\weight{J_s}}{1-\delta}+\tau\\
			&\le\nu\delta\opt(J)+\frac{\opt(J)}{1-\delta}+\tau\\
			&=\frac{1+\nu(1-\delta)\delta}{1-\delta}\opt(J)+\tau\\
			&\le2\nu(1+\delta)\opt(J)+\tau\tag{since $0<\delta<1/2$ and $\nu\ge1$}\\
			&\le2\beta\nu(1+\delta)(1+2\delta)\expec{\opt(I)}+C_0\expec{\opt(I)}^{2/3}\qquad\textrm{w.h.p. for some constant $C_0$}\tag{using \cref{lem:decomposition}}\\
			&\le \frac{2\beta\nu}{\delta(1-\delta)}\expec{\opt(I)}+o(\expec{\opt(I)})\tag{since $\delta<1/2$, $1+\delta<\frac{1}{1-\delta}$ and $1+2\delta<\frac{1}{\delta}$}
\end{align*}
Next, we consider the case when $\abs{J_\ell}>\delta\opt(J)$. We will again bound the quantity $\nu\abs{J_\ell}+\frac{\weight{J_s}}{1-\delta}+\tau$.
To bound $\abs{J_\ell}$, we use Bernstein's inequality (\cref{bernstein}). Let $X_1,X_2,\dots,X_{\abs{J}}$ be random variables
where $X_i$ takes value $1$ if the $i\Th$ item in $J$ is large and $0$ otherwise. Then, clearly, $\abs{J_\ell}=\sum_{i=1}^{\abs J}X_j$.
Also, note that $\expec{\abs{J_\ell}}=\beta\expec{\abs{I_\ell}}$. Moreover,
we can derive the following inequalities that hold with high probability.
\begin{align*}
\expec{\abs{I_\ell}}&\ge\frac{1}{\beta}\abs{J_\ell}-\frac{1}{\beta}\expec{\weight{I}}^{2/3}\tag{using \cref{large-item-conc}}\\
		&\ge\frac{\delta}{\beta}\opt(J)-\frac{1}{\beta}\expec{\weight{I}}^{2/3}\\
		&\ge\frac{\delta}{\beta}\weight{J}-\frac{1}{\beta}\expec{\weight{I}}^{2/3}\\
		&\ge\frac{\delta}{\beta}\left(\beta\expec{\weight{I}}-\expec{\weight{I}}^{2/3}\right)-\frac{1}{\beta}\expec{\weight{I}}^{2/3}\tag{using \cref{weight-conc}}\\
		&\ge\frac{\delta}{2}\expec{\opt(I)}-o(\expec{\opt(I)})\tag{since $\weight{I}\le\opt(I)\le2\weight{I}$}
\end{align*}
Hence, from now on, we will assume that there exists a constant $a_1>0$ such that
\begin{align}
\expec{\abs{I_\ell}}\ge a_1\expec{\opt(I)}\label{some-random-eq}
\end{align}
{holds with high probability.}
Using Bernstein's inequality (\cref{bernstein}),
\begin{align*}
\prob{\abs{J_\ell} \geq 2\beta\expec{\abs\largeitems}} & = \prob{\abs{J_\ell} \geq\expec{\abs{J_\ell}} + \beta\expec{I_\ell}} \\
& \leq 2\exp\left(-\frac{\beta^2\expec{\abs\largeitems}^2}{2\expec{{\abs{J_\ell}}} + \frac{2}{3}\beta\expec{\abs\largeitems}}\right)\\
&= 2\exp\left(-\frac{3}{8}\beta\expec{\abs\largeitems}\right) \\
& \leq 2\exp\left(-\frac{3}{8}\beta a_1\cdot\expec{\opt(I)}\right) \tag{from \cref{some-random-eq}}
\end{align*}
Now to bound $\weight{J_s}$, we will again use Bernstein's inequality on a new set of random variables $X_1,X_2,\dots,X_{\abs{J}}$
where $X_i$ equals the weight of the $i\Th$ item in $J$ if it is small and $0$ otherwise. Clearly, $\sum_{i=1}^{\abs J}X_i=\weight{J_s}$
and $\expec{\weight{J_s}}=\beta\expec{\weight{I_s}}\le\beta\expec{\weight{I}}$. Thus,
\begin{align*}
\prob{\weight{J_s} \geq 2\beta\expec{\weight{I}}}  &\leq \prob{\weight{J_s} \geq\expec{\weight{J_s}} + \beta\expec{\weight{I}}} \\
& \leq 2\exp\left(\frac{-\beta^2\expec{\weight{I}}^2}{2\expec{{\weight{J_s}}} + \frac{2}{3}\beta\expec{\weight{I}}}\right)\\
 &= 2\exp\left( \frac{-3\beta}{8}\expec{\weight{I}}\right)\\
 &\leq 2\exp\left(\frac{-3\beta\expec{\opt(I)}}{16}\right)\tag{since $2\weight{I}\ge\opt(I)$}
\end{align*}
Thus, with high probability, we have that $\abs{J_\ell}\le2\beta\expec{\abs{I_\ell}}$ and $\weight{J_s}\le2\beta\expec{\weight{I}}$.
Hence,
\begin{align*}
\mathcal A(J)&\le\nu\abs{J_\ell}+\frac{\weight{J_s}}{1-\delta}+\tau\\
			&\le2\nu\beta\expec{\abs{I_\ell}}+\frac{2\beta\expec{\weight{I}}}{1-\delta}+\tau\\
			&\le\frac{2\nu\beta}{\delta}\expec{\opt(I)}+\frac{2\beta\expec{\opt(I)}}{1-\delta}+\tau\\
			&= \frac{2\nu\beta}{\delta(1-\delta)}\expec{\opt(I)}+o(\expec{\opt(I)})
\end{align*}
This completes the proof.
\end{proof}
With the help of the above lemma, we proceed to analyze $\IMPALG$ for the initial (first $\kappa_1$) super-stages.
Note that the number of items in the initial super-stages is given by $(1+\eps)^{\kappa_1-1}n_0\le(\delta^7 n)n_0=\delta^4n$.
Since $\ALG$ satisfies the properties of $\mathcal A$ in \cref{small-fraction-input} and $\IMPALG$ just applies $\ALG$
multiple times, we can use \cref{small-fraction-input} to analyze $\IMPALG$.

There is one difficulty though. Let's call a bin which contains only small items to be an $S$-bin.
Although in a super-stage the number of $S$-bins not filled up to a level of at least $(1-\delta)$
is a constant, the number of initial super-stages itself is not a constant. We can work around this problem
by continuing {(Next-Fit)} NF to pack the small items in the $S$-slots created during the previous stages and the previous
super-stages as well. In other words, instead of packing the small items of a stage in $S$-slots created
only during that stage, we keep a global set of $S$-slots and pack the small items {in them} using NF. We now have to show that
our algorithm doesn't increase the number of bins with this change. Since the way in which the large items are packed
hasn't changed, the number of bins that contain large items does not change. Since NF is monotone
(even with varying bin sizes, \cite{DBLP:journals/dam/Murgolo88}), the number of bins containing only the small items will either
decrease or stays the same.

Using \cref{small-fraction-input}
with {$\beta=\delta^4$ and $\nu=2$ (since in each super-stage, a large item is packed at most twice -- once when it arrives as a real item, and once when it is used as a proxy item)},
we obtain that with high probability,
{
\begin{align}
\ALG(\Gamma_0)+\ALG(\Gamma_1)+\dots+\ALG(\Gamma_{\kappa_1-1})
	&\le\frac{2(2)\delta^4}{\delta(1-\delta)}\expec{\opt(I)}+o(\expec{\opt(I)})\nonumber\\
	&\le8\delta^3\expec{\opt(I)}+o(\expec{\opt(I)})\nonumber\\
	&\le8\delta\expec{\opt(I)}+o(\expec{\opt(I)})
\label{initial-super-stages}
\end{align}
}
\noindent\textbf{Analysis of the final super-stages: }For this part, the important thing
to note is that {$\kappa_2\le\ceil{\log_{(1+\mu)}(1/(\delta^7 n_0))}$} is a constant. Moreover,
since the number of items in the initial super-stages is at least {$\delta^4n/(1+\mu)$},
each of the final super-stages has at least {$\mu\delta^4n/(1+\mu)$} items
(which tends to infinity in the limiting case). Thus, we can use the analysis of $\ALG$
for each of these final super-stages.
As mentioned, since the last super-stage
might not be full, we analyze the last super-stage differently.
All the other super-stages are full. So, we can directly use the analysis of $\ALG$.
For all $\kappa_1\le i<\kappa-1$,
from the analysis of $\ALG$ (\cref{high-prob-bound}), we have that with high probability
\begin{align*}
\ALG(\Gamma_i)\le\alpha(1+8\delta)\expec{\opt(\Gamma_{i})}+C\expec{\opt(\Gamma_{i})}^{2/3}+o(\opt(\Gamma_{i}))
\end{align*}
for some constant $C$.

Now, to bound $\opt(\Gamma_{i})$ in terms of $\opt(I)$, we can use \cref{lem:decomposition}. Thus, the above
inequality transforms into
\begin{align*}
\ALG(\Gamma_i)&\le\alpha(1+8\delta)(1+2\delta)\frac{\abs{\Gamma_{i}}}{n}\expec{\opt(I)}+C_1\expec{\opt(\Gamma_i)}^{2/3}+o(\opt(\Gamma_i))\\
&\le\alpha(1+8\delta)(1+2\delta)\frac{\abs{\Gamma_{i}}}{n}\expec{\opt(I)}+C_1\expec{\opt(I)}^{2/3}+o(\opt(I))
\end{align*}
for some constant $C_1$. Summing over all $i$ $(\kappa_1\le i<\kappa-1)$, and observing that $\kappa-\kappa_1$
is a constant, we obtain that with high probability,
\begin{align}
&\ALG(\Gamma_{\kappa_1})+\ALG(\Gamma_{\kappa_1+1})+\dots+\ALG(\Gamma_{\kappa-2})\nonumber\\
\le&\alpha(1+8\delta)(1+2\delta)\sum_{i=\kappa_1}^{\kappa-2}\frac{\abs{\Gamma_{i}}}{n}\expec{\opt(I)}+C_2\expec{\opt(I)}^{2/3}+o(\opt(I))\nonumber\\
\le&\alpha(1+26\delta)\sum_{i=\kappa_1}^{\kappa-2}\frac{\abs{\Gamma_{i}}}{n}\expec{\opt(I)}+C_2\expec{\opt(I)}^{2/3}+o(\opt(I))\nonumber\\
\le&\alpha(1+26\delta)\expec{\opt(I)}+C_2\expec{\opt(I)}^{2/3}+o(\opt(I))
\label{middle-super-stages}
\end{align}
for some constant $C_2$.

Now consider the last super-stage $\Gamma_{\kappa-1}$. If $\abs{\Gamma_{\kappa-1}}$ is exactly equal to $n_{\kappa-1}-n_{\kappa-2}$, then we can obtain the
same bound as \cref{middle-super-stages}.
However, this might not be the case.
{
	The last super-stage contains $n-n_{\kappa-2}$ items.
	\begin{align*}
		n-n_{\kappa-2}\le n_{\kappa-1}-n_{\kappa-2}=(1+\mu)^{\kappa-1}n_0-(1+\mu)^{\kappa-2}n_0
								&=\mu(1+\mu)^{\kappa-2}n_0\\
								&=\mu n_{\kappa-2}\\
								&\le\mu n
	\end{align*}
Hence, the size of the last super-stage is at most $\mu$ fraction of the entire input size.
We will again use \cref{small-fraction-input} with $\beta=\mu$ and $\nu=2$ for the
last super-stage. We thus obtain that
	\begin{align}
		\ALG(\superstage{\kappa-1})&\le \frac{(2)(2)(\abs{\superstage{\kappa-1}}/n)}{n\delta(1-\delta)}\expec{\opt(I)}+o(\expec{\opt(I)})\nonumber\\
							&\le \frac{4\mu}{\delta(1-\delta)}\expec{\opt(I)}+o(\expec{\opt(I)})\nonumber\\
							&\le 8\delta\expec{\opt(I)}+o(\expec{\opt(I)})\label{last-super-stage}
	\end{align}
	with high probability. The last inequality follows since $\mu=\delta^2$ and $\delta<1/2$.
}

Summing \cref{initial-super-stages,middle-super-stages,last-super-stage}, we obtain that
with high probability, for some constant $C_4$,
{
\begin{align}
\IMPALG\left(I\right)&=\sum_{j=0}^{\kappa_1-1}\ALG(\Gamma_j)+\sum_{j=\kappa_1}^{\kappa-2}\ALG(\Gamma_j)+\ALG(\Gamma_{\kappa-1})\nonumber\\
&\le (\alpha(1+26\delta)+16\delta)\expec{\opt(I)}+C_2\expec{\opt(I)}^{2/3}+o(\expec{\opt(I)})+o(\opt(I))\nonumber\\
&\le \alpha(1+42\delta)\expec{\opt(I)}+C_2\expec{\opt(I)}^{2/3}+o(\expec{\opt(I)})+o(\opt(I))
\label{absolute-final-eq}
\end{align}
}

\cref{absolute-final-eq} holds with high probability, say $p=1-o(1)$.
In the scenario where the low probability event occurs, we can bound the number of bins used by
$\IMPALG$ using \cref{small-fraction-input} with $\beta=1$ and {$\nu=2$:
\begin{align*}
\IMPALG(I)\le\frac{2\cdot 2}{\delta(1-\delta)}\expec{\opt(I)}+o(\expec{\opt(I)})
\end{align*}
}
Let $E$ be the event when \cref{absolute-final-eq} holds. Then
\begin{align*}
\expec{\IMPALG(I)}&=\expec{\IMPALG(I)|E}\prob{E}+\expec{\IMPALG(I)|\overline E}\prob{\overline E}\\
		&\le\Big(\left(\alpha\left(1+42\delta\right)\right)\expec{\opt(I)}+o(\expec{\opt(I)})\Big)(1-o(1))\\
			&\:\:\:\:\:\:+\left(\frac{4}{\delta(1-\delta)}\expec{\opt(I)}+1\right)o(1)\\
			&=\left(\alpha\left(1+42\delta\right)\right)\expec{\opt(I)}+o(\expec{\opt(I)}
\end{align*}
Choosing $\delta=\frac{\eps}{42\alpha}$ ensures that the competitive ratio of $\IMPALG$ is $\alpha+\eps$.

\section{\bestfit{} under the Random-Order Model}
In this section, we will prove \cref{thm:bfroa,thm:gt1by3}.
To recall, \cref{thm:bfroa} shows that \bestfit{} achieves a random-order ratio of exactly $1$
when all the item sizes are more than $1/3$, while \cref{thm:gt1by3} shows a random-order ratio of
$1.49107$ when the item sizes are in the range $(1/4,1/2]$.
\subsection{Performance of \bestfit{} when Item Sizes are Larger than \texorpdfstring{$1/3$}{1/3}}
\label{bfgt13}
Albers et al.~\cite{albers_et_al_MFCS} showed that the asymptotic random-order ratio of the \bestfit{} algorithm is
at most $1.25$ when all the item sizes are more than $1/3$. In this section, we improve it further and show that, \bestfit{} for this special case under the random-order model is nearly optimal.
We will use the upright matching result given by \cref{lem:upright-matching-roa}.
We restate the lemma again for convenience.
\uprightmatchingroa*
\begin{remark}
    \label{rem:upright-matching-roa}
    The above lemma can be looked at as follows. Consider an upright matching instance where the minus points are given by the set
    $P^-=\{(i,r_i)\}_{i\in[m]}$ and the plus points are given by the set $P^+=\{(m+i,s_i)\}_{i\in[m]}$. Now, if we randomly permute the
    $x$-coordinates of the points in $P^+\cup P^-$, then the number of unmatched points in a maximum upright matching is very low.
\end{remark}

We now proceed to prove \cref{thm:gt1by3}.
We first show that the Modified Best-Fit algorithm~\cite{coffman1993probabilistic} is nearly optimal using \cref{lem:upright-matching-roa}. 
The Modified Best-Fit (MBF) algorithm is the same as BF except that it closes a bin if it receives an item of size less than $1/2$. 
Hence, note that any bin packed by MBF can have at most two items.
Shor\cite{shor1986average} showed that MBF \emph{dominates} BF, i.e., for any instance $I$, $\Bf(I)\leq \mbf(I)$. 
MBF can be easily reduced to upright matching as follows.
Consider any bin packing instance $I=\{x_1,x_2,\dots,x_n\}$ and an arbitrary item $x_i\in I$.
\begin{itemize}
    \item If $x_i\le 1/2$, then we create a plus point $(i,1-x_i)$.
    \item If $x_i\ge 1/2$, then we create a minus point $(i,x_i)$.
\end{itemize}
Now, it can be seen that the maximum upright matching algorithm given by \cref{alg:uprightmatching} for the above instance exactly corresponds to
the MBF algorithm on $I$.
This is because any plus point $p^+$ corresponds to an item $x$ of size at most $1/2$ and any minus point $p^-$ corresponds to an item $y$
of size more than $1/2$, and $p^+$ is matched to $p^-$ by \cref{alg:uprightmatching} iff the item $x$ is packed on top of $y$
by MBF.

Call an item $x$ large ($L$) if $x>1/2$ and medium ($M$) if $x\in (1/3, 1/2]$. We define a bin as $LM$-bin if it contains one large item and one medium item. We use the following lemma which was proved in~\cite{albers_et_al_MFCS} using the monotonicity property of BF when all item sizes are more than $1/3$.
\begin{lemma}\label{LM-bins} \cite{albers_et_al_MFCS}
Consider the set of bin packing instances $\mathcal J$ where each instance $J\in\mathcal J$ only has large and medium items, and every bin in $\Opt(J)$ is an $LM$-bin.
If \bestfit{} has an AAR of $\alpha$ when restricted to the instances in $\mathcal J$, then it has an AAR of $\alpha$ for any list of items larger than $1/3$ as well.
\end{lemma}

Consider an input instance which has an optimal packing containing only $LM$-bins.
Consider the number of bins opened by MBF for such instances. Each large item definitely opens a new bin, and a medium item opens a new bin if and only if it
can not be placed along with a large item, i.e., it is ``unmatched''. So, the number of bins opened by MBF equals
(number of large items$+$number of unmatched medium items).
Now, we will prove our result.
\begin{theorem}
For any list $I$ of items larger than $1/3$, the asymptotic random order ratio $RR_{BF}^{\infty}=1$.
\end{theorem}
\begin{proof}
From Lemma~\ref{LM-bins}, it is enough to prove the theorem for any list $I$ for which $\opt(I)$ is only made up of $LM$-bins. So, we can assume that $I$ has $k$ large items and $k$ medium items where $\opt(I)=k$. Now consider the packing of $\mbf$ for a randomly permuted list $I_\sigma$.
We have, \[\mbf(I_{\sigma}) = (k + \text{number of unmatched medium items}).\]
We have already seen that MBF can be reduced to the maximum upright matching algorithm given by \cref{alg:uprightmatching}.
This, coupled with \cref{rem:upright-matching-roa}, tells us that the number of unmatched medium items is at most $O\left(\sqrt k\log^{3/4}k\right)$.
So, we have
\begin{align*}
\mbf(I_{\sigma}) &\leq k + O\left(\sqrt{k}(\log k)^{3/4}\right)\\
                &=\Opt(I)+o\left(\Opt(I)\right)
\end{align*}
with probability of at least $1-C\exp(-a(\log \opt(I))^{3/2})$ for some universal constants $a, C, K >0$. Since $\mbf$ dominates $\Bf$, we have
\[\prob{\Bf(I_{\sigma}) \leq \Opt(I)+o\left(\Opt(I)\right)} \geq 1-C\exp\left(-a\log^{3/2} \opt(I)\right).\]
In case the high probability event does not occur, we can use the bound of $\Bf(I_\sigma)\leq 1.7\opt(I)+2$.
Let $p:=C\exp(-a(\log \opt(I))^{3/2})$. Then
\begin{align*}
\expec{\Bf(I_{\sigma})}&\le p(1.7\expec{\opt(I)}+2)+(1-p)(\expec{\opt(I)}+o(\expec{\opt(I)}))\\
				&\le\expec{\opt(I)}+o(\expec{\opt(I)}).\tag{since $p=o(1)$}
\end{align*}
So, we get
\[
RR_{BF}^{\infty} = \limsup\limits_{k\rightarrow\infty}\left(\sup\limits_{I:\OPT(I)=k}(\mathbb{E}[\Bf(I_{\sigma})]/\OPT(I))\right) = 1.
\]
This completes the proof.
\end{proof}

\subsection{The $3$-Partition Problem under Random-Order Model}
\label{sec:roa}
In this section, we analyze the \bestfit{} algorithm under the random-order
model given that the item sizes lie in the range $(1/4,1/2]$, and thus prove Theorem \ref{thm:bfroa}.
We call an item \emph{small} if its size lies in the range $(1/4,1/3]$
and \emph{medium} if its size lies in the range $(1/3,1/2]$.
Let $I$ be the input list of items and let $n:=\abs I$.
Recall that given $\sigma$,
a uniform random permutation of $[n]$, $I_\sigma$ denotes the
list $I$ permuted according to $\sigma$.
We denote by $\opt(I_\sigma)$, the
number of bins used in the optimal packing of $I_\sigma$ and by $\Bf(I_\sigma)$,
the number of bins used by \bestfit{} to pack $I_\sigma$.
Note that $\opt(I_\sigma)=\opt(I)$.

If there exists a set of three small items in $I_\sigma$ such that they arrive
as three consecutive items, we call that set to be an \emph{$S$-triplet}.
We call a bin to be a $k$-bin if it contains
exactly $k$ items, for $k \in \{1,2,3\}$.
We sometimes refer to a bin by mentioning its contents more specifically as
follows: An $MS$-bin is a $2$-bin which contains a medium item and a small item.
Similarly, an $SSS$-bin is a $3$-bin which contains three small items.
Likewise, we can define an $M$-bin, $S$-bin, $MM$-bin, $SS$-bin, $MMS$-bin, and $MSS$-bin.

Since the item sizes lie in $(1/4,1/2]$, any bin in the optimal packing contains at most
three items. For the same reason, in the packing by \bestfit{}, every bin (with one possible exception)
contains at least two items. This trivially shows that the ECR of \bestfit{} is at most $3/2$.
To break the barrier of $3/2$, we use the following observations.
\begin{itemize}
    \item Any $3$-bin must contain a small item.
    \item So, if the optimal solution contains a lot of $3$-bins, then it means that the input set contains a
    lot of small items.
\end{itemize}

We will prove that if there exist many small items in the input, then with high probability,
in a random permutation of the input, there exist many disjoint $S$-triplets.
\begin{lemma}
\label{no_of_consecutive_triplets}
Let $m$ be the number of small items in the input set $I$,
and let $X_\sigma$ denote the maximum number of mutually disjoint $S$-triplets in $I_\sigma$.
Suppose $m\ge cn$ where {$c$ is a positive constant}, then the following statements hold true:
\begin{enumerate}
\item $\expec{X_\sigma} \ge m^3/(3n^2){\ge c^3n/3}$.
\item $X_\sigma\ge {c^3n/3-o(n)}$ with high probability.
\end{enumerate}
\end{lemma}
Then, we prove that \bestfit{} packs at least one small item from an $S$-triplet in a $3$-bin or
in an $SS$-bin.
\begin{lemma}
\label{triplets}
In the input sequence $I_\sigma$, let $\kappa$ denote the number of disjoint $S$-triplets.
Then, in the \bestfit{} packing of $I_\sigma$, there will be at least $\floor{\kappa/2}$
number of $3$-bins.
\end{lemma}
But the number of $SS$-bins in the final packing of \bestfit{} can be at most one.
So, we obtain that the number of $3$-bins in the \bestfit{} packing is significant.
With these arguments, the proof of \cref{thm:bfroa} follows. 
Now, we give the detailed proofs of the above two lemmas now.

\begin{proof}[Proof of \cref{no_of_consecutive_triplets}]
One can construct a random permutation of the input list
$I$ by first placing the small items in a
random order and then inserting the remaining items among the small items randomly.

After placing the small items, we have $m+1$ gaps to place the remaining items as shown below in the form of empty squares.
\begin{align*}
    \square \underbrace{S \square S\square S}_{\mathrm{Triplet}\:T_1}\square\underbrace{S \square S\square S}_{\mathrm{Triplet}\:T_2}\square\underbrace{S \square S\square S}_{\mathrm{Triplet}\:T_3}\cdots\underbrace{S \square S\square S}_{\mathrm{Triplet}\:T_{m/3}}\square
\end{align*}
As shown above, we name the $S$-triplets as $T_1,T_2,\dots,T_{m/3}$.
(Strictly speaking, the number of $S$-triplets should be $\floor{m/3}$,
but we relax this since we are only interested in the asymptotic case and  $m \ge cn$.)
Let us start inserting the medium items into these empty squares.
Consider any $S$-triplet $T_i$. The probability that $T_i$
continues to be an $S$-triplet after inserting the first medium
item is $(m-1)/(m+1)$. This is because among the $m+1$ squares, we have $m-1$ choices of squares
for the first medium item to be inserted in. The first medium item thus occupies one of the squares,
but in this process, it creates two new squares on either side of itself, thereby
increasing the net number of squares by one. Hence, the probability that $T_i$ continues
to be an $S$-triplet after inserting the second medium item (conditioned on it being a consecutive $S$-triplet after inserting the
first medium item) is $m/(m+2)$.
We continue this process of inserting the medium items and after they all have been inserted,
let $Y_\sigma^{(i)}$ be the indicator random variable denoting whether $T_i$ is consecutive or not. Then
\begin{align*}
    \expec{Y_\sigma^{(i)}}=\frac{m-1}{m+1} \frac{m}{m+2} \cdots \frac{n-2}{n}=\frac{m(m-1)}{n(n-1)}.
\end{align*}
Let $Y_\sigma=Y_\sigma^{(1)}+Y_\sigma^{(2)}+\dots+Y_\sigma^{(m/3)}$.
It can be seen that $Y_\sigma$ is a lower bound on $X_\sigma$ which was defined as the 
maximum number of mutually disjoint $S$-triplets in $I_\sigma$ in the lemma statement.
By linearity of expectations,
\begin{align*}
    \lim_{n,m\to\infty}\expec{Y_\sigma}=\lim_{n,m\to\infty}\frac{m}{3}\frac{m(m-1)}{n(n-1)}{\approx\frac{m^3}{3n^2}\ge\frac{c^3n}{3}}.
\end{align*}
Since $X_\sigma\ge Y_\sigma$, the first part of the lemma follows.

To prove the second part of the lemma, we will compute $\var{Y_\sigma}$
and use Chebyshev's inequality to show that $Y_\sigma$ is concentrated
around its mean. For any $i$, note that 
\begin{align}
\var{Y_\sigma^{(i)}}=\expec{Y_\sigma^{(i)}}-\expec{Y_\sigma^{(i)}}^2<1.
\label{variance}
\end{align}
This is because $Y_\sigma^{(i)}$ has expectation at most $1$.

Since $Y_\sigma$ is a sum of $Y_\sigma^{(i)}$-s, to calculate the variance of $Y_\sigma$, 
we also need to calculate
the covariance of each pair of random variables in the set $\left\{ Y_\sigma^{(1)},Y_\sigma^{(2)},\dots,Y_\sigma^{(m/3)}\right\}$.
Intuitively, it is clear that for $j\ne k$, the random variables 
$Y_\sigma^{(j)}$ and $Y_\sigma^{(k)}$ are negatively correlated.
This is because if the $S$-triplet $T_j$ remains to be an $S$-triplet after inserting all the small
items, then the probability of the medium items being inserted in between the gaps of the small items
in the $S$-triplet $T_k$ increase.
Hence, the chances that the $S$-triplet $T_k$ remains to be an $S$-triplet decrease.
We formalize this intuition now.

Consider any two $S$-triplets $T_j,T_k$ where $j,k\in[m/3]$ and $j\ne k$. The probability that both $T_j$
and $T_k$ remain to be $S$-triplets after inserting the first medium item is $(m-3)/(m+1)$.
This is because, among the $m+1$ squares, the first medium item can be inserted in $m-3$
squares in order to ensure that the $S$-triplets $T_j,T_k$ continue to be $S$-triplets. 
Similarly, the probability that both $T_j$
and $T_k$ remain to be $S$-triplets after inserting the second medium item is $(m-2)/(m+2)$.
Continuing in this manner, after all the medium items have been inserted,
\[
    \expec{Y_\sigma^{(j)}Y_\sigma^{(k)}}=\prob{Y_\sigma^{(j)}=1\land Y_\sigma^{(k)}=1}
    =\frac{m-3}{m+1}\frac{m-2}{m+2}\cdots\frac{n-4}{n}=\frac{m(m-1)(m-2)(m-3)}{n(n-1)(n-2)(n-3)}.
\]

The covariance of $Y_\sigma^{(j)}, Y_\sigma^{(k)}$ is given by
\begin{align}
    \cov{Y_\sigma^{(j)}}{Y_\sigma^{(k)}}&=\expec{Y_\sigma^{(j)}Y_\sigma^{(k)}}
                                            -\expec{Y_\sigma^{(j)}}\expec{Y_\sigma^{(k)}}\nonumber\\
            &=\frac{m(m-1)(m-2)(m-3)}{n(n-1)(n-2)(n-3)}-\frac{m^2(m-1)^2}{n^2(n-1)^2}\nonumber\\
            &=\frac{m(m-1)}{n(n-1)}\left(\frac{(m-2)(m-3)}{(n-2)(n-3)}-\frac{m(m-1)}{n(n-1)}\right).\nonumber
\end{align}
For any reals $x,y$ such that $2<x<y$, we have the identity $\frac{x-2}{y-2}<\frac xy$. 
Hence, we obtain that
\begin{align}
    \cov{Y_\sigma^{(j)}}{Y_\sigma^{(k)}}<0.\label{covariance}
\end{align}
Combining all these, the variance of $Y_\sigma$ can be calculated as follows
\begin{align*}
\var{Y_\sigma}&=\sum_{i=1}^{m/3}\var{Y_\sigma^{(i)}}+2\sum_{1\le j<k\le m/3}\cov{Y_\sigma^{(j)}}{Y_\sigma^{(k)}}\\
            &\le\frac{m}{3}\tag{from \cref{variance,covariance}}\\
            &\le n
\end{align*}
Now using Chebyshev's inequality,
\begin{align*}
\prob{Y_\sigma\le\expec{Y_\sigma}-\left(\expec{Y_\sigma}\right)^{2/3}}
                &\le\prob{\abs{Y_\sigma-\expec{Y_\sigma}}\ge\left(\expec{Y_\sigma}\right)^{2/3}}\\
                &\le\frac{\var{Y_\sigma}}{\left(\expec{Y_\sigma}\right)^{4/3}}\\
                &\le\frac{n}{\frac{c^4}{3^{4/3}}n^{4/3}}\\
                &=O\left(\frac{1}{n^{1/3}}\right), \text{as $c$ is a constant.}
\end{align*}
Since $X_\sigma$ is the maximum number of disjoint $S$-triplets
and $Y_\sigma$ denotes the number of disjoint $S$-triplets among $T_1,T_2,\dots,T_{m/3}$
after inserting all the medium items, we have $X_\sigma\ge Y_\sigma\ge c^3n/3-o(n)$ with high probability.
\end{proof}

\begin{proof}[Proof of \cref{triplets}]
To prove the lemma, we will show that each $S$-triplet will result in the formation of
a $3$-bin or an $SS$-bin. A property of \bestfit{} is that, at any point in time, there can be at most
one bin of load at most $1/2$. An $SS$-bin has a load at most $1/2$; hence, every $SS$-bin (with
at most one exception) created by the $S$-triplets will transform into a $3$-bin.

Let $\{S_1,S_2,S_3\}$ be an $S$-triplet in $I_\sigma$ such that $S_3$ follows
$S_2$ which in turn follows $S_1$.
We consider three cases to show that this $S$-triplet will result in the formation of a $3$-bin or an $SS$-bin.
\begin{itemize}
    \item If $S_1$ is going to be packed in a $2$-bin. Then after packing $S_1$, this bin becomes a $3$-bin and hence the lemma holds.
    \item Suppose $S_1$ is going to be packed in a $1$-bin $B$. Then just before the arrival of $S_1$, $B$
    must have been the only $1$-bin and hence, after packing $S_1$, each bin is either a $2$-bin or
    a $3$-bin.
    \begin{itemize}
        \item Now, if one of $S_2, S_3$ is packed into a $2$-bin, then the bin becomes a $3$-bin and the lemma follows.
        \item Otherwise, both $S_2, S_3$ are packed into a single new $SS$-bin and hence the lemma follows.
    \end{itemize}

    \item Suppose $S_1$ is packed in a new bin. Then just prior to the arrival of $S_2$,
     every bin is either a $2$-bin or $3$-bin except the bin containing $S_1$.
     Thus, $S_2$ is either packed in a $2$-bin (thus becoming a $3$-bin) or packed
     in the bin containing $S_1$, resulting in an $SS$-bin.
\end{itemize}
Thus, we have shown that each $S$-triplet will result in the formation of a $3$-bin or an $SS$-bin.
At any point of time, the number of $SS$-bins can be at most one. Therefore, every $SS$-bin (except at most one) 
created by the $S$-triplets will end up as a $3$-bin due to the future items.

However, consider the following scenario where an $S$-triplet $\{S_1,S_2,S_3\}$ creates an $SS$-bin $B_1$ and 
a future $S$-triplet results in the formation of a $3$-bin $B_2$, and $B_1=B_2$. In case such a scenario occurs,
two $S$-triplets correspond to only one $3$-bin in the packing of \bestfit{}. But once this scenario occurs,
the bin $B_1$ ($=B_2$) will be closed, i,e., no more items will be packed in it.

Hence, we obtain that if there are $\kappa$ number of $S$-triplets in the input sequence $I_\sigma$,
there will be at least $\floor{\kappa/2}$ number of $3$-bins in the \bestfit{} packing of $I_\sigma$.
\end{proof}
Now, using \cref{no_of_consecutive_triplets,triplets}, we prove Theorem \ref{thm:bfroa}.
\begin{proof}[Proof of Theorem \ref{thm:bfroa}]
Let $X_3$ be the number of $3$-bins in the final \bestfit{} packing of $I_\sigma$.
The remaining $(n-3X_3)$ items are packed in $2$-bins and at most one $1$-bin. Therefore,
\begin{align*}
    \Bf(I_{\sigma})\le X_3 + \frac{n-3X_3}{2}+1 = \frac{n-X_3}{2}  + 1.
\end{align*}
Since any bin in the optimal solution can accommodate at most three items,
we have that $\opt(I)\ge n/3$. Hence,
\begin{equation}
    \Bf(I_{\sigma})\le \frac{3}{2}\left(1-\frac{X_3}{n}\right)\opt(I) + 1.
    \label{bf-3bins}
\end{equation}
Let $z_1$ $(\le 1),z_2$ and $z_3$ be the number of $1$-bins, $2$-bins, and $3$-bins in the optimal packing of $I_\sigma$, respectively.
Then, $\opt(I)=z_1+z_2+z_3$ and $n=z_1+2z_2+3z_3$. Note that any two items can fit in a bin.
Define the quantity $\mu\coloneqq z_2/\Opt(I)$, the fraction of $2$-bins in the optimal solution. We obtain
\begin{equation}
    \Bf(I_{\sigma}) \leq \frac{n+1}{2} = \frac{z_1+2z_2+3z_3+1}{2}\le \frac{3\opt(I)-z_2-2z_1+1}{2}
                \le \left(\frac{3}{2}-\frac{\mu}{2}\right)\opt(I)+1. \label{first-analysis}
\end{equation}
When $\mu$ is close to one we already obtain a competitive ratio very close
to $1$ due to the above inequality. When $\mu$ is significantly less than $1$, we analyze the \bestfit{} packing of $I_\sigma$ in a different way.
Let $n_s$ be the number of small items. Since a $3$-bin contains at least one small item, we know that,
$n_s \geq z_3=\Opt(I)-z_2-z_1=\Opt(I)-\mu\Opt(I)-z_1 = (1-\mu)\opt(I)-z_1$. 
On the other hand, we have that $n=z_1+2z_2+3z_3=\Opt(I)+z_2+2z_3=\Opt(I)+\mu\Opt(I)+2((1-\mu)\Opt(I)-1)\le (3-\mu)\Opt(I)$.
Thus, we have following two inequalities.
\begin{align*}
n_s\ge (1-\mu)\opt(I)-z_1\quad\text{and}\quad n\le (3-\mu)\Opt(I)
\end{align*}
Hence, the fraction of small items in the input sequence is given by
\begin{align*}
f_s\coloneqq\frac{n_s}{n}\ge \frac{(1-\mu)\opt(I)-z_1}{(3-\mu)\Opt(I)}
\end{align*}
Let $X_\sigma$ be the random variable which denotes the maximum number of $S$-triplets in the input sequence $I_\sigma$.
By \cref{no_of_consecutive_triplets},
\[
    X_\sigma \geq \frac{((1-\mu)\opt(I_\sigma)-z_1)^3}{3(3-\mu)^3\Opt(I)^3}n-o(n)= \frac{(1-\mu)^3}{3(3-\mu)^3}n-o(n) {\text{ w.h.p.}}, \text{ as } z_1\le 1.
\]
Recall that $X_3$ denotes the number of $3$-bins in the \bestfit{} packing of $I_\sigma$.
By \cref{triplets}
\begin{align}
    X_3 \ge \frac{X_\sigma}{2}-1 \geq \frac{(1-\mu)^3}{6(3-\mu)^3}n-o(n)
    \label{eq:3bins00}
\end{align}
with high probability.
Substituting \cref{eq:3bins00} in \cref{bf-3bins}, we get,
\begin{align}
    \Bf(I_{\sigma}) &\leq \frac{3}{2} \left(1-\frac{o(n)}{n}-\frac{(1-\mu)^3}{6(3-\mu)^3}\right) \opt(I_\sigma) + 1\nonumber\\
     &\leq \frac{3}{2} \left(1-\frac{(1-\mu)^3}{6(3-\mu)^3}\right) \opt(I_\sigma) +o(\opt(I_\sigma))
     \label{second-analysis}
\end{align}
with high probability.
\cref{first-analysis,second-analysis} are the result of two different ways of analyzing the same algorithm. Hence, the expected competitive ratio is at most
\begin{align*}
&\min\left\{\frac32-\frac\mu2, \frac32-\frac{(1-\mu)^3}{4(3-\mu)^3}\right\}\\
=&\frac32-\frac12\max\left\{\mu, \frac{(1-\mu)^3}{2(3-\mu)^3}\right\}
\end{align*}
In the domain $(0,1)$, the function $\mu$ is increasing and takes value zero at $\mu=0$, while the function $\frac{1-\mu}{3-\mu}$ is decreasing and takes
value zero at $\mu=1$. Hence, the maximum among both the functions is achieved when both the functions are equal, i.e., for a $\mu\in[0,1]$ satisfying
\begin{align*}
\mu=\frac{(1-\mu)^3}{2(3-\mu)^3}
\end{align*}
This happens when $\mu=0.017861$ (can be verified by substitution); the approximation ratio, in this case, is roughly $1.49107$.

To summarize, we proved that, with high probability, say $p=1-o(1)$, $\Bf(I_\sigma)\le 1.49107\opt(I)+o(\opt(I))$.
The fact that $\Bf(I_\sigma)\le 1.7\opt(I)+2$ always holds due to \cite{johnson1974worst}. Combining these, we get that
$\expec{\Bf(I)}\le p(1.49107\cdot\expec{\opt(I)}+o(\expec{\opt(I)}))+(1-p)(1.7\expec{\opt(I)}+2)$. Thus we
obtain that 
\[
    \expec{\opt(I)}\le 1.49107\cdot\expec{\opt(I)}+o(\expec{\opt(I)}).
\]
This completes the proof of \cref{thm:bfroa}.
\end{proof}

\section{Multidimensional Online Vector Packing Problem under \iid{} model}
\label{sec:ovp}
In this section, we design an algorithm for $d$-dimensional online vector
packing problem (d-OVP) where $d$ is a positive integer constant.
In this entire section, we abbreviate a tuple with $d$ entries as just \emph{tuple}.
A tuple $Y$ is represented as $\left(Y^{(1)},Y^{(2)},\dots,Y^{(d)}\right)$.
We define $Y^{\max}$ to be $\max\left\{Y^{(1)},Y^{(2)},\dots,Y^{(d)}\right\}$
In d-OVP, the input set $I$ consists of $n$ tuples $X_1,X_2,\dots,X_n$
which arrive in online fashion; we assume that $n$ is
known beforehand.
Each $X_i^{(j)}$ is sampled independently from a distribution $\mathcal D^{(j)}$.
The objective is to partition the $n$ tuples into a minimum number of bins such that
for any bin $B$ and any $j\in[d]$, the sum of all the $j\Th$ entries of all the tuples
in $B$ does not exceed one. Formally, we require that for any bin $B$ and any $j\in[d]$,$\sum_{x\in B}x^{(j)}\le 1$.

\subsection{Algorithm}
Given an offline $\alpha$-asymptotic approximation algorithm $\auxalgo$ for
bin packing, we obtain a $(d\alpha+\epsilon)$-competitive algorithm for d-OVP as follows: For every input tuple $X_i$, we round each
$X_i^{(j)}, j\in[d]$ to $X_i^{\max}$. After rounding, since all the tuples have
same values in each of the $d$ entries, we can treat each tuple $X_i$
as an one-dimensional item of size $X_i^{\max}$.

It is easy to see that each $X_i^{\max}$ is independently sampled from the same
distribution: Let $F^{(j)}$ be the cumulative distribution function (CDF) of $\mathcal D^{(j)}$.
Then the CDF, $F$, of $X_i^{\max}$ (for any $i\in[n]$) is given by
\begin{align*}
F(y)=\prob{X_i^{\max}\le y}&=\prod_{j=1}^d\prob{X_i^{(j)}\le y}\tag{By independence of $X_i^{(j)}$s}\\
				&=\prod_{j=1}^dF^{(j)}(y)
\end{align*}
Hence, the problem at hand reduces to solving an online bin packing problem where
items are independently sampled from a distribution whose CDF is given by the
function $\prod_{i=1}^dF^{(j)}$. So, we can use the algorithm from \cref{sec:iid-model}.
\subsection{Analysis}
Let $I$ denote the vector packing input instance and
let $\overline I$ denote the rounded up one-dimensional bin packing instance.
\begin{lemma}
\label{lem:ovp}
Let $\optv(I)$ denote the optimal number of bins used
to pack $I$. Then $\opt(\overline I)\le d\optv(I)$.
\end{lemma}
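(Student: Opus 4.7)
The plan is to show that any valid vector packing of $I$ can be transformed into a valid one-dimensional bin packing of $\overline I$ while blowing up the number of bins by at most a factor of $d$. Fix an optimal vector packing of $I$ using $\optv(I)$ bins. I will argue that the items inside each vector bin $B$ can, after rounding to $x^{\max}$, be re-partitioned into at most $d$ one-dimensional bins of capacity $1$; summing over all vector bins then gives $\opt(\overline I)\le d\cdot\optv(I)$.

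To handle a single vector bin $B$, assign each item $x\in B$ a \emph{witness coordinate} $j^*(x)\in\arg\max_{j\in[d]} x^{(j)}$, so that $x^{\max}=x^{(j^*(x))}$ by definition. For every $j\in[d]$, let $B_j:=\{x\in B:j^*(x)=j\}$; these sets partition $B$. The key observation is that for items inside $B_j$ the rounded size coincides with the $j$\Th{} coordinate, so
\begin{equation*}
    \sum_{x\in B_j} x^{\max}\;=\;\sum_{x\in B_j} x^{(j)}\;\le\;\sum_{x\in B} x^{(j)}\;\le\;1,
\end{equation*}
where the last inequality uses the fact that $B$ is a feasible vector bin of the original packing. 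Hence each $B_j$, viewed as a set of one-dimensional items of sizes $x^{\max}$, fits into a single unit-capacity bin.

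Thus each of the $\optv(I)$ vector bins is replaced by at most $d$ one-dimensional bins in a valid packing of $\overline I$, yielding $\opt(\overline I)\le d\cdot\optv(I)$. No step here is technically hard; the only mild subtlety is making sure that the witness coordinate argument correctly collapses the multi-dimensional feasibility of $B$ into $d$ separate one-dimensional feasibility statements, which is exactly what the inequality above does.
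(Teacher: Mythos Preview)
Your proof is correct and follows essentially the same approach as the paper: both fix an optimal vector packing, partition each vector bin according to the coordinate attaining the maximum, and observe that each part fits in a single one-dimensional bin because the rounded sizes coincide with that coordinate and the original bin was feasible. The only cosmetic difference is terminology (your ``witness coordinate'' $j^*(x)$ versus the paper's sets $B^{(j)}$ with explicit tie-breaking).
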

\begin{proof}
Consider any optimal packing of $I$. We show how to construct a feasible packing of $\overline I$
starting from the optimal packing of $I$. Consider any bin in the optimal packing of $I$
and let $B$ denote the set of tuples packed inside it. Let $\overline{B}$ denote the rounded-up 
instance of $B$. Also, for $j\in[d]$, let $B^{(j)}$ denote the set of tuples whose
$j\Th$ dimension has the largest weight, i.e.,
\begin{align*}
B^{(j)}=\left\{Y\in B: Y^{(j)}=Y^{\max}\right\}
\end{align*}
If any tuple belongs to $B^{(j)}$ as well as $B^{(k)}$ for some $j\ne k$,
we break these ties arbitrarily and assign it to one of $B^{(j)},B^{(k)}$.
For $j\in[d]$, let $\overline{B^{(j)}}$ denote the rounded instance of $B^{(j)}$.
We can pack $\overline{B}$ in at most $d$ bins as follows: For any $j\in[d]$,
we know that
\begin{align*}
\sum_{Y\in B^{(j)}}Y^{(j)}\le 1
\end{align*}
and since for all $Y\in B^{(j)}$, $Y^{\max}=Y^{(j)}$, it follows that
\begin{align*}
\sum_{Z\in \overline{B^{(j)}}}Z\le 1
\end{align*}
Hence, we can pack every $\overline{B^{(j)}}$ in one bin and the lemma follows.
\end{proof}

Now we are ready to prove our main theorem of this section.
\begin{theorem}
For any $\eps>0$ and a given polynomial-time $\alpha$-approximation algorithm for online bin packing, we can obtain a polynomial-time algorithm for d-OVP with an asymptotic approximation ratio of $(\alpha d+\epsilon d)$.
\end{theorem}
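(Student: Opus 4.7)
The plan is to use the reduction already described in the algorithm subsection and then carefully combine two ingredients: (a) the meta-algorithm for one-dimensional bin packing under the i.i.d.~model (\cref{thm:bpiid}), applied to the rounded instance $\overline I$; and (b) \cref{lem:ovp}, which relates the optimum of the rounded instance to the vector packing optimum of the original instance.

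First, I would verify that the rounded instance $\overline I$ is a legitimate i.i.d.~input for the one-dimensional algorithm. This follows immediately from the computation already given: each $X_i^{\max}$ is sampled independently from the distribution with CDF $F=\prod_{j=1}^d F^{(j)}$, so $\overline I$ consists of $n$ i.i.d.~samples from a fixed (albeit unknown) distribution supported in $(0,1]$. Hence the algorithm from \cref{thm:bpiid} can be invoked with the offline $\alpha$-approximation $\auxalgo$, and it produces a packing of $\overline I$ using at most $(\alpha+\eps')\,\expec{\opt(\overline I)} + o(\expec{\opt(\overline I)})$ bins in expectation, for any chosen $\eps'>0$.

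Next, I would observe that any packing of $\overline I$ is automatically a feasible packing of $I$ as a vector packing instance: if a set of items $B$ is feasible after rounding, then $\sum_{X_i\in B} X_i^{\max}\le 1$, so for every coordinate $j\in[d]$ we have $\sum_{X_i\in B} X_i^{(j)} \le \sum_{X_i\in B} X_i^{\max}\le 1$. Thus the number of bins used by the algorithm on $\overline I$ is a valid upper bound on the number of bins it uses for $I$ as a vector packing solution.

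Finally, I would chain the two bounds together via \cref{lem:ovp}. Writing $\ALG_v$ for the overall algorithm,
\begin{align*}
\expec{\ALG_v(I)} &\le (\alpha+\eps')\expec{\opt(\overline I)} + o(\expec{\opt(\overline I)})\\
                  &\le (\alpha+\eps')\,d\,\expec{\optv(I)} + o(\expec{\optv(I)}),
\end{align*}
where in the second line we used $\opt(\overline I)\le d\,\optv(I)$ pointwise, hence also in expectation, and absorbed the constant $d$ into the $o(\cdot)$ term. Choosing $\eps'=\eps$ yields an asymptotic ratio of $(d\alpha+d\eps)$ as required. The polynomial running time is inherited from $\IMPALG$ since the rounding step is linear in $nd$.

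I do not anticipate any serious obstacle: the lemma does all the heavy dimensional work, and the i.i.d.~structure of the rounded instance is preserved because the max of independent coordinates is itself an i.i.d.~sequence. The only mild subtlety is being careful that the $o(\cdot)$ lower-order terms in \cref{thm:bpiid}, which are expressed in terms of $\expec{\opt(\overline I)}$, can be translated into $o(\expec{\optv(I)})$; this is immediate from \cref{lem:ovp} combined with the fact that $d$ is a constant.
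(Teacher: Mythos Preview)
Your proposal is correct and follows essentially the same route as the paper: apply the one-dimensional i.i.d.\ result to the rounded instance $\overline I$, then invoke \cref{lem:ovp} to replace $\opt(\overline I)$ by $d\cdot\optv(I)$. The paper compresses this into three displayed inequalities, while you spell out explicitly why $\overline I$ is i.i.d., why a packing of $\overline I$ is feasible for $I$, and why the lower-order terms transfer; none of these additions change the argument.
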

\begin{proof}
Let us denote our present algorithm by $\ALGV$. Then we get:
\begin{align}
 \ALGV(I) &=\ALG(\overline I) \label{ine1} \\
&\le (\alpha+\epsilon)\opt(\overline I)+o\left(\opt(\overline I)\right) \label{ine2}\\
	&\le (\alpha d+\epsilon d)\optv(I)+o(\optv(I)).\label{ine3}
\end{align}
Here, Equation \eqref{ine1} follows from the property of $\ALGV$, \eqref{ine2} follows from the results of \cref{sec:iid-model}, and \eqref{ine3} follows from Lemma 
\ref{lem:ovp}. This concludes the proof.
\end{proof}

\section{Conclusion}
We studied online bin packing under two stochastic settings, namely the
\iid{} model, and the random-order model. For the first setting, we devised
a meta-algorithm which takes any offline algorithm $\mathcal A_\alpha$
with an AAR of $\alpha$ (where
$\alpha$ can be any constant $\ge1$), and produces an online algorithm with
an ECR of $(\alpha+\eps)$.
This shows that online bin packing under the \iid{} model and offline
bin packing are almost equivalent.
Using any AFPTAS as $\mathcal A_\alpha$ results in an online algorithm with an
ECR of $(1+\eps)$ for any constant $\eps>0$.
An interesting question of theoretical importance is
to find whether achieving an ECR of $1$ is possible or not.
Another related open question is if we can settle online bin packing
under the random-order model as well.

Then, we studied the analysis of the well-known \bestfit{} algorithm under the random-order model.
First, we proved that the ARR of \bestfit{} is equal to one if all the item sizes are greater
than $1/3$. Then, we improved the analysis of the \bestfit{} from $1.5$ to $\approx1.49107$,
for  the special case when the item sizes are
in the range $(1/4,1/2]$, which corresponds to the $3$-partition case.
Extending the techniques for the $3$-partition case, Hebbar et al. \cite{soda-gadgets} have managed to break the barrier of $3/2$ on the upper bound of ARR of \bestfit{}
to achieve an upper bound of $3/2-\eps$ for some $\eps>10^{-10}$.
However, the conjecture that the ARR of \bestfit{} lies close to $1.15$ by Kenyon \cite{DBLP:conf/soda/Kenyon96} is wide open.

\section{Acknowledgements}
We thank Susanne Albers, Leon Ladewig, and Jir\'i Sgall for helpful initial discussions.
We thank Aditya Lonkar and Anish Hebbar for their inputs to simplify and improve some results.
We also thank several anonymous reviewers for their suggestions.
A part of this work was done when Nikhil Ayyadevara and Rajni Dabas were interns at the Indian Institute of Science.

Arindam Khan’s research is supported in part by Google India Research Award, SERB
Core Research Grant (CRG/2022/001176) on ``Optimization under Intractability and Uncertainty'',
and the Walmart Center for Tech Excellence at IISc (CSR Grant WMGT-23-0001).
K.~V.~N.~Sreenivas is grateful to the Google PhD Fellowship Program for supporting his research.
\bibliography{ref}

\newcommand{\etalchar}[1]{$^{#1}$}
\begin{thebibliography}{CJCG{\etalchar{+}}13}

\bibitem[AKL21a]{albers_et_al_MFCS}
Susanne Albers, Arindam Khan, and Leon Ladewig.
\newblock Best fit bin packing with random order revisited.
\newblock {\em Algorithmica}, 83(9):2833--2858, 2021.

\bibitem[AKL21b]{AlbersKL21}
Susanne Albers, Arindam Khan, and Leon Ladewig.
\newblock Improved online algorithms for knapsack and {GAP} in the random order
  model.
\newblock {\em Algorithmica}, 83(6):1750--1785, 2021.

\bibitem[BBD{\etalchar{+}}18]{BaloghBDEL18}
J{\'{a}}nos Balogh, J{\'{o}}zsef B{\'{e}}k{\'{e}}si, Gy{\"{o}}rgy D{\'{o}}sa,
  Leah Epstein, and Asaf Levin.
\newblock A new and improved algorithm for online bin packing.
\newblock In {\em ESA}, volume 112, pages 5:1--5:14, 2018.

\bibitem[BBD{\etalchar{+}}21]{BaloghBDEL19}
J{\'a}nos Balogh, J{\'o}zsef B{\'e}k{\'e}si, Gy{\"o}rgy D{\'o}sa, Leah Epstein,
  and Asaf Levin.
\newblock A new lower bound for classic online bin packing.
\newblock {\em Algorithmica}, 83(7):2047--2062, 2021.

\bibitem[BC81]{nfd}
Brenda~S Baker and Edward~G Coffman, Jr.
\newblock A tight asymptotic bound for next-fit-decreasing bin-packing.
\newblock {\em SIAM Journal on Algebraic Discrete Methods}, 2(2):147--152,
  1981.

\bibitem[BEK16]{BansalE016}
Nikhil Bansal, Marek Eli{\'{a}}s, and Arindam Khan.
\newblock Improved approximation for vector bin packing.
\newblock In {\em SODA}, pages 1561--1579, 2016.

\bibitem[BGK00]{bekesi20005}
J\'{o}zsef B\'{e}k\'{e}si, G\'{a}bor Galambos, and Hans Kellerer.
\newblock A 5/4 linear time bin packing algorithm.
\newblock {\em Journal of Computer and System Sciences}, 60(1):145--160, 2000.

\bibitem[BJL{\etalchar{+}}84]{bentley1984some}
Jon~Louis Bentley, David~S Johnson, Frank~Thomson Leighton, Catherine~C
  McGeoch, and Lyle~A McGeoch.
\newblock Some unexpected expected behavior results for bin packing.
\newblock In {\em STOC}, pages 279--288, 1984.

\bibitem[BLM13]{bernstein-ref}
Stéphane Boucheron, Gábor Lugosi, and Pascal Massart.
\newblock {\em Concentration Inequalities: A Nonasymptotic Theory of
  Independence}.
\newblock Oxford University Press, 2013.

\bibitem[BM15]{hoeffding-without-replacement}
R\'{e}mi Bardenet and Odalric-Ambrym Maillard.
\newblock Concentration inequalities for sampling without replacement.
\newblock {\em Bernoulli}, 21(3):1361--1385, 2015.

\bibitem[{Car}19]{fischer_thesis}
{Carsten Oliver Fischer}.
\newblock {\em New Results on the Probabilistic Analysis of Online Bin Packing
  and its Variants}.
\newblock PhD thesis, Rheinische Friedrich-Wilhelms-Universität Bonn, December
  2019.

\bibitem[CJCG{\etalchar{+}}13]{coffman2013bin}
Edward~G Coffman~Jr, J{\'{a}}nos Csirik, G{\'{a}}bor Galambos, Silvano
  Martello, and Daniele Vigo.
\newblock Bin packing approximation algorithms: survey and classification.
\newblock In {\em Handbook of combinatorial optimization}, pages 455--531.
  Springer New York, 2013.

\bibitem[CJJLS93]{coffman1993probabilistic}
Edward~G Coffman~Jr, David~S Johnson, George~S Lueker, and Peter~W Shor.
\newblock Probabilistic analysis of packing and related partitioning problems.
\newblock {\em Statistical Science}, 8(1):40--47, 1993.

\bibitem[CJJSW97]{coffman1997bin}
Edward~G Coffman~Jr, David~S Johnson, Peter~W Shor, and Richard~R Weber.
\newblock Bin packing with discrete item sizes, part ii: Tight bounds on first
  fit.
\newblock {\em Random Structures \& Algorithms}, 10(1-2):69--101, 1997.

\bibitem[CJK{\etalchar{+}}06]{csirik2006sum}
Janos Csirik, David~S Johnson, Claire Kenyon, James~B Orlin, Peter~W Shor, and
  Richard~R Weber.
\newblock On the sum-of-squares algorithm for bin packing.
\newblock {\em Journal of the ACM (JACM)}, 53(1):1--65, 2006.

\bibitem[CJSHY80]{coffman1980stochastic}
Edward~G Coffman~Jr, Kimming So, Micha Hofri, and AC~Yao.
\newblock A stochastic model of bin-packing.
\newblock {\em Information and Control}, 44(2):105--115, 1980.

\bibitem[DGV08]{dean2008approximating}
Brian~C Dean, Michel~X Goemans, and Jan Vondr{\'a}k.
\newblock Approximating the stochastic knapsack problem: The benefit of
  adaptivity.
\newblock {\em Mathematics of Operations Research}, 33(4):945--964, 2008.

\bibitem[dlVL81]{VegaL81}
W~Fernandez de~la Vega and George~S Lueker.
\newblock Bin packing can be solved within 1+epsilon in linear time.
\newblock {\em Combinatorica}, 1(4):349--355, 1981.

\bibitem[EPR13]{EisenbrandPR11}
Friedrich Eisenbrand, D\"{o}m\"{o}t\"{o}r P\'{a}lv\"{o}lgyi, and Thomas
  Rothvo\ss{}.
\newblock Bin packing via discrepancy of permutations.
\newblock {\em ACM Trans. Algorithms}, 9(3), 2013.

\bibitem[Fer89]{ferguson1989solved}
Thomas~S Ferguson.
\newblock Who solved the secretary problem?
\newblock {\em Statistical science}, 4(3):282--289, 1989.

\bibitem[FMMM09]{feldman2009online}
Jon Feldman, Aranyak Mehta, Vahab Mirrokni, and Shan Muthukrishnan.
\newblock Online stochastic matching: Beating 1-1/e.
\newblock In {\em FOCS}, pages 117--126, 2009.

\bibitem[FR16]{DBLP:conf/latin/FischerR16}
Carsten Fischer and Heiko R{\"{o}}glin.
\newblock Probabilistic analysis of the dual next-fit algorithm for bin
  covering.
\newblock In {\em LATIN}, pages 469--482, 2016.

\bibitem[FR18]{DBLP:conf/latin/FischerR18}
Carsten Fischer and Heiko R{\"{o}}glin.
\newblock Probabilistic analysis of online (class-constrained) bin packing and
  bin covering.
\newblock In {\em LATIN}, volume 10807, pages 461--474. Springer, 2018.

\bibitem[GKNS21]{gupta2021stochastic}
Anupam Gupta, Amit Kumar, Viswanath Nagarajan, and Xiangkun Shen.
\newblock Stochastic load balancing on unrelated machines.
\newblock {\em Mathematics of Operations Research}, 46(1):115--133, 2021.

\bibitem[GKR12]{gupta2012online}
Anupam Gupta, Ravishankar Krishnaswamy, and R~Ravi.
\newblock Online and stochastic survivable network design.
\newblock {\em SIAM Journal on Computing}, 41(6):1649--1672, 2012.

\bibitem[GS20]{Gupta020}
Anupam Gupta and Sahil Singla.
\newblock Random-order models.
\newblock In {\em Beyond the Worst-Case Analysis of Algorithms}, pages
  234--258. Cambridge University Press, 2020.

\bibitem[HKS24]{soda-gadgets}
Anish Hebbar, Arindam Khan, and K.~V.~N. Sreenivas.
\newblock Bin packing under random-order: Breaking the barrier of 3/2.
\newblock In {\em SODA}, pages 4177--4219, 2024.

\bibitem[HR17]{DBLP:conf/soda/HobergR17}
Rebecca Hoberg and Thomas Rothvoss.
\newblock A logarithmic additive integrality gap for bin packing.
\newblock In {\em SODA}, pages 2616--2625, 2017.

\bibitem[JCRZ08]{DBLP:journals/dam/CoffmanCRZ08}
Edward G~Coffman Jr, J{\'{a}}nos Csirik, Lajos R{\'{o}}nyai, and Ambrus
  Zsb{\'{a}}n.
\newblock Random-order bin packing.
\newblock {\em Discrete Applied Mathematics}, 156(14):2810--2816, 2008.

\bibitem[JDU{\etalchar{+}}74]{johnson1974worst}
David~S Johnson, Alan Demers, Jeffrey~D Ullman, Michael~R Garey, and Ronald~L
  Graham.
\newblock Worst-case performance bounds for simple one-dimensional packing
  algorithms.
\newblock {\em SIAM Journal on computing}, 3(4):299--325, 1974.

\bibitem[JG85]{mffd}
David~S Johnson and Michael~R Garey.
\newblock A 71/60 theorem for bin packing.
\newblock {\em Journal of Complexity}, 1(1):65--106, 1985.

\bibitem[Joh73]{johnson-thesis}
David~S Johnson.
\newblock {\em Near-optimal bin packing algorithms}.
\newblock PhD thesis, Massachusetts Institute of Technology, 1973.

\bibitem[Joh74]{DBLP:journals/jcss/Johnson74}
David~S Johnson.
\newblock Fast algorithms for bin packing.
\newblock {\em Journal of Computer and System Sciences}, 8(3):272--314, 1974.

\bibitem[Ken96]{DBLP:conf/soda/Kenyon96}
Claire Kenyon.
\newblock Best-fit bin-packing with random order.
\newblock In {\em SODA}, pages 359--364, 1996.

\bibitem[KK82]{KarmarkarK82}
Narendra Karmarkar and Richard~M Karp.
\newblock An efficient approximation scheme for the one-dimensional bin-packing
  problem.
\newblock In {\em FOCS}, pages 312--320, 1982.

\bibitem[KLMS84]{spaccamela}
Richard~M Karp, Michael Luby, and A~Marchetti-Spaccamela.
\newblock A probabilistic analysis of multidimensional bin packing problems.
\newblock In {\em STOC}, page 289–298, New York, NY, USA, 1984.

\bibitem[LL85]{lee-lee}
Chan~C Lee and Der-Tsai Lee.
\newblock A simple on-line bin-packing algorithm.
\newblock {\em Journal of ACM}, 32(3):562–572, July 1985.

\bibitem[LL87]{LL87}
Chan~C Lee and Der-Tsai Lee.
\newblock Robust on-line bin packing algorithms.
\newblock {\em Technical Report, Northwestern University}, 1987.

\bibitem[LS89]{leighton1989tight}
Tom Leighton and Peter Shor.
\newblock Tight bounds for minimax grid matching with applications to the
  average case analysis of algorithms.
\newblock {\em Combinatorica}, 9(2):161--187, 1989.

\bibitem[Mur88]{DBLP:journals/dam/Murgolo88}
Frank~D Murgolo.
\newblock Anomalous behavior in bin packing algorithms.
\newblock {\em Discrete Applied Mathematics}, 21(3):229--243, 1988.

\bibitem[MY11]{MahdianY11}
Mohammad Mahdian and Qiqi Yan.
\newblock Online bipartite matching with random arrivals: an approach based on
  strongly factor-revealing lps.
\newblock In {\em STOC}, pages 597--606, 2011.

\bibitem[NNN12]{NewmanNN12}
Alantha Newman, Ofer Neiman, and Aleksandar Nikolov.
\newblock Beck's three permutations conjecture: {A} counterexample and some
  consequences.
\newblock In {\em FOCS}, pages 253--262, 2012.

\bibitem[Ram89]{DBLP:journals/ipl/Ramanan89}
Prakash~V Ramanan.
\newblock Average-case analysis of the smart next fit algorithm.
\newblock {\em Information Processing Letters}, 31(5):221--225, 1989.

\bibitem[Rhe94]{rhee-ineq}
Wansoo~T Rhee.
\newblock Inequalities for bin packing-iii.
\newblock {\em Optimization}, 29(4):381--385, 1994.

\bibitem[RT87]{rhee-talagrand-martingales}
Wansoo~T Rhee and Michel Talagrand.
\newblock Martingale inequalities and np-complete problems.
\newblock {\em Mathematics of Operations Research}, 12(1):177--181, 1987.

\bibitem[RT88]{Rhee_Talagrand_Matching}
Wansoo~T Rhee and Michel Talagrand.
\newblock Exact bounds for the stochastic upward matching problem.
\newblock {\em Transactions of the American Mathematical Society},
  307(1):109--125, 1988.

\bibitem[RT93]{rhee1993lineB}
Wansoo~T Rhee and Michel Talagrand.
\newblock On-line bin packing of items of random sizes, ii.
\newblock {\em SIAM Journal on Computing}, 22(6):1251--1256, 1993.

\bibitem[Sho86]{shor1986average}
Peter~W Shor.
\newblock The average-case analysis of some on-line algorithms for bin packing.
\newblock {\em Combinatorica}, 6(2):179--200, 1986.

\bibitem[Spe94]{spencer1994ten}
Joel Spencer.
\newblock {\em Ten lectures on the probabilistic method}.
\newblock SIAM, 1994.

\end{thebibliography}
\appendix
\section{Greedy Algorithm for Maximum Upright Matching}
\label{app:upright-matching}
Here we show that \cref{alg:uprightmatching} correctly computes a maximum upright matching.
\begin{lemma}
    For a given set $P^+$ of $m$ plus points and a set $P^-$ of $m$ minus points, \cref{alg:uprightmatching} outputs a maximum upright matching.
\end{lemma}
\begin{proof}
For the purpose of the proof, we denote a valid upright matching of a given instance by a set of pairs $(p^+,p^-)$ where $p^+$ is a plus point
and $p^-$ is a minus point.
    Let $\opttt$ denote an optimal (maximum) matching
    and let $\algtt$ denote the matching output by \cref{alg:uprightmatching}
    and suppose that $\algtt\ne \opttt$.
    We will show that $\algtt$ is optimal via an exchange argument. In more detail, we show that we can modify $\opttt$ to obtain another optimal matching
    $\opttt'$ which is `closer' to $\algtt$. By repeating this modification process, we can reach $\algtt$ in a finite number of steps, thus showing that $\algtt$ is optimal.

    Without loss of generality, let us assume that all the points in the input instance have distinct $x$-coordinates and distinct $y$-coordinates. Further, for a point $p$, let $x(p)$ denote its $x$-coordinate and let $y(p)$ denote its $y$-coordinate. Define
    \begin{align*}
        \pi^+=\argmin_{p^+:(p^+,p^-)\in\algtt\triangle\opttt} x(p^+)
    \end{align*}
    where $\algtt\triangle\opttt$ denotes the symmetric difference between $\algtt$ and $\opttt$.
    In words, $\pi^+$ is the first plus point processed by \cref{alg:uprightmatching} that made $\opttt$ and $\algtt$ to differ.
    Now, we are ready to modify $\opttt$ and obtain $\opttt'$ without decreasing the number of matched pairs. We do so by considering three cases.

    \noindent\textbf{Case 1:} \emph{$\pi^+$ is matched in $\algtt$ but not in $\opttt$.}\\
    Say $\pi^+$ is matched to $p^-$ in $\algtt$. By the optimality of $\opttt$, the point $p^-$
    must be matched to some $p^+$ in $\opttt$. We can then obtain the new optimal matching $\opttt'=\opttt\cup\{(\pi^+,p^-)\}\setminus\{(p^+,p^-)\}$.

    \noindent\textbf{Case 2:} \emph{$\pi^+$ is matched in $\opttt$ but not in $\algtt$.}\\
    We show that this case cannot arise at all. Say $\pi^+$ is matched to $p^-$ in $\opttt$.
    In $\algtt$, $\pi^+$ is not matched, which means that $p^-$ is matched to some $p^+\ne \pi^+$ in $\algtt$.
    By the definition of $\pi^+$, it must be the case that $p^+$ is processed after $\pi^+$ by $\algtt$.
    But then $p^-$ must have been unmatched by the time $\pi^+$ is processed and \cref{alg:uprightmatching} would not have left
    $\pi^+$ unmatched.

    \noindent\textbf{Case 3:} \emph{$\pi^+$ is matched in both $\opttt$ and $\algtt$.}\\
    Say $\pi^+$ is matched with $p^-_1$ in $\algtt$ and with $p^-_2$ in $\opttt$.
    If $p^-_1$ is unmatched in $\opttt$, then we can define a new optimal matching $\opttt'=\opttt\cup\{(\pi^+,p_1^-)\}\setminus\{(\pi^+,p_2^-)\}$.
    Now, suppose $p^-_1$ is matched to $p^+$ in $\opttt$. Then we claim that $y(p^-_1)\ge y(p^-_2)$. This is because at the time $\pi^+$ was considered by
    \cref{alg:uprightmatching}, both $p_1^-$ and $p_2^-$ must have been unmatched. (Otherwise, it violates the definition of $\pi^+$.) Since
    \cref{alg:uprightmatching} preferred $p_1^-$ over $p_2^-$, it must be the case that $y(p^-_1)\ge y(p^-_2)$. Hence $(p^+,p_2^-)$ is a valid matching pair.
    Thus we can obtain the new optimal matching $\opttt'=\opttt\cup\{(\pi^+,p_1^-),(p^+,p_2^-)\}\setminus \{(\pi^+,p_2^-),(p^+,p_1^-)\}$.

    That ends the construction of $\opttt'$. To conclude the proof, we define 
    \begin{align*}
    x^+(\mathcal M_1,\mathcal M_2)=\min_{(p^+,p^-)\in\mathcal M_1\triangle\mathcal M_2}x(p^+)
    \end{align*}
    for any two matchings $\mathcal M_1,\mathcal M_2$, and observe that
    $x^+(\opttt',\algtt)$ is strictly greater than $x^+(\opttt,\algtt)$.
\end{proof}
\section{Optimal Packing Size is Almost Proportional to Input Length}
\label{app:proportionality}
In this section, we show \cref{claim:rhee-talagrand}.
First, we use the following lemma, which is almost similar to Theorem 2.1 in \cite{rhee1993lineB}.
\begin{lemma}
    \label{lem:rhee-talagrand-without-replacement}
    Let $J$ be an arbitrary set of $q$ items and let $S$ be a set of $p$ items (with $p\le q$) sampled uniformly randomly \emph{without} replacement from $J$. Then there exist constants $K',a'$
    such that
    \begin{align*}
        \prob{\opt(S)\ge\frac{p}{q}\opt(J)+K'\sqrt q(\log q)^{3/4}}\le \exp\left(-a'(\log q)^{3/2}\right).
    \end{align*}
\end{lemma}
Let $\Isamp(t)$ denote a set of $t$ items sampled uniformly randomly \emph{without} replacement from $I$. Then, using the above lemma, we obtain that
\begin{align}
    \prob{\opt(\Isamp(t))\ge\frac{t}{n}\opt(I)+K'\sqrt n(\log n)^{3/4}}\le \exp\left(-a'(\log n)^{3/2}\right).\label{eq:rhee-talagrand-without-replacement}
\end{align}
Then, we observe that the lists $I(1,t)$ and $\Isamp(t)$ have the same joint distribution. This implies that both the quantities
$\opt(I(1,t))$ and $\opt(\Isamp(t))$ are roughly the same.
\begin{proposition}
    \label{prop:first-t-and-without-replacement}
    There exist constants $K'',a''$ such that
    \begin{align*}
        \prob{\opt(I(1,t))-\opt(\Isamp(t))\ge K''\sqrt n(\log n)^{3/4}}\le \exp\left(-a''(\log n)^{3/2}\right).
    \end{align*}
\end{proposition}
To conclude, we observe that combining \cref{eq:rhee-talagrand-without-replacement} and the above proposition gives us the final claim.
We are left with proving \cref{lem:rhee-talagrand-without-replacement} and \cref{prop:first-t-and-without-replacement}.

\begin{proof}[Proof of \cref{lem:rhee-talagrand-without-replacement}]
    The proof is long, but is essentially the same as that of Theorem 2.1 in \cite{rhee1993lineB}. The only difference is 
    the use of concentration inequalities for sampling without replacement, instead of sampling with replacement.
    
    Let $(Y_1,Y_2,\dots,Y_q)$ denote the set $J$ and let $s(1),s(2),\dots,s(p)$ denote $p$ indices that are sampled uniformly
    randomly from $[q]$ without replacement. Hence $(Y_{s(1)},Y_{s(2)},\dots,Y_{s(p)})$ denotes the set $S$.
    We first compute an optimal packing of $J$. We refer to this packing as \emph{model packing} and denote it by $\mathcal P$.
    Without loss of generality, we assume that in each bin of $\mathcal P$, the items are ordered in non-increasing order of weights.
    For an item $j\in J$, we define $\rank(j)$ as the position of $j$ in the bin in which $j$ is placed in $\mathcal P$.

    Now, we show how to pack $S$. For every rank $r\in\{2,3,\dots,p\}$, we maintain a set $V(r)$ of pairs of the form $(B,x)$.
    Initially, all these sets are empty.
    We process $Y_{s(1)},Y_{s(2)},\dots,Y_{s(p)}$ in that order.
    Some of these items will be tagged as \emph{overflow items}. All the overflow items will be packed at once using Next-Fit.
    Assume we are at an intermediate stage where we have packed $Y_{s(1)},\dots,Y_{s(i-1)}$ and are about to pack $Y_{s(i)}$.
    \begin{enumerate}
        \item If $\rank(Y_{s(i)})=1$, then we open a new bin $B$ and pack $Y_{s(i)}$ there.
        In addition, for each rank $r\in\{2,3,\dots,m_i\}$ (where $m_i$ is the number of items in the bin in $\mathcal P$ containing $Y_{s(i)}$), we add the pair $(B,x)$ to the list $V(r)$, where $x$
        is the weight of the $r\Th$ largest item in the bin in $\mathcal P$ containing $Y_{s(i)}$.
        \item If $r\coloneqq \rank(Y_{s(i)})\in\{2,\dots,p\}$, we look at all the pairs in $V(r)$ and identify the pair $(B,x)$, if one exists,
        such that $x$ is least possible but at least the weight of $Y_{s(i)}$. We then pack $Y_{s(i)}$ in $B$ and remove the pair $(B,x)$ from
        the list $V(r)$. If no such pair in $V(r)$ exists, then we tag $Y_{s(i)}$ as a overflow item.
        \item If $r\coloneqq \rank(Y_{s(i)})>p$, we tag $Y_{s(i)}$ as a overflow item.
    \end{enumerate}
    In this way, we pack all the items in $S$.
    We now analyze the number of bins used in this process.
    First, observe that, ignoring the overflow items, we open a new bin only when an item of rank $1$ appears.
    Let $N_{\main}$ denote this number of bins.
    Since $S$ is obtained by sampling $p$ items from $J$ without replacement,
    by using Hoeffding's inequality for sampling without replacement (see \cite{hoeffding-without-replacement}),
    we can bound $N_{\main}$.

    We state the Hoeffding's inequality now.
    \begin{proposition}[Hoeffding's Inequality]
        Let $(Z_1,Z_2,\dots,Z_M)$ be a list of $M$ arbitrary real numbers in $[0,1]$ and let $\mu=\frac 1M\sum_j Z_j$ be the mean of this list. Let $(z_1,z_2,\dots,z_m)$ be a list of $m$ numbers sampled from the original list uniformly at random without replacement. Then for any $\delta>0$,
        \begin{align*}
            \prob{\sum_{i\in[m]}z_i-\mu m>\delta}\le \exp\left(-\frac{2\delta^2}{m}\right).
        \end{align*}
    \end{proposition}
To bound $N_{\main}$, we use the Hoeffding's inequality by setting
\begin{enumerate}
    \item $M=q$ and $m=p$.
    \item $Z_j=1$ if $\rank(Y_j)=1$ and $Z_j=0$ otherwise.
\end{enumerate}
Then, we obtain that $\mu=1/q\sum_{j}Z_j=\opt(J)/q$ and 
$\sum_{i}z_i=N_{\main}$.
Hence we obtain
    \begin{align*}
        \prob{N_{\main}\ge \frac pq \opt(J)+ \sqrt p(\log q)^{3/4}}\le \exp(-2(\log q)^{3/2}).
    \end{align*}
    We are left with bounding the number of bins to pack the overflow items. We denote the total weight by $W_{\of}$ and the number of bins used to pack them by $N_{\of}$.
    Since we use Next-Fit to pack the overflow items, we have $N_{\of}\le 2W_{\of}+1$. Hence, we only need to bound $W_{\of}$. Note that
    any item of rank $r$ can have weight at most $1/r$. Hence the total weight of items of rank more than $p$ (those considered in step 3 of our procedure) is at most $1$.
    This leaves us with the task of bounding the total weight of overflow items of rank in $\{2,3,\dots,p\}$ (step 2 in our procedure).

    For a rank $r\in\{2,\dots,p\}$, let $n_r$ denote the number of rank $r$ overflow items. We can bound $n_r$ by reducing it to a variant of upright matching.
    For each $r\in\{2,\dots,p\}$, we gradually construct an instance of upright matching $\mathcal M_r$ as we process $S$ as follows. Suppose we are processing $Y_{s(i)}$.
    \begin{enumerate}
        \item Say $\rank(Y_{s(i)})=1$ and there are at least $r$ items in the bin in $\mathcal P$ containing $Y_{s(i)}$.
        Then we create a minus point $(i,w)$ where $w$ is the weight of the rank $r$ item in $\mathcal P$ containing $Y_{s(i)}$.
        \item Say $\rank(Y_{s(i)})=r$. Then we create a plus point $(i,w)$ where $w$ is the weight of $Y_{s(i)}$.
        \item We do nothing in rest of the cases.
    \end{enumerate}
    Then, it can be seen that the maximum matching procedure of \cref{alg:uprightmatching} on $\mathcal M_r$ exactly corresponds to step 2 of our packing procedure.\footnote{except for reflection about $x$-axis.} Hence the number of unmatched plus points in $\mathcal M_r$ exactly corresponds to the number of overflow items of rank $r$.

    For the sake of simplicity, and to overcome certain technical details of upright 
    matching, we assume that instead of stopping after sampling $p$ items from $J$, we continue till all the $q$ items of $J$ are sampled.
    This will only increase the number of overflow items.

    Let us fix a rank $r\in\{2,\dots,p\}$. Let $q_r$ denote the total number of items of rank $r$ in $J$.
    From the results of \cite{fischer_thesis} (see \textsc{Matching Variant} M2 in the reference),
    the number of unmatched plus points in $\mathcal M_r$ can be bounded as $n_r\le K\sqrt{q_r}(\log q_r)^{3/4}$ with probability at least
    $1-\exp(-a(\log q_r)^{3/2})$ for some constant $a>3$. Distinguishing between the cases $q_r\le q^{1/3}$ and $q_r\ge q^{1/3}$, we obtain that
    \begin{align*}
        n_r\le q^{1/3}+K\sqrt{q_r}(\log q_r)^{3/4},
    \end{align*}
    with probability at least $1-\exp(-\frac a3(\log q)^{3/2})$.
    Now applying a union bound, we obtain that for all $r\in\{2,\dots,p\}$, we have
    \begin{align*}
        n_r\le q^{1/3}+K\sqrt{q_r}(\log q)^{3/4}
    \end{align*}
    with probability at least $1-q\exp(-\frac a3(\log q)^{3/2})$ which is at least $1-\exp(-(\frac a3-1)(\log q)^{3/2})$.

    Hence we obtain that the total number of overflow items is at most
    \begin{align*}
        \sum_{r=2}^p \frac{n_r}{r}\le q^{1/3}\log q+K(\log q)^{3/4}\sum_{r=2}^{p}\frac{\sqrt{q_r}}{r}.
    \end{align*}
    Applying Cauchy--Schwartz inequality and using the fact that $\sum_{r=2}^p{q_r}\le q$, we obtain that
    \begin{align*}
        \sum_{r=2}^{p}\frac{\sqrt{q_r}}{r}\le \sqrt{\sum_{r=2}^p q_r}\sqrt{\sum_{r=2}^p \frac{1}{r^2}}\le K\sqrt q,
    \end{align*}
    thus ending the proof.
\end{proof}

\begin{proof}[Proof of \cref{prop:first-t-and-without-replacement}]
    Let $X_1,X_2,\dots,X_n$ be the items in the input set $I$. Let $s(1),s(2),\dots,s(t)$
    denote $t$ indices from $[n]$ sampled uniformly randomly without replacement.
    Therefore, $I(1,t)=(X_1,X_2,\dots,X_t)$ and $\Isamp(t)=(X_{s(1)}, X_{s(2)},\dots,X_{s(t)})$.
    We now show that $I(1,t)$ and $\Isamp(t)$ have the same joint distribution by showing that the joint cumulative distribution functions are equal.
    Let $x_1,x_2,\dots,x_t$ be some arbitrary reals in $[0,1]$. Then
    \begin{align*}
        &\quad\prob{\left(X_{s(1)}\le x_1\right)\land \left(X_{s(2)}\le x_2\right)\land \dots\land \left(X_{s(t)}\le x_t\right)}\\
            &=\sum_{\substack{s_1,s_2,\dots,s_t\in[n]\\\text{all different}}}\prob{\bigwedge_{i\in [t]} X_{s(i)}\le x_i{\Bigg|} \bigwedge_{i\in [t]} s(i)=s_i}\prob{\bigwedge_{i\in [t]} s(i)=s_i}\\
            &=\sum_{\substack{s_1,s_2,\dots,s_t\in[n]\\\text{all different}}}\prob{\bigwedge_{i\in [t]} X_{s_i}\le x_i}\prob{\bigwedge_{i\in [t]} s(i)=s_i}\\
            &=\sum_{\substack{s_1,s_2,\dots,s_t\in[n]\\\text{all different}}}\prob{\bigwedge_{i\in [t]} X_{i}\le x_i}\prob{\bigwedge_{i\in [t]} s(i)=s_i}\\
            &=\prob{\bigwedge_{i\in [t]} X_{i}\le x_i}\sum_{\substack{s_1,s_2,\dots,s_t\in[n]\\\text{all different}}}\prob{\bigwedge_{i\in [t]} s(i)=s_i}\\
            &=\prob{(X_{1}\le x_1)\land (X_{1}\le x_2)\land \dots\land (X_{t}\le x_t)}.
    \end{align*}
    Since $I(1,t)$ and $\Isamp(t)$ have the same joint distribution, we obtain that
    \begin{align*}
        \expec{\opt(I(1,t))}=\expec{\opt(\Isamp(t))}\eqqcolon\Gamma.
    \end{align*}
    Further, Theorem 2 from \cite{rhee-talagrand-martingales} tells us that if a set $T$ of $m$ items is sampled independently and identically from a fixed distribution, then for any $\ell\ge 0$,
    \begin{align*}
        \prob{\abs{\opt(T)-\expec{\opt(T)}}>\ell}\le 2\exp(-\ell^2/(2m)).
    \end{align*}
    Thus, we obtain the following two inequalities, which hold for some constants $K,a$.
    \begin{align*}
        \prob{\abs{\opt(I(1,t))-\Gamma}>K\sqrt t(\log n)^{3/4}}\le 2\exp(-a(\log n)^{3/2}),\\
        \prob{\abs{\opt(\Isamp(t))-\Gamma}>K\sqrt t(\log n)^{3/4}}\le 2\exp(-a(\log n)^{3/2}).
    \end{align*}
    Combining both the inequalities and observing that $t\le n$ gives us the proposition.
\end{proof}

\end{document}